\newcommand{\longversion}[1]{}
\newcommand{\shortversion}[1]{#1}
\title{On Parameterized Complexity of Binary Networked Public Goods Game\footnote{To appear as full paper in \textbf{AAMAS 2022}}} 
\author{Arnab Maiti}{Indian Institute of Technology Kharagpur }{arnabmaiti@iitkgp.ac.in}{https://orcid.org/0000-0002-9142-6255}{}
\author{Palash Dey}{Indian Institute of Technology Kharagpur }{palash.dey@cse.iitkgp.ac.in}{https://orcid.org/0000-0003-0071-9464}{}
\authorrunning{A.\,Maiti and P.\,Dey} 
\keywords{Fixed-Parameter Tractability, W-Hierarchy, Binary Networked Public Goods Game, Nash Equilibrium, Heterogeneous, Fully Homogeneous} 
\newcommand{\NP}{\ensuremath{\mathsf{NP}}\xspace}
\newcommand{\NPC}{\ensuremath{\mathsf{NP}}\text{-complete}\xspace}
\newcommand{\NPH}{\ensuremath{\mathsf{NP}}\text{-hard}\xspace}
\newcommand{\PNPH}{para-\ensuremath{\mathsf{NP}\text{-hard}}\xspace}
\newcommand{\el}{\ensuremath{\ell}\xspace}
\newcommand{\WOH}{\ensuremath{\mathsf{W[1]}}-hard\xspace}
\newcommand{\WO}{\ensuremath{\mathsf{W[1]}}\xspace}
\newcommand{\WT}{\ensuremath{\mathsf{W[2]}}\xspace}
\newcommand{\WTH}{\ensuremath{\mathsf{W[2]}}-hard\xspace}
\newcommand{\FPT}{\ensuremath{\mathsf{FPT}}\xspace}
\newcommand{\fpt}{\ensuremath{\mathsf{FPT}}\xspace}
\newcommand{\xp}{\ensuremath{\mathsf{XP}}\xspace}
\newcommand{\tsat}{\ensuremath{(3,\text{B}2)}-{\sc SAT}\xspace}
\newcommand{\GFAC}{{\sc General Factor}\xspace}
\newcommand{\psne}{{\sc Exists-PSNE}\xspace}
\newcommand{\bnpg}{\psne}
\newcommand{\vc}{\text{vc(\ensuremath{\GG})}}
\newcommand{\ds}{{\sc Dominating Set}\xspace}
\let\oldlambda\lambda
\renewcommand{\lambda}{\ensuremath{\oldlambda}\xspace}
\let\oldalpha\alpha
\renewcommand{\alpha}{\ensuremath{\oldalpha}\xspace}
\let\oldDelta\Delta
\renewcommand{\Delta}{\ensuremath{\oldDelta}\xspace}
\newcommand{\YES}{{\sc yes}\xspace}
\newcommand{\yes}{{\sc yes}\xspace}
\newcommand{\no}{{\sc no}\xspace}
\newcommand{\true}{\text{{\sc true}}\xspace}
\newcommand{\false}{\text{{\sc false}}\xspace}
\newcommand{\CC}{\ensuremath{\mathcal C}\xspace}
\newcommand{\EE}{\ensuremath{\mathcal E}\xspace}
\newcommand{\FF}{\ensuremath{\mathcal F}\xspace}
\newcommand{\GG}{\ensuremath{\mathcal G}\xspace}
\newcommand{\HH}{\ensuremath{\mathcal H}\xspace}
\newcommand{\OO}{\ensuremath{\mathcal O}\xspace}
\newcommand{\PP}{\ensuremath{\mathcal P}\xspace}
\renewcommand{\SS}{\ensuremath{\mathcal S}\xspace}
\newcommand{\TT}{\ensuremath{\mathcal T}\xspace}
\newcommand{\UU}{\ensuremath{\mathcal U}\xspace}
\newcommand{\VV}{\ensuremath{\mathcal V}\xspace}
\newcommand{\WW}{\ensuremath{\mathcal W}\xspace}
\newcommand{\XX}{\ensuremath{\mathcal X}\xspace}
\newcommand{\YY}{\ensuremath{\mathcal Y}\xspace}
\newcommand{\NB}{\ensuremath{\mathbb N}\xspace}
\newcommand{\RB}{\ensuremath{\mathbb R^+}\xspace}
\newcommand{\pr}{\ensuremath{\prime}}
\begin{document}
\nolinenumbers
\def\UrlBreaks{\do\/\do-}
\maketitle

\begin{abstract}
	In the Binary Networked Public Goods (BNPG for short) game, every player needs to decide if she participates in a public project whose utility is shared equally by the community. We study the problem of deciding if there exists a pure strategy Nash equilibrium (PSNE) in such games. The problem is already known to be \NPC. This casts doubt on predictive power of PSNE in BNPG games. We provide fine-grained analysis of this problem under the lens of parameterized complexity theory. We consider various natural graph parameters and show \WO-hardness, XP, and \fpt results. Hence, our work significantly improves our understanding of BNPG games where PSNE serves as a reliable solution concept. We finally prove that some graph classes, for example path, cycle, bi-clique, and complete graph, always have a PSNE if the utility function of the players are same.
\end{abstract}
\section{Introduction}

In a public goods game, players need to decide if they contribute in a public project and, if yes, then how much. The outcome of such public projects is typically shared equally by all the players. Public goods games are effective in modeling tension between individual cost vs community well beings~\cite{kollock1998social,santos2008social}. One of the well-explored variants of the above game is the {\em networked} public goods game where we assume a network structure on the players and the utilities of individual players depend on the action of them and their neighbors only~\cite{bramoulle2007public}.

An important class of networked public goods game is the {\em binary} networked public goods (BNPG for short) game where players only need to decide if they participate (play $1$) in the public project or not (play $0$)~\cite{galeotti2010network}. Although this seems restricted, such games are still powerful enough to model various important real world application scenarios. For some motivating examples, let us think of an air-borne virus pandemic like Covid-19 where individuals need to decide whether to wear a mask or not. While individuals may feel uncomfortable while wearing a mask, the benefits of herd immunity, if achieved by a large fraction of population wearing a mask, will be shared by the entire community. Indeed, there are reports that a considerable fraction of population refuse to wear a mask during Covid-19 pandemic~\cite{nomask,nomask1}. Another important application is whether to report a crime or not. While individuals who report crimes may be at risk, the benefit of having lower crime rates will be enjoyed by the entire community. The general observation at many places is that crimes are often under-reported~\cite{crimereport}.

Computing a pure strategy Nash equilibrium (PSNE) in any game is a fundamental question. The concept of Nash equilibrium guides social planner to predict how players will act in a strategic setting and act accordingly. We know that the \psne problem, where we are asked to decide if a BNPG game has a PSNE, is \NPC~\cite{yu2020computing}. In this work, we provide a comprehensive study of the parameterized complexity of the \psne problem.
   


\subsection{Related Work} The immediate predecessor of our work is \cite{yu2020computing} where the authors initiate the algorithmic question of \psne. Our work broadly belongs to the field of graphical games where there is a graph structure on the players and a player's utility depends only on the actions of her neighbors~\cite{kearns2007graphical}. A central question in graphical games is to find complexity of the problem of computing an equilibrium~\cite{elkind2006nash,daskalakis2009complexity,gottlob2005pure}. Network public goods games are a special case of graphical games where the utility of players depends only on the sum of the ``efforts'' put in by neighbors and the cost of her action. Many models of the network public goods game have been explored which are fine-tuned to different applications. Important examples of such applications include economics, research collaboration, social influence, etc.~\cite{burt1987social,valente1995network,conley2010learning,valente2005network}. The BNPG model is closely related to that proposed in Bramoull{\'e} et al. \cite{bramoulle2007public}. There are however two qualitative distinctions (a) Bramoull{\'e} et al. focus on the continuous investment model whereas BNPG model focuses on binary investment decisions and (b) Bramoull{\'e} et al. assume homogeneous concave utilities whereas BNPG model considers a more general setting. Supermodular network games \cite{manshadi2009supermodular} and best-shot games (which is actually a special case of BNPG game) \cite{dall2011optimal}, etc. \cite{galeotti2010network,komarovsky2015efficient,levit2018incentive} are other important variations of graphical games. In the model of Supermodular network games, each agent’s payoff is a function of the aggregate action of its neighbors and it exhibits strategic complementarity. An important example of supermodular games on graphs are technology adoption games which have been studied in the social network literature \cite{kleinberg2007cascading,immorlica2007role,morris2000contagion}.

\subsection{Parameters}
As \psne is \NPC~\cite{yu2020computing},  we provide a comprehensive study of the parameterized complexity of the \psne problem w.r.t. the following parameters:
\begin{itemize}
	\item \textit{Maximum Degree: }Many applications of BNPG games involve human beings as nodes in the network. Due to human cognitive limitation, such graphs often exhibit small maximum degree. With this motivation, we consider the maximum degree of the graph as our parameter. 
	\item \textit{Diameter:} Graphs which involve human beings as nodes tend to have a small diameter. Therefore, we consider the diameter of the graph as our parameter.
	\item \textit{Distance from tree and complete graph:} Trees and complete graphs are important classes of graphs in the context of BNPG games. It is already known from previous work that \psne is polynomial-time solvable for trees and complete graphs. Therefore the next natural question would be to check the tractibility of those instances where the graphs are quite close to being a tree or a complete graph. For this purpose, we consider the parameters distance from tree, which is also known as circuit rank, and distance from complete graphs [\Cref{def:d1}]. 
	\item \textit{Treedepth and Treewidth:} We also consider treedepth and treewidth as parameters as they have often turned out to be useful parameters to obtain a fixed-parameter-tractable (\FPT) algorithm for many classical problems for which it is known that the problem is polynomial-time solvable for trees. 
	\item \textit{Number of participating and non-participating players:} One may wish to know what are the equilibria during a pandemic like Covid-19 example where most and least people wear masks. For such scenarios, the number of participating (who play $1$) and non-participating players (who play $0$) are the natural parameters.
\end{itemize}

\section{Preliminaries}\label{sec:prelim}

For a set \XX, we denote its power set by $2^\XX$. We denote the set $\{1,\ldots,n\}$ by $[n]$. For $2$ sets \XX and \YY, we denote the set of functions from \XX to \YY by $\YY^\XX$.

Let $\GG=(\VV,\EE)$ be an undirected graph with $n$ vertices. An edge between $u,v\in \VV$ is represented by $\{u,v\}$. In a graph \GG, we denote the degree of any vertex $v$ by $d(v)$. For a subset $\UU\subseteq\VV$ of vertices (respectively a subset $\FF\subseteq\EE$ of edges), we denote the subgraph induced by \UU (respectively \FF) by $\GG[\UU]$ (respectively $\GG[\FF]$). A Binary Networked Public Goods (BNPG for short) game can be defined on \GG as follows. The set of players is \VV. The strategy set of every player is $\{0,1\}$. We denote the number of neighbors of $w$ in \GG who play $1$ in the strategy profile $(x_v)_{v\in\VV}$ by $n_w$; that is, $n_w=|\{u\in\VV:\{u,w\}\in\EE, x_u=1\}|$. For a strategy profile $(x_v)_{v\in\VV}\in\{0,1\}^{|\VV|}$, the utility $U_w((x_v)_{v\in\VV})$ of player (without abusing the notation much) $w\in\VV$ is defined as follows. 
\begin{equation*}
U_w((x_v)_{v\in\VV})=U_w(x_w,n_w)=g_w(x_w+n_w)-c_w\cdot x_w
\end{equation*}
where $g_w:\NB\cup\{0\}\longrightarrow\RB$ is a non-decreasing function in $x$ and $c_w\in\RB$ is a constant. We denote a BNPG game by $(\GG=(\VV,\EE),(g_v)_{v\in\VV},(c_v)_{v\in\VV})$. For any number $n\in\NB\cup\{0\}$ and function $g:\NB\cup\{0\}\longrightarrow\RB$, we define $\Delta g(n)=g(n+1)-g(n)$.
In general, every player $w\in\VV$ has a different mapping function $g_w(.)$ and hence we call this version of the game a \textit{ heterogeneous} BNPG game. If not mentioned otherwise, by BNPG game, we refer to a heterogeneous BNPG game. In this paper, we also study the following three special cases --- (i) \textit{ homogeneous:} $g_w=g$ for all $w\in\VV$, (ii) \textit{ fully homogeneous:} homogeneous and $c_w=c$ for all $w\in \VV$ and (iii) {strict:} for every player $w\in\VV$, we have $U_w(x_w=0,x_{-w})\ne U_w(x_w=1,x_{-w})$ for every strategy profile $x_{-w}$ of other players. So a BNPG game is strict if and only if $ \Delta g_w(k) \ne c_w, \;\forall w\in\VV, \forall k\in\{0,1,\ldots,d(w)\} $

A strategy profile $(x_v)_{v\in\VV}$ is called a \textit{ pure-strategy Nash Equilibrium (PSNE)} of a BNPG game if we have  $U_v(x_v,x_{-v}) \ge U_v(x^\pr_v, x_{-v}) \;\forall x^\pr_v\in\{0,1\}, \forall v\in\VV $. We call the problem of deciding if there exists a PSNE in BNPG games as \psne.

For a player $w$ in a BNPG game $(\GG=(\VV,\EE),(g_v)_{v\in\VV},\allowbreak(c_v)_{v\in\VV})$, we define her best response function $\beta_w:\{0,1,\allowbreak\ldots,n-1\}\longrightarrow2^{\{0,1\}}\setminus\{\emptyset\}$ as follows. For every $k\in\{0,1,\ldots,n-1\}$ and $a\in\{0,1\}$, we have $a\in\beta_w(k)$ if and only if, for every strategy profile $x_{-w}$ of players other than $w$ where exactly $k$ players in the neighborhood of $w$ play $1$, we have $U_w(x_w=a,x_{-w}) \ge U_w(x_w=a^\pr,x_{-w})$ for all $a^\pr\in\{0,1\}$. The following lemma proves that, for every function $\beta_w$, there is a function $g_w:\NB\cup\{0\}\longrightarrow\RB$ and constant $c_w$ such that $\beta_w$ is the best response function\longversion{; moreover such $g_w$ and $c_w$ can be computed in polynomial time}.

\begin{lemma}[$\star$]\label{lem:best-util}
	Let $\beta:\{0,1,\ldots,n-1\}\longrightarrow2^{\{0,1\}}\setminus\{\emptyset\}$ be an arbitrary function. Then we can compute in polynomial (in $n$) time a function $g:\NB\cup\{0\}\longrightarrow\RB$ and constant $c$ such that $\beta$ is the corresponding best response function.
\end{lemma}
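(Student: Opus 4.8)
The plan is to first unpack exactly what information the best response function $\beta_w$ encodes about the pair $(g_w,c_w)$, and then to invert that relationship. Fix a neighbor count $k\in\{0,1,\ldots,n-1\}$. Because the utility $U_w(x_w,n_w)=g_w(x_w+n_w)-c_w x_w$ depends on the other players' strategies only through $n_w=k$, comparing the two actions available to $w$ reduces to comparing $U_w(0,k)=g_w(k)$ with $U_w(1,k)=g_w(k+1)-c_w$. Hence $1\in\beta_w(k)$ iff $g_w(k+1)-c_w\ge g_w(k)$, i.e. $\Delta g_w(k)\ge c_w$, and symmetrically $0\in\beta_w(k)$ iff $\Delta g_w(k)\le c_w$. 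This gives the clean trichotomy
\[
\beta_w(k)=\{1\}\iff \Delta g_w(k)>c_w,\quad \beta_w(k)=\{0\}\iff \Delta g_w(k)<c_w,\quad \beta_w(k)=\{0,1\}\iff \Delta g_w(k)=c_w.
\]

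With this characterization the construction writes itself: I choose $c$ and the increments $\Delta g(k)$ so as to realize the prescribed pattern. Concretely I would set $c=1$ and, for each $k\in\{0,1,\ldots,n-1\}$, define $\Delta g(k)=2$ when $\beta(k)=\{1\}$, $\Delta g(k)=0$ when $\beta(k)=\{0\}$, and $\Delta g(k)=1$ when $\beta(k)=\{0,1\}$; for $k\ge n$ I set $\Delta g(k)=0$ so that $g$ is defined on all of $\NB\cup\{0\}$ (any non-negative value works here). Then $g$ is recovered by telescoping: put $g(0)=1$ and $g(k+1)=g(k)+\Delta g(k)$. This is clearly computable in $O(n)$ time, matching the polynomial-time claim.

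It then remains to verify that $(g,c)$ is a legitimate instance and induces exactly $\beta$. Each increment is non-negative, so $g$ is non-decreasing as required; since $g(0)=1>0$ and all increments are non-negative, every value satisfies $g(k)\ge 1$ and hence lies in \RB, and likewise $c=1\in\RB$. Plugging $(g,c)$ back into the trichotomy shows that the best response function it induces agrees with $\beta$ on every $k\in\{0,1,\ldots,n-1\}$, which is precisely the domain on which $\beta$ is specified.

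I do not expect a genuine obstacle here, as the statement is essentially a bookkeeping inversion of the utility definition. The only points that require a little care are (i) respecting the codomain and monotonicity constraints simultaneously --- keeping every value of $g$ inside \RB while $g$ stays non-decreasing --- which is why I anchor $g(0)$ at a strictly positive value and use only non-negative increments, and (ii) extending $g$ past $k=n-1$ without affecting its behavior on the relevant range, which the arbitrary (here zero) increments for $k\ge n$ handle cleanly.
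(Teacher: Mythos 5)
Your proof is correct and takes essentially the same approach as the paper: both invert the characterization $1\in\beta(k)\iff\Delta g(k)\ge c$, $0\in\beta(k)\iff\Delta g(k)\le c$ and build $g$ by telescoping prescribed increments, the paper using a constant $c>1$ with increments $c-1$, $c$, $c+1$, and yours being the instantiation $c=1$ with increments $0$, $1$, $2$. If anything, you are slightly more careful than the paper's write-up, which only runs its recursion for $x\in[n-1]$ (leaving $\beta(n-1)$ and the extension of $g$ to all of $\NB\cup\{0\}$ implicit), whereas you handle both explicitly.
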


We call a function $f:\NB\cup\{0\}\longrightarrow\RB$ {\em sub-additive} if $f(x+y)\le f(x)+f(y)$ for every $x,y\in\NB\cup\{0\}$ and {\em additive} if $f(x+y)=f(x)+f(y)$. We call a BNPG game $(\GG=(\VV,\EE),(g_v)_{v\in\VV},(c_v)_{v\in\VV})$ sub-additive (respectively additive) if $g_v$ is sub-additive (respectively additive) for every $v\in\VV$. 

\longversion{
In parameterized
complexity, each problem instance comes
with a parameter $k$. Formally, a parameterized problem $\Pi$ is a
subset of $\Gamma^{*}\times
\mathbb{N}$, where $\Gamma$ is a finite alphabet. An instance of a
parameterized problem is a tuple $(x,k)$, where $k$ is the
parameter. A central notion is \emph{fixed parameter
	tractability} (FPT) which means, for a
given instance $(x,k)$, solvability in time $f(k) \cdot p(|x|)$,
where $f$ is an arbitrary computable function of $k$ and
$p$ is a polynomial in the input size $|x|$.
We use the notation $\OO^*(f(k))$ to denote $O(f(k)poly(|x|))$. 
While there have been important examples of traditional algorithms that have been analyzed in this fashion, the theoretical foundations for deliberate design of such algorithms, and a complementary complexity-theoretic framework for hardness, were developed in the late nineties~\cite{CyganFKLMPPS15}.
Just as NP-hardness is used as an evidence that a problem is unlikely to be polynomial time solvable, there exists a hierarchy of complexity classes above FPT like $W[1]$, $W[2]$, and showing that a parameterized problem is hard for one of these classes is considered
evidence that the problem is unlikely to be fixed-parameter tractable. 
%
%
We say a parameterized problem is \PNPH if it is \NPH even for some constant values of the parameter.
}
\longversion{Due to space constraints we refer the reader to the \Cref{std:def} for the formal introduction of Parameterized Complexity and for the definitions of \xp, treewidth, tree decomposition, elimination forest and treedepth.}

\textbf{Parameterized Complexity.}
A parameterized problem is represented by the tuple $(x, k)$, where $k$ is the parameter. \emph{Fixed parameter tractability} (FPT) refers to the solvability of a given instance $(x, k)$ in time $f(k) \cdot p(|x|)$, where $p$ is a polynomial in the input size $|x|$ and $f$ is an arbitrary computable function of $k$. We use the notation $\OO^*(f(k))$ to denote $O(f(k)poly(|x|))$. There is a hierarchy of complexity classes above FPT, such as W [1], W [2], para- NP, and showing that a parameterized problem is hard for one of these complexity classes would imply that the problem may not be fixed-parameter tractable. \textbf{XP} is the class of parameterized problems that can be solved in time $n^{f(k)}$, where $k$ is the parameter, $n$ is the input size and $f$ is some computable function. 
\begin{definition}\cite{CyganFKLMPPS15}
A \textbf{tree decomposition} of a graph $G$ is a pair $\TT=(T,\{X_y\}_{t\in V(T)})$, where $T$ is a tree whose every node $t$ is assigned a vertex subset $X_t \subseteq V (G)$, called a bag, such that the following three conditions
hold: 
\begin{enumerate}
\item  $\bigcup _{t\in V(T)} X_t = V (G)$. In other words, every vertex of $G$ is in at least
one bag.
\item For every $\{u,v\} \in E(G)$, there exists a node $t$ of $T$ such that bag $X_t$
contains both $u$ and $v$.
\item For every $u \in V (G)$, the set $T_u = \{t \in V (T) : u \in X_t\}$, i.e., the set
of nodes whose corresponding bags contain $u$, induces a connected subtree
of $T$.
\end{enumerate}
\end{definition}
\begin{definition}\cite{CyganFKLMPPS15}
The \textbf{width} of tree decomposition $\TT = (T, \{X_t\}_{t\in V (T)})$ equals max$_{t\in V (T)} |X_t|-1$, that is, the maximum size of its bag minus $1$. The \textbf{treewidth} of a graph $G$,
denoted by $tw(G)$, is the minimum possible width of a tree decomposition of
$G$.
\end{definition}
\begin{definition}\cite{iwata2017power}
An \textbf{elimination forest} $T$ of a graph $G = (V, E)$ is a rooted forest on the same vertex set $V$ such that, for every edge $\{u,v\} \in E$, one of $u$ and $v$ is an ancestor of the other. The \textbf{depth} of $T$ is the maximum number of vertices on a path from a root to a leaf in $T$. The \textbf{tree-depth} $td(G)$ of a graph G is the minimum depth among all possible elimination forests.
\end{definition}

\section{Technical Contributions} 

Our main technical contributions in this paper are the hardness results. First we show that \psne is \PNPH with respect to the maximum degree of the graph as parameter [\Cref{thm:deg}].  We prove this by exhibiting a non-trivial reduction from \tsat. Next we show that \psne is \WOH parameterized by treedepth [\Cref{thm:tw}]. We prove this by exhibiting a non-trivial reduction from \GFAC. We also show an important reduction from heterogeneous game to fully homogeneous game which allows us to prove that the hardness results for maximum degree, treedepth, diameter hold even for fully homogeneous games [\Cref{thm:full-homo,thm:fully-deg}].  

We complement the hardness result for treedepth by designing a non-trivial dynamic programming based \xp algorithm parameterized by treewidth [\Cref{thm:XP}]. Our \xp algorithm also yields a fixed-parameter tractability for the combined parameter\\ ``treewidth+maximum degree''.

Lastly, using some standard techniques, we bridge the gap between tractibility and intractibility by showing (i) \WT-hardness for the parameters- the number of participating (who play $1$) and non-participating players (who play $0$) [\Cref{thm:k0,thm:k1}], (ii) fixed-parameter tractability for parameters like vertex-cover number [for strict games], circuit rank and distance from complete graphs [\Cref{vertex-cover1,thm:d1,thm:d2}] and (iii) existence of PSNE in Fully homogeneous games for important classes of graphs like  path, complete graph, cycle, and bi-clique [\Cref{thm:rest}].

\section{Results}


We begin with presenting our results for \psne. We omit some proofs; they are marked $\star$. They are available in the appendix.
\subsection{Hardness Results}

The \bnpg problem is already known to be \NPC~\cite{yu2020computing}. We strengthen this result significantly in \Cref{thm:deg} by proving para-\NP-hardness by the maximum degree and the number of different utility functions. We use the \NPC problem \tsat to prove some of our hardness results~\cite{berman2004approximation}. The \tsat problem is the 3-SAT problem restricted to formulas in which each clause contains exactly three literals, and each variable occurs exactly twice positively and twice negatively.

\begin{theorem} \label{thm:deg}
	\psne is \NPC for sub-additive strict BNPG games even if the underlying graph is $3$-regular and the number of different utility functions is $2$. In particular, \psne parameterized by (maximum degree $\Delta$, the number of different utility functions) is \PNPH even for sub-additive strict BNPG games.
\end{theorem}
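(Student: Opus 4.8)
The plan is to prove \NP-completeness and then read off the parameterized consequence. Membership in \NP is immediate: a strategy profile is a polynomial-size certificate, and checking the Nash condition $U_v(x_v,x_{-v})\ge U_v(x_v^\pr,x_{-v})$ for every player and every unilateral deviation takes polynomial time. The substance is the hardness direction, which I would obtain by a reduction from \tsat, exploiting that in \tsat every variable occurs exactly twice positively and twice negatively and every clause has exactly three literals, so all occurrence counts are uniform and amenable to a bounded-degree encoding.

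Before building gadgets I would pin down what sub-additivity forces on best responses. For a degree-$3$ vertex $w$ only $n_w\in\{0,1,2,3\}$ is relevant, and if $g_w$ is concave with $g_w(0)=0$ then the marginal gains satisfy $\Delta g_w(0)\ge\Delta g_w(1)\ge\cdots$, which one checks implies sub-additivity of $g_w$. Since $w$ strictly prefers to play $1$ exactly when $\Delta g_w(n_w)>c_w$, a non-increasing marginal sequence together with strictness ($\Delta g_w(k)\ne c_w$) yields a clean threshold best response: $w$ plays $1$ iff $n_w<t_w$ for some threshold $t_w$. By \Cref{lem:best-util} every such threshold response is realizable, and I would moreover exhibit explicit concave, $g_w(0)=0$, strict functions (for example $\Delta g_w=2,2,0,0,\dots$ with $c_w=1$ gives threshold $2$) so that sub-additivity and strictness hold by construction rather than merely by appeal to \Cref{lem:best-util}, and so that the whole instance uses only \emph{two} distinct pairs $(g,c)$.

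With these primitives I would design a variable gadget that admits exactly two stable configurations, encoding \true and \false, and that forces the four literal-occurrences of a variable to agree; and a clause gadget that acts as a local consistency checker, admitting a stabilizing configuration precisely when at least one of its three incoming literal-signals is set \true, while leaving some internal vertex with a perpetually profitable deviation (an anti-coordination obstruction realizable with the strategic-substitute, i.e.\ threshold, responses above) when all three literals are \false. The uniform occurrence structure of \tsat is what lets me keep every vertex at degree exactly $3$, inserting dummy degree-correcting vertices wherever a gadget vertex would otherwise be under-degree, all while staying within the two-type budget by assigning one type to variable-gadget vertices and the other to clause-gadget/auxiliary vertices. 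Correctness then splits into the routine forward direction (a satisfying assignment extends to a PSNE by stabilizing each clause through a satisfied literal) and the backward direction (any PSNE induces a consistent assignment, since an unsatisfied clause would contradict the clause-gadget obstruction).

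The main obstacle, I expect, is not the logical correctness but the simultaneous enforcement of all four structural constraints at once: $3$-regularity, only two utility functions, sub-additivity, and strictness. In particular, realizing the ``no equilibrium unless satisfied'' behavior of the clause gadget using \emph{only} strategic-substitute best responses is delicate, because sub-additivity forbids the strategic-complement (threshold-from-below) behavior that such inconsistency gadgets usually exploit; squeezing this into a degree-$3$, two-type design while keeping the variable gadget's value faithfully propagated to all four occurrences is where the real engineering lies.
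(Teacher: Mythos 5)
Your skeleton (membership in \NP, reduction from \tsat, anti-coordination variable gadget, two utility types split between variable-side and clause-side vertices) matches the paper's, but the proposal has a genuine gap, and it stems from a false premise about sub-additivity. You assert that sub-additivity forces non-increasing marginals, hence only ``strategic-substitute'' threshold best responses (play $1$ iff $n_w<t_w$), and that it \emph{forbids} the strategic-complement, threshold-from-below behavior that clause gadgets need. That implication goes the wrong way: concavity with $g(0)\ge 0$ implies sub-additivity, but sub-additivity does not imply anything like concavity. In particular, if $g$ is non-decreasing, bounded, and $g(0)$ is large relative to its total increase, then $g(x+y)\le g(x)+g(y)$ holds trivially, no matter how the marginals behave. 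This is exactly the trick the paper uses: the clause vertices get $g_{y_j}(0)=1000, 1003, 1008, 1013, 1018$ with $c_{y_j}=4$, so the marginals are $3,5,5,5$ (increasing at the crucial point), giving the complement-type best response $\beta_{y_j}(0)=0$ and $\beta_{y_j}(k)=1$ for $k\ge 1$, while sub-additivity holds for free since $g_{y_j}(x+y)\le 1018 < 2000 \le g_{y_j}(x)+g_{y_j}(y)$.

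Because of this premise, the heart of the reduction --- the clause gadget that has no stable configuration when all three literals are false --- is never constructed; you explicitly defer it as ``delicate engineering'' under a substitutes-only constraint. This is not a minor omission: if one genuinely were confined to substitute-type threshold responses, the plan would likely be a dead end (for instance, best-shot games, the canonical substitutes case, always admit a PSNE via maximal independent sets), so no amount of gadget engineering under your constraint is guaranteed to succeed. Once the constraint is dropped, the paper's construction is much simpler than what you envision: one vertex per literal (not per occurrence), connected to the two clauses containing that literal and to its complementary literal, is automatically $3$-regular by the \tsat occurrence structure --- no dummy degree-correcting vertices and no propagation of a variable's value across four occurrence copies is needed; the single edge $\{a_i,\bar a_i\}$ together with the substitute-type response ($1$ iff $k\le 2$) enforces $s(a_i)\ne s(\bar a_i)$ in any PSNE, given that all clause vertices must play $1$.
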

\begin{proof}
	The \psne problem clearly belongs to \NP. To show its \NP-hardness, we reduce from the \tsat problem. The high-level idea of our proof is as follows. For every clause in \tsat instance, we create a vertex in the \psne instance. Also, for every literal we create a vertex in the \psne instance. We then add the set of edges and define the best-response functions in such a way that all the clause vertices play $1$ in any PSNE and a set of literal vertices play $1$ in a PSNE if and only if there is a satisfying assignment where the same set of literal vertices is assigned \true. We now present our construction formally.
	
	Let $(\XX=\{x_i:{i\in[n]}\}, \CC=\{C_j: j\in[m]\})$ be an arbitrary instance of \tsat. We define a function $f:\{x_i,\bar{x}_i: i\in[n]\}\longrightarrow\{a_i,\bar{a}_i: i\in[n]\}$ as $f(x_i)=a_i$ and $f(\bar{x}_i)=\bar{a}_i$ for $i\in[n]$ and consider the following instance $(\GG=(\VV,\EE), (g_v)_{v\in\VV}, (c_v)_{v\in\VV})$ of \psne.	
	\begin{align*}
	\VV &= \{a_i, \bar{a}_i: i\in[n]\} \cup \{y_j: j\in[m]\}\\
	\EE &= \{\{y_j,f(l_1^j)\}, \{y_j,f(l_2^j)\}, \{y_j,f(l_3^j)\}:
	C_j = (l_1^j\vee l_2^j\vee l_3^j),\\& j\in[m]\}
	\cup \{\{a_i,\bar{a}_i\}: i\in[n]\}
	\end{align*}
	
	We observe that the degree of every vertex in \GG is $3$. We now define $(g_v)_{v\in\VV}$ and $(c_v)_{v\in\VV}$. $\forall j\in[m],\text{ }c_{y_j}=4,g_{y_j}(0)=1000,g_{y_j}(1)=1003,g_{y_j}(2)=1008,g_{y_j}(3)=1013,g_{y_j}(4)=1018.$ $\forall i\in[n],\text{ }c_{a_i}=c_{\bar{a}_i }=4, g_{a_i}(0)=g_{\bar{a}_i}(0)= 1000, g_{a_i}(1)=g_{\bar{a}_i}(1)=1005,g_{a_i}(2)=g_{\bar{a}_i}(2)=1010,g_{a_i}(3)=g_{\bar{a}_i}(3)=1015,g_{a_i}(4)=g_{\bar{a}_i}(4)=1018.$ 
	
	It follows from the definition that both the above functions are sub-additive. Also, one can easily verify that the above functions give the following best-response functions for the players.	
	\[\forall i\in[n], \beta_{a_i}(k) = \beta_{\bar{a}_i}(k) = \begin{cases}
	1 & \text{if } k\le 2\\
	0 & \text{otherwise}
	\end{cases}
	\]
	\[
	\forall j\in[m], \beta_{y_j}(k) = \begin{cases}
	0 & \text{if }k=0\\
	1 & \text{otherwise}
	\end{cases}
	\]
	From the best-response functions, it follows that the game is strict. We now claim that the above BNPG game has a PSNE if and only if the \tsat instance is a \YES instance.

	For the ``if'' part, suppose the \tsat instance is a \YES instance. Let $h:\{x_i: i\in[n]\}\longrightarrow\{\true, \false\}$ be a satisfying assignment of the \tsat instance. We consider the following strategy profile for the BNPG game.
	\begin{itemize}
		\item $\forall j\in[m],s(y_j)=1$
		\item $\forall i\in[n],s(a_i)=1$ if and only if $h(x_i)=\true$
		\item $\forall i\in[n],s(\bar{a}_i)=0$ if and only if $h(x_i)=\true$
	\end{itemize}
	We observe that, since $h$ is a satisfying assignment, the player $y_j$ for every $j\in[m]$ has at least one neighbor who plays $1$ and thus $y_j$ does not have any incentive to deviate (from playing $1$). For $i\in[n]$ such that $h(x_i)=\true$, the player $a_i$ has at least one neighbor, namely $\bar{a}_i$, who plays $0$ and thus $a_i$ does not have any incentive to deviate (from playing $1$); on the other hand the player $\bar{a}_i$ has all her neighbor playing $1$ , and thus she is happy to play $0$. Similarly, for $i\in[n]$ such that $h(x_i)=\false$, both the players $a_i$ and $\bar{a}_i$ have no incentive to deviate. This proves that the above strategy profile is a PSNE.
	
	For the ``only if'' part, let $(s(a_i)_{i\in[n]}, s(\bar{a}_i)_{i\in[n]},\allowbreak s(y_j)_{j\in[m]})$ be a PSNE for the BNPG game. We claim that $s(y_j)=1$ for every $j\in[m]$. Suppose not, then there exists a $t\in[m]$ such that $s(y_t)=0$. Let the literals in clause $C_t$ be $l_1^t,l_2^t,l_3^t$. Then $s(f(l_i^t))=0,\forall i\in[3]$ otherwise the player $y_t$ will deviate form $0$ and play $1$. But then the player $f(l_1^t)$ will deviate to $1$ as $y_t$ plays $0$ which is a contradiction. We now claim that we have $s(a_i) \ne s(\bar{a}_i)$ for every $i\in[n]$. Suppose not, then there exists an $\lambda\in[n]$ such that $s(a_\lambda) = s(\bar{a}_\lambda)$. If $s(a_\lambda) = s(\bar{a}_\lambda)=1$, then both the players $a_\lambda$ and $\bar{a}_\lambda$ have incentive to deviate to $0$. On the other hand, if $s(a_\lambda) = s(\bar{a}_\lambda)=0$, then both the players $a_\lambda$ and $\bar{a}_\lambda$ have incentive to deviate to $1$. This proves the claim. We now consider the assignment  $h:\{x_i: i\in[n]\}\longrightarrow\{\true, \false\}$ defined as $h(x_i)=\true$ if and only if $s(a_i)=1$ for every $i\in[n]$. We claim that $h$ is a satisfying assignment for the \tsat formula. Suppose not, then $h$ does not satisfy a clause, say $C_\gamma, \gamma\in[m]$. Then the player $y_\gamma$ has incentive to deviate to $0$ as none of its neighbors play $1$ which is a contradiction.
\end{proof}

For the remainder of this subsection, we describe a game using the best response functions for the sake of simplicity of presentation. This suffices as due to \Cref{lem:best-util}, we can always compute the utility functions using the best response functions in polynomial time.

We next consider treedepth as parameter. Problems on graphs which are easy for trees are often fixed-parameter-tractable with respect to treedepth as parameter. We show that this is not the case for our problem.
Towards that, we use the \GFAC problem which is \WOH parameterized by treedepth~\cite{samer2011tractable}.

\begin{definition}[\GFAC]
	Given a graph $\GG=(\VV,\EE)$ and a set $K(v)\subseteq\{0,...,d(v)\}$ for each $v\in\VV$, compute if there exists a subset $\FF\subseteq \EE$ such that, for each vertex $v\in \VV$, the number of edges in $\FF$ incident on $v$ is an element of $K(v)$. We denote an arbitrary instance of this problem by $(\GG=(\VV,\EE),(K(v))_{v\in\VV})$.
\end{definition}

\begin{figure}
	\centering
\begin{tikzpicture}
	\draw (0,0) ellipse (3cm and .8 cm);
	\draw (-2,0) circle [radius=0.3] node (u1) {$u_1$};
	\draw (-1,0) circle [radius=0.3] node (u2) {$u_2$};
	\draw (0,0) node {$\cdots$};	
	\draw (2,0) circle [radius=0.3] node (un) {$u_n$};	
	
	\draw (-2,-1.3) circle [radius=0.3] node (u1p) {$u_1^\pr$};
	\draw (-1,-1.3) circle [radius=0.3] node (u2p) {$u_2^\pr$};
	\draw (0,-1.3) node {$\cdots$};	
	\draw (2,-1.3) circle [radius=0.3] node (unp) {$u_n^\pr$};
	\draw (4,-1.3) circle [radius=0.4] node (un1p) {$u_{n+1}^\pr$};
	
	\draw (0,-2.4) circle [radius=0.3] node (d1) {$d_1$};
	\draw (4,1.5) circle [radius=0.3] node (d2) {$d_2$};
	
	\draw (-1.8,1.5) circle [radius=0.5] node (a12) {$a_{\{1,2\}}$};
	\draw (-.8,1.5) node {$\cdots$};
	
	\draw[dashed] (u1) -- (u2);
	\draw[-] (u1) -- (a12);
	\draw[-] (u2) -- (a12);
	
	\draw[-] (u1) -- (u1p);
	\draw[-] (u2) -- (u2p);
	\draw[-] (un) -- (unp);	
	
	\draw[-] (d1) -- (u1p);
	\draw[-] (d1) -- (u2p);
	\draw[-] (d1) -- (unp);
	\draw[-] (d1) -- (un1p.south);
	
	\draw[-] (d2) -- (u1.north);
	\draw[-] (d2) -- (u2);
	\draw[-] (d2) -- (un);
	\draw[-] (d2) -- (un1p.north);
\end{tikzpicture}
\caption{Graph \HH in the proof of \Cref{thm:tw}.}\label{fig:tw}
\end{figure}

\begin{theorem}\label{thm:tw}
	\psne for BNPG games is \WOH parameterized by treedepth.
\end{theorem}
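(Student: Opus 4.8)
The plan is to give a parameterized reduction from \GFAC, which is \WOH parameterized by treedepth, to \psne. Given a \GFAC instance $(\GG=(\VV,\EE),(K(v))_{v\in\VV})$ with $\VV=\{v_1,\dots,v_n\}$, I would build the graph \HH of \Cref{fig:tw}: subdivide every edge $e=\{v_i,v_j\}$ of \GG by a new vertex $a_e$ (so $a_e$ is adjacent to the copies $u_i$ and $u_j$ of $v_i,v_j$), attach to each $u_i$ a private vertex $u_i'$, and add three control vertices $d_1,d_2,u_{n+1}'$, with $d_2$ adjacent to all of $u_1,\dots,u_n,u_{n+1}'$, with $d_1$ adjacent to all of $u_1',\dots,u_n',u_{n+1}'$, and with each $u_i'$ adjacent to $d_1$. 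By \Cref{lem:best-util} it suffices to specify the best-response function of every vertex. I would make each edge-vertex $a_e$ indifferent ($\beta_{a_e}\equiv\{0,1\}$), so that its value encodes whether $e$ lies in the target set $\FF\subseteq\EE$; then a strategy profile of the $a_e$'s is exactly a candidate solution, and $s_i$, the number of $1$-playing edge-neighbours of $u_i$, equals the number of selected edges at $v_i$.

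Next I would use $u_i$ as a constraint checker and the control vertices $d_1,d_2,u_{n+1}'$ as a conflict gadget. The intended PSNE in a \YES instance sets every $a_e$ according to a valid \FF, sets $u_i=u_i'=d_1=d_2=1$, and sets $u_{n+1}'=0$. I would define $\beta_{u_i}$ so that $u_i$ best-responds with $1$ precisely when $s_i\in K(i)$ (reading $s_i$ off the total $1$-count after subtracting the pinned contributions of $u_i'$ and $d_2$), and define $\beta_{d_2}$ (resp.\ $\beta_{d_1}$) to play $1$ iff all of $u_1,\dots,u_n$ (resp.\ $u_1',\dots,u_n'$) play $1$. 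The vertex $u_{n+1}'$ is the linchpin: its best response is $0$ exactly when both $d_1$ and $d_2$ play $1$, and $1$ otherwise. Checking that the intended profile is a PSNE gives the ``if'' direction.

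For the ``only if'' direction I would argue that whenever some degree constraint is unsatisfiable, at least one $u_i$ is forced to play $0$, dropping the $1$-count seen by $d_2$ below its threshold; this turns $d_1,d_2,u_{n+1}'$ into a matching-pennies-type cycle (if $u_{n+1}'=0$ then the deprived hub wants to flip to $0$, which makes $u_{n+1}'$ want $1$, which pushes the hub's count back to its threshold, and so on), so no pure equilibrium can exist. The central difficulty, and the step I expect to be delicate, is pinning down the best-response tables of $u_i,u_i',d_1,d_2,u_{n+1}'$ so that the intended profile is an equilibrium and, simultaneously, so that there is provably no \emph{spurious} equilibrium (for instance one in which $d_2=0$ shifts the decoding at some $u_i$ and accidentally restores consistency); ruling these out requires a careful case analysis on the values taken by the control vertices.

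Finally I would verify that this is a valid parameterized reduction by bounding the treedepth of \HH, which is routine. Place $d_1$ and $d_2$ at the top of an elimination forest, so that they are ancestors of every other vertex; then extend it by an optimal elimination forest of \GG on the vertices $u_1,\dots,u_n$, and hang each subdivision vertex $a_{\{v_i,v_j\}}$ as a leaf below the deeper of $u_i,u_j$, each pendant $u_i'$ as a leaf below $u_i$, and $u_{n+1}'$ below $d_2$. Each added edge then joins a vertex to one of its ancestors, so this is a valid elimination forest of depth at most $td(\GG)+3$, giving $td(\HH)\le td(\GG)+3$. Hence treedepth is preserved up to an additive constant and the construction runs in polynomial time, completing the reduction.
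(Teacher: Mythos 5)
Your reduction source (\GFAC), your graph \HH, and your treedepth bound all coincide with the paper's, and the elimination-forest argument you give is correct. The gap is exactly at the step you yourself flag as delicate, and it is fatal rather than merely unfinished: you never specify $\beta_{u_i'}$ for $i\in[n]$, and the best responses you do specify admit spurious equilibria, so the ``only if'' direction fails. First, note that a best response can only depend on the \emph{count} of neighbours playing $1$, so ``$d_2$ plays $1$ iff all of $u_1,\dots,u_n$ play $1$'' can only be implemented as $\beta_{d_2}(k)=1$ iff $k\ge n$ (or $k=n$), since $d_2$ is also adjacent to $u_{n+1}'$; similarly for $d_1$. Now take \emph{any} \GFAC instance with $n\ge 2$ and consider the profile: every $a_e$ plays $0$ (i.e.\ $\FF=\emptyset$), every $u_i$ plays $0$, $d_1=d_2=0$, $u_{n+1}'=1$, and every $u_i'$ plays $0$ (this is its best response if you complete the construction with the paper's rule $\beta_{u'}(k)=1$ iff $k=2$; if instead you pin the $u_i'$ to $1$, set $u_i'=1$ and $d_1=1$, and the same argument goes through). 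Check each player: $u_i$ sees at most one $1$, and your offset-$2$ decoding asks whether $k-2\in K(v_i)$, which is false since $k-2<0$, so $u_i=0$ is a best response; each hub sees at most one $1$, which is below its top-anchored threshold, so $0$ is a best response; $u_{n+1}'$ sees fewer than two $1$s, so $1$ is its best response; the $a_e$ are indifferent. Every player best-responds, so this is a PSNE \emph{regardless} of whether the \GFAC instance is a \yes instance, and your reduction maps \no instances to \yes instances. The matching-pennies cycle you describe only engages in the boundary case where exactly one $u_i$ plays $0$; as soon as two or more $u_i$ play $0$, both hubs are content at $0$ and the whole system settles into this stable low state.

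The paper's construction avoids precisely this collapse by building \emph{anti-coordination at zero} into the hubs: it sets $\beta_{d_2}(k)=1$ iff $k=0$, $\beta_{d_1}(k)=1$ iff $k\in\{0,n\}$, $\beta_{u_i'}(k)=1$ iff $k=2$, and $\beta_{u_i}(k)=1$ iff $k-1\in K(v_i)$, targeting the equilibrium $d_1=1$, $d_2=0$, $u_i=u_i'=1$, $u_{n+1}'=0$ (decoding offset $1$, not $2$). The clauses ``play $1$ when you see zero ones'' are what destabilize every low configuration: if all $u_i$ play $0$ then $d_2$ insists on $1$, if all $u_i'$ play $0$ then $d_1$ insists on $1$, and the proof then runs an exhaustive case analysis over $(x_{u_i})_{i\in[n]}$, $x_{d_1}$, $x_{d_2}$, $x_{u_{n+1}'}$ showing that every profile other than the intended one has a deviating player. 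To salvage your variant you would have to redesign the hub responses so that no ``all-low'' profile is stable --- which is exactly the role the paper's $k=0$ clauses play; with your top-anchored thresholds alone this cannot be done.
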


\begin{proof}
	To prove \WO-hardness, we reduce from \GFAC parameterized by treedepth to BNPG game. 

Let $\left(\GG=\left(\left\{v_i: i\in[n]\right\},\EE^\pr\right),\left(K\left(v_i\right)\right)_{i\in[n]}\right)$ be an arbitrary instance of \GFAC. The high level idea of our construction is as follows. For each vertex and edge in the graph $\GG$ associated with \GFAC instance, we add a node in the graph $\HH$ (where the BNPG game is defined) associated with \psne problem instance. On top of that we add some extra nodes and edges in \HH and appropriately define the best response functions of every player in \HH so that a set of nodes in \HH corresponding to a set \FF of edges belonging to \GG play $1$ in a PSNE if and only if \FF makes \GFAC a yes instance. We now formally present our construction.
	
	We consider a BNPG game on the following graph $\HH=(\VV,\EE)$. See \Cref{fig:tw} for a pictorial representation of \HH.	
	\begin{align*}
		\VV &= \{u_i:i\in[n]\}\cup\{a_{\{i,j\}}: \{v_i,v_j\}\in\EE^\pr\}\\
		&\cup \{u_i^\pr: i\in[n+1]\} \cup \{d_1,d_2\}\\
		\EE &= \{\{u_i,a_{\{i,j\}}\}, \{u_j,a_{\{i,j\}}\}: \{v_i,v_j\}\in\EE^\pr\}
		\cup \{\{u_i,u_i^\pr\}:i\in[n]\}\\
		&\cup \{\{d_1,u_i^\pr\}, \{d_2,u_i\}:i\in[n]\}
		 \cup\{\{d_1,u_{n+1}^\pr\},\{d_2,u_{n+1}^\pr\}\}
	\end{align*}

	Let the treedepth of \GG be $\tau$. Create a graph $\GG^\pr$ by adding the vertices $d_1,d_2$ and the set of edges $\{\{d_1,v_i\},\{d_2,v_i\}:i\in[n]\}\cup\{d_1,d_2\}$ to the graph \GG. The treedepth of $\GG^\pr$ is at most $\tau+2$. We claim that the treedepth of \HH is at most $\tau+3$. To see this, we begin with a elimination tree of $\GG^\pr$ and replace $v_i$ with $u_i$ for every $i\in[n]$. Let $\SS=\{u_i^\pr:i\in[n+1]\}\cup\{a_{\{i,j\}}: \{v_i,v_j\}\in\EE^\pr\}$. $\forall u^\pr\in\SS$, add an edge between $u^\pr$ and $u$ in the elimination tree where $u$,$v$ are neighbors of $u^\pr$ in \HH and $u$ is descendant of $v$ in the elimination tree. This results in a valid elimination tree for \HH and hence, the treedepth of \HH is at most $\tau+3$.
	
	We now describe the best-response functions of the vertices in \HH to complete the description of the BNPG game.
	\[
	\forall i\in[n], \beta_{u_i}(k) = \begin{cases}
	1 & \text{if } k-1\in K(v_i)\\
	0 & \text{otherwise }
	\end{cases}
	\]
	\[
 \forall i\in[n+1], \beta_{u_i^\pr}(k) = \begin{cases}
	1 & \text{if } k=2\\
	0 & \text{otherwise }
	\end{cases}
	\]
	\[
	\forall \{v_i,v_j\}\in\EE^\pr, \beta_{a_{\{i,j\}}}(k) = \{0,1\}\;\forall k\in\NB\cup\{0\}
	\]
	\[ \beta_{d_1}(k)=\begin{cases}
	1 & \text{if } k=0 \text{ or } k=n\\
	0 & \text{otherwise }
	\end{cases},
	\beta_{d_2}(k)=\begin{cases}
	1 & \text{if } k=0\\
	0 & \text{otherwise }
	\end{cases}
	\]
	
	We claim that the above BNPG game has a PSNE if and only if the \GFAC instance is a \yes instance. 

	For the ``if'' part, suppose the \GFAC instance is a \yes instance. Then there exists a subset $\FF\subseteq\EE^\pr$ such that for all $i\in[n]$, the degree of $v_i$ in $\GG[\FF]$ is an element of the set $K(v_i)$. We consider the strategy profile $\bar{x}=(x_v)_{v\in\VV}$.	
	\[ \forall i\in[n], x_{u_i}=x_{u_i^\pr}=1, x_{u_{n+1}^\pr}=0\]
	\[
	 \forall \{v_i,v_j\}\in\EE^\pr, x_{a_{\{i,j\}}}=\begin{cases}
	1 & \text{if } \{v_i,v_j\}\in\FF\\
	0 & \text{otherwise }
	\end{cases} ,x_{d_1}=1, x_{d_2}=0 \]
	
	Now we argue that $\bar{x}$ is a PSNE for the BNPG game. Clearly no player $a_{\{i,j\}}, \{v_i,v_j\}\in\EE^\pr$ deviates as both $0$ and $1$ are her best-responses irrespective of the action of their neighbors. The player $d_1$ does not deviate as she has exactly $n$ neighbors playing $1$. The player $u_i', i\in[n]$ does not deviate as she has exactly $2$ neighbors playing $1$. The player $u_{n+1}^\pr$ does not deviate as she has exactly $1$ neighbor playing $1$. The player $d_2$ does not deviate as she has at least $1$ neighbors playing $1$. Note that $\forall i\in[n]$, the number of neighbors of $u_i$ playing $1$ excluding $u_i^\pr$ and $d_2$ (which in this case is $n_{u_i}-1$ as $x_{d_2}=0, x_{u_i^\pr}=1$) is the same as the number of edges in $\FF$ which are incident on $v_i$ in \GG. Hence, $\forall i \in [n]$, the player $u_i$ does not deviate as $(n_{u_i}-1)\in K(v_i)$. Hence, $\bar{x}$ is a PSNE.
	
	For the ``only if'' part, let $\bar{x}=(x_v)_{v\in\VV}$ be a PSNE of the BNPG game. We claim that we have $x_{d_1}=1, x_{u_i}=x_{u_i^\pr}=1, \forall i\in[n], x_{u_{n+1}^\pr}=0, x_{d_2}=0$. To prove this, we consider all cases for $(x_{u_i})_{i\in[n]}$.
	\begin{enumerate}
		\item Case -- $\forall i \in [n]$ $x_{u_i}=1$: We have $x_{d_2}=0$ as $n_{d_2}>0$ otherwise $d_2$ would deviate. This implies that $x_{u_{n+1}^\pr}=0$ since $n_{u_{n+1}^\pr}\leq 1$( as $x_{d_2}=0$). Now we consider the following sub-cases (according to the values of $x_{d_1}$ and $x_{u_i^\pr}, i\in[n]$):
		\begin{itemize}
			\item ($x_{d_1}=1, \exists k\in [n]$ such that $x_{u_{k}^\pr }=0$ ). Here $x_{u_k^\pr}$ will then deviate to $1$ as $n_{u_k^\pr}=2$. Hence, it is not a PSNE.
			
			\item ($x_{d_1}=1, \forall i\in[n]$ $x_{u_i^\pr}=1$). This is exactly what we claim thus we have nothing to prove in this case.
			
			\item ($x_{d_1}=0, \exists k\in [n]$ such that $x_{u_{k}^\pr }=1$). Here $x_{u_k^\pr}$ will then deviate to $0$ as $n_{u_k^\pr}=1$. Hence, it is not a PSNE.
			
			\item ($x_{d_1}=0$,$\forall i\in[n]$ $x_{u_i^\pr}=0$). The player $d_1$ will deviate to $1$ as $n_{d_1}=0$. Hence, it is not a PSNE.
		\end{itemize}
	
		\item Case -- $\exists k_1,k_2\in [n]$ such that $x_{u_{k_1}}=1$ and $x_{u_{k_2}}=0$: We have $x_{d_2}=0$ as $n_{d_2}>0$ otherwise $d_2$ would deviate. This implies that $x_{u_{n+1}^\pr}=0$ since $n_{u_{n+1}^\pr}\leq 1$ (as $x_{d_2}=0$). Now we consider the following sub-cases (according to the values of $x_{d_1}$ and $x_{u_i^\pr}, i\in[n]$):
		\begin{itemize}
			
			\item ($x_{d_1}=1$,$\forall i\in[n]$ $x_{u_i^\pr}=0$). Here $u_{k_1}^\pr$ will deviate to 1 as $n_{u_{k_1}^\pr}=2$. So, it isn't a PSNE.
			
			\item ($x_{d_1}=1$,$\forall i\in[n]$ $x_{u_i^\pr}=1$). Here $u_{k_2}^\pr$ will deviate to 0 as $n_{u_{k_2}^\pr}=1$. So, it isn't a PSNE.
			
			\item ($x_{d_1}=1$, $\exists i,j\in [n]$ such that $x_{u_{i}^\pr}=1$  and $x_{u_{j}^\pr}=0$ ). Here $d_1$ will deviate to $0$ as $0<n_{d_1}<n$ (there are at least $2$ neighbours of $d_1$ which play $0$ and at least $1$ neighbour of $d_1$ which plays $1$). Hence, it is not a PSNE.
			
			\item ($x_{d_1}=0$,$\forall i\in[n]$ $x_{u_i^\pr}=0$). Here $d_1$ will deviate to 1 as $n_{d_1}=0$. So, it isn't a PSNE.
			
			\item ($x_{d_1}=0$,$\exists i\in [n]$ such that $x_{u_{i}^\pr}=1$). Here $u_{i}^\pr$ will deviate to 0 as $n_{u_{i}^\pr}\leq1$ and hence, it is not a PSNE.
		\end{itemize}
	
		\item Case -- $\forall i \in [n]$ $x_{u_i}=0$: For every $i\in[n]$, we must have $x_{u_i^\pr}=0$ so that $u_i^\pr$ doesn't deviate. We have the following sub-cases (according to the values of $x_{d_1},x_{d_2}$ and $x_{u_{n+1}^\pr}$):
		\begin{itemize}
			\item $(x_{d_1}=0,x_{u_{n+1}^\pr}=0).$ Here $d_1$ deviates to 1 as $n_{d_1}=0$ and hence, it is not a PSNE.
			\item $(x_{d_1}=0,x_{u_{n+1}^\pr}=1).$ Here $u_{n+1}^\pr$ deviates to 0 as $n_{u_{n+1}^\pr}\leq1$. So, it isn't a PSNE.
			\item $(x_{d_1}=1,x_{u_{n+1}^\pr}=0,x_{d_2}=0).$ Here $d_2$ deviates to 1 as $n_{d_2}=0$. So, it isn't a PSNE.
			\item $(x_{d_1}=1,x_{u_{n+1}^\pr}=0,x_{d_2}=1).$ Here $u_{n+1}^\pr$ deviates to 1 as $n_{u_{n+1}^\pr}=2$ and hence, it is not a PSNE.
			\item $(x_{d_1}=1,x_{u_{n+1}^\pr}=1,x_{d_2}=0).$Here $u_{n+1}^\pr$ deviates to 0 as $n_{u_{n+1}^\pr}=1$ and hence, it is not a PSNE.
			\item $(x_{d_1}=1,x_{u_{n+1}^\pr}=1,x_{d_2}=1).$Here $d_2$ deviates to 0 as $n_{d_2}>0$. So, it isn't a PSNE.
		\end{itemize}
	\end{enumerate}
	
	So if $\bar{x}=(x_v)_{v\in\VV}$ is a PSNE of the BNPG game, then we have $x_{d_1}=1$, $\forall i\in[n], x_{u_i^\pr}=1, x_{u_{n+1}^\pr}=0, \forall i \in [n]$ $x_{u_i}=1, x_{d_2}=0$. Now consider the set $\FF=\{\{v_i,v_j\}:x_{a_{\{i,j\}}}=1,\{v_i,v_j\}\in \EE^\pr\}$. Note that $\forall i\in[n]$, the number of neighbors of $u_i$ playing $1$ excluding $u_i^\pr$ and $d_2$ (which in this case is $n_{u_i}-1$ as $x_{d_2}=0, x_{u_i^\pr}=1$) is the same as the number of edges in $\FF$ which are incident on $v_i$ in \GG. Since  $\forall i, n_{u_i}-1\in K(v_i)$, the number of edges in $\FF$ incident on $v_i$ in \GFAC instance is an element of $K(v_i)$. Hence, the \GFAC instance is a \yes instance.
\end{proof}

\begin{corollary}
\psne for BNPG games is \WOH parameterized by treewidth and pathwidth.
\end{corollary}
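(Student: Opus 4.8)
The plan is to observe that the reduction already constructed in the proof of \Cref{thm:tw} suffices verbatim, because the graph \HH it produces has small treewidth and pathwidth in addition to small treedepth. I would rely on the standard chain of inequalities
\begin{equation*}
tw(\GG) \le pw(\GG) \le td(\GG) - 1,
\end{equation*}
which holds for every graph \GG (an elimination forest of depth $d$ yields a path decomposition of width $d-1$, and pathwidth always dominates treewidth). Thus both treewidth and pathwidth are bounded above by a computable function of treedepth, so any parameterized reduction certifying \WOH-ness with respect to treedepth automatically certifies it with respect to these two smaller width measures.

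Concretely, recall from the proof of \Cref{thm:tw} that, starting from a \GFAC instance whose graph \GG has treedepth $\tau$, the reduction outputs a BNPG game on the graph \HH with $td(\HH) \le \tau + 3$. Applying the inequalities above to \HH gives $pw(\HH) \le \tau + 2$ and hence $tw(\HH) \le \tau + 2$. Since \GFAC is \WOH parameterized by treedepth, and both of these new parameters of \HH are bounded by $\tau + 2$, the identical map is simultaneously a valid parameterized reduction from \GFAC (parameterized by treedepth) to \psne parameterized by treewidth, and to \psne parameterized by pathwidth. No new construction is required.

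The only point that needs care is the direction of the parameter comparison: hardness flows from the larger parameter (treedepth) to the smaller parameters (treewidth and pathwidth), not the reverse. This is exactly the useful direction here, because the construction already bounds $td(\HH)$, and the inequalities then force $pw(\HH)$ and $tw(\HH)$ to be small as well. I do not expect any genuine obstacle; the entire corollary is a one-line consequence of the treedepth bound established inside \Cref{thm:tw} together with the standard relationship between these width measures.
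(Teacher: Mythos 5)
Your proposal is correct and matches the paper's own argument, which likewise derives the corollary from the fact that treewidth and pathwidth are upper bounded by treedepth, so the reduction of \Cref{thm:tw} applies unchanged. Your additional care about the direction of the parameter comparison (hardness transferring from the larger parameter to the smaller ones) is exactly the right justification, just spelled out more explicitly than the paper's one-line proof.
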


\longversion{
\begin{proof}
This follows from the fact that treewidth and pathwidth are upper bounded by treedepth \cite{Nesetril2012}.
\end{proof}
}
%
%
%

We next consider the diameter ($d$) of the graph as our parameter and prove para-\NP-hardness in \Cref{thm:dia}. It follows immediately from the fact that the reduced instance in the \NP-completeness proof of \psne for BNPG games in \cite{yu2020computing} has diameter $2$.

\begin{observation}\label{thm:dia}
	\psne for BNPG games is \NPC even for graphs of diameter at most $2$. In particular, the \psne problem for BNPG games is \PNPH parameterized by diameter.
\end{observation}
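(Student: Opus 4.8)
The plan is to obtain \NPC-ness for the restricted class ``diameter at most $2$'' and then read off \PNPH-ness for free. Membership in \NP\ is immediate: given a strategy profile one checks in polynomial time that every player plays a best response, so the whole burden is \NP-hardness for diameter-$2$ instances. I would start from an already-hard instance of \psne\ --- for instance, by \Cref{thm:deg} hardness holds even for $3$-regular strict sub-additive games. Such an instance lives on a graph \GG\ whose diameter can be arbitrarily large, so I cannot use it verbatim; the idea is to pad \GG\ with a single inert ``universal'' vertex that collapses the diameter to $2$ without touching the equilibrium structure.

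Concretely, I would form \HH\ by adding one new vertex $p$ adjacent to every vertex of \GG. Any two original vertices then share the common neighbour $p$, and $p$ is adjacent to everything, so the diameter of \HH\ is at most $2$. The one thing to get right is that $p$ must be \emph{inert}: it should neither disturb the neighbour counts $n_w$ of the original players nor create or destroy equilibria. Using \Cref{lem:best-util} I would set $\beta_p(k)=\{0\}$ for all $k$, so that $p$ is forced to play $0$ in every PSNE of \HH. Consequently each original player $w$ sees its new neighbour $p$ playing $0$, whence the number of its neighbours playing $1$ is exactly what it was in \GG; I therefore keep the original best-response function of $w$ on $\{0,\dots,d_{\GG}(w)\}$ and set it arbitrarily on larger arguments, which are never realised since $p$ always contributes $0$. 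By \Cref{lem:best-util} the corresponding utilities are computable in polynomial time, so this is a polynomial-time reduction.

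It then follows directly that a profile is a PSNE of \HH\ if and only if it plays $0$ on $p$ and restricts to a PSNE of \GG: in any PSNE of \HH\ the vertex $p$ is forced to $0$ and every original $w$ best-responds exactly as in \GG, while any PSNE of \GG\ extends to \HH\ by setting $x_p=0$. Hence \psne\ is \NPC\ already for diameter at most $2$, and since $2$ is a constant this is by definition \PNPH\ with diameter as the parameter, giving the ``in particular'' claim. I expect no genuine obstacle: the only delicate point is verifying the inertness of $p$ (that forcing $\beta_p\equiv\{0\}$ leaves $n_w$ and the original best responses intact), and one can even bypass the gadget entirely by observing, as the authors do, that the graph produced by the existing \NP-completeness reduction of \cite{yu2020computing} already has diameter $2$.
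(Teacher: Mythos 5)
Your proof is correct, but it takes a genuinely different route from the paper. The paper's entire argument is an observation: it simply notes that the graph produced by the existing \NP-completeness reduction of \cite{yu2020computing} already has diameter $2$, so nothing new needs to be constructed. You instead give a self-contained padding reduction: start from the paper's own hard instances (\Cref{thm:deg}, $3$-regular strict sub-additive games), add a universal apex vertex $p$ with $\beta_p\equiv\{0\}$ realized via \Cref{lem:best-util}, and argue that PSNEs of \HH are exactly PSNEs of \GG extended by $x_p=0$. This argument is sound: $p$ is forced to play $0$ in every PSNE, so each original player's neighbour count never exceeds its degree in \GG and its original best responses remain in force; conversely any PSNE of \GG extends by $x_p=0$; and describing the game by best-response functions is legitimate because PSNE is characterized entirely by best responses, with utilities recoverable in polynomial time by \Cref{lem:best-util}. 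What the two approaches buy: the paper's citation-based observation is free but requires the reader to inspect the external reduction of \cite{yu2020computing} to confirm the diameter bound; your apex construction is verifiable entirely within the paper, reuses \Cref{thm:deg} rather than an external result, and is a reusable trick for collapsing diameter in hardness proofs (it is in fact the same flavour of vertex-addition argument the authors later use in \Cref{thm:full-homo}). The minor costs of your route are that the apex vertex destroys $3$-regularity and the sub-additivity guarantee of \Cref{thm:deg} (strictness can be kept by choosing the extended best responses to be singletons), but none of these properties is claimed in the observation, so nothing is lost.
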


We next consider a variant of \psne where at most $k_0$ (respectively $k_1$) players are  playing $0$ (respectively $1$) in the PSNE. We denote this variant as \text{$k_0$-}\psne (resp.  \text{$k_1$-}\psne). Obviously there is a brute force \xp algorithm which runs in time $\OO^*\left(n^{k_0}\right)$ (respectively $\OO^*\left(n^{k_1}\right)$). We show that  \text{$k_0$-}\psne (resp.  \text{$k_1$-}\psne) is \WTH parameterized by $k_0$ (respectively $k_1$). For this, we reduce from the \ds problem parameterized by the size of dominating set which is known to be \WTH~\cite{CyganFKLMPPS15}.

\begin{theorem}[$\star$]\label{thm:k0}
	  \text{$k_0$-}\psne for BNPG games is \WTH parameterized by $k_0$.
\end{theorem}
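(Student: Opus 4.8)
The plan is to give a parameter-preserving reduction from \ds parameterized by the solution size, which is \WTH~\cite{CyganFKLMPPS15}. Given an instance $(\GG=(\VV,\EE),k)$ of \ds, I would keep the graph $\GG$ unchanged as the graph of the BNPG game, set $k_0=k$, and design the best response functions so that, in every PSNE, the set of players playing $0$ is exactly a dominating set of $\GG$. The guiding idea is to read ``a player plays $0$'' as ``the corresponding vertex is chosen in the dominating set'': playing $0$ should always be safe, whereas playing $1$ should be a best response only when the player has at least one neighbor playing $0$ (equivalently, the vertex is dominated by a chosen vertex).

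Concretely, for each vertex $v\in\VV$ with degree $d(v)$, I would define
\[
\beta_v(k)=\begin{cases}
\{0,1\} & \text{if } k<d(v),\\
\{0\} & \text{if } k=d(v),
\end{cases}
\]
and set $\beta_v(k)=\{0,1\}$ for the (never-occurring) values $k>d(v)$. By \Cref{lem:best-util} the corresponding $g_v$ and $c_v$ can be computed in polynomial time, so this specifies a genuine BNPG game, and the reduction runs in polynomial time with the new parameter $k_0=k$ bounded by the \ds parameter.

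For correctness I would argue that the PSNE of this game are in exact correspondence with the dominating sets of $\GG$. First, if $\bar{x}=(x_v)_{v\in\VV}$ is any PSNE and $D=\{v:x_v=0\}$, then for every $v\notin D$ we have $x_v=1$ and $1\in\beta_v(n_v)$, forcing $n_v<d(v)$; hence $v$ has a neighbor playing $0$, i.e.\ a neighbor in $D$, so $D$ dominates $\GG$. Conversely, given a dominating set $D$, assign $x_v=0$ for $v\in D$ and $x_v=1$ otherwise: players in $D$ never deviate since $0\in\beta_v(k)$ for all $k$, and each player $v\notin D$ has a neighbor in $D$, so $n_v\le d(v)-1$ and $1\in\beta_v(n_v)$; thus $\bar{x}$ is a PSNE with exactly $|D|$ players playing $0$. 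Combining the two directions, the game has a PSNE with at most $k_0$ players playing $0$ if and only if $\GG$ has a dominating set of size at most $k$, which completes the parameterized reduction.

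The construction itself is short; the one place that needs care is pinning down the best response function so that the $0$-players form \emph{exactly} a dominating set and nothing weaker. The crucial clause is $\beta_v(d(v))=\{0\}$: it forbids a player from playing $1$ when all of its neighbors play $1$, thereby guaranteeing that every $1$-player is dominated by some $0$-player. I would also note that isolated vertices (degree $0$) are automatically forced to play $0$ by this rule, matching the fact that they must belong to every dominating set, so no separate handling of such vertices is required.
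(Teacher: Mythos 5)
Your proposal is correct and matches the paper's own proof essentially verbatim: the same reduction from \ds on the unchanged graph, the same best-response function ($\{0\}$ when $k=d(v)$, $\{0,1\}$ otherwise, realized via \Cref{lem:best-util}), and the same two-direction correspondence between dominating sets and the $0$-players of a PSNE. The remark about isolated vertices is a nice sanity check not present in the paper, but nothing in the argument differs in substance.
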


\longversion{
\begin{proof}
Let $d(v)$ denote the degree of $v$ in \GG. To prove the result for the parameter $k_0$, we use the following best-response function.
	\[ \beta_v(k^\pr) = \begin{cases}
	0 & \text{if } k^\pr=d(v)\\
	\{0,1\} & \text{otherwise}
	\end{cases} \]

	We claim that the above BNPG game has a PSNE having at most $k$ players playing $0$ if and only if the \ds instance is a \yes instance.
	
	For the ``if'' part, suppose the \ds instance is a \yes instance and $\WW\subseteq\VV$ be a dominating set for \GG of size at most $k$. We claim that the strategy profile $\bar{x}=((x_v=0)_{v\in\WW}, (x_v=1)_{v\in\VV\setminus\WW})$ is a PSNE for the BNPG game. To see this, we observe that every player $w \in \VV\setminus\WW$ has at least $1$ neighbor playing $0$ and thus she has no incentive to deviate as $n_w<d(v)$. On the other hand, since $0$ is always a best-response strategy for every player irrespective of what others play, the players in \WW also do not have any incentive to deviate. Hence, $\bar{x}$ is a PSNE.
	
	For the ``only if'' part, let $\bar{x}=((x_v=0)_{v\in\WW}, (x_v=1)_{v\in\VV\setminus\WW})$ be a PSNE for the BNPG game where $|\WW|\le k$ (that is, at most $k$ players are playing $0$). We claim that \WW forms a dominating set for \GG. Indeed, this claim has to be correct, otherwise there exists a vertex $w\in\VV\setminus\WW$ which does not have any neighbor in \WW and consequently, the player $w$ has incentive to deviate to $0$ from $1$ as $n_w=d(v)$ which is a contradiction.
\end{proof}
}

\begin{theorem}[$\star$]\label{thm:k1}
	  \text{$k_1$-}\psne for BNPG games is \WTH parameterized by $k_1$ even for fully homogeneous BNPG games.
\end{theorem}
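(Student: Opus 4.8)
The plan is to reduce from \ds parameterized by the size $k$ of the sought dominating set, which is \WTH, and to set $k_1 = k$. Given an instance $(\GG = (\VV, \EE), k)$ of \ds, I would define the BNPG game on the very same graph $\GG$ (no extra gadgets are needed), arranging matters so that the players who play $1$ in a PSNE are exactly a dominating set. This is the dual of the $k_0$ construction: there the $0$-players form the dominating set and the best response of $v$ depends on whether $n_v = d(v)$, a degree-dependent condition; here I want the $1$-players to form the dominating set, and crucially I want the deciding condition to be degree-independent so that one utility pair works for all vertices.

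Concretely I would use the best-response function
\[ \beta_v(k^\pr) = \begin{cases} 1 & \text{if } k^\pr = 0 \\ \{0,1\} & \text{if } k^\pr \ge 1, \end{cases} \]
which is identical for every vertex $v$. To obtain full homogeneity I would realize it by a single non-decreasing $g$ and constant $c$ shared by all players: take $c = 1$, $g(0) = 1$, and $g(t) = t + 2$ for $t \ge 1$, so that $\Delta g(0) = 2 > c$ (forcing $1$ as the unique best response when no neighbour plays $1$) while $\Delta g(t) = 1 = c$ for every $t \ge 1$ (making a player indifferent as soon as at least one neighbour plays $1$). Since $(g, c)$ does not depend on $v$, the game is fully homogeneous.

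For correctness, in the ``if'' direction I take a dominating set \WW with $|\WW| \le k$, let every vertex of \WW play $1$ and every other vertex play $0$, and check that nobody deviates: a vertex in \WW plays $1$ and $1 \in \beta_v(k^\pr)$ for all $k^\pr$, while a vertex $v \notin \WW$ is dominated, hence $n_v \ge 1$ and $0 \in \beta_v(n_v) = \{0,1\}$. This gives a PSNE with $|\WW| \le k_1$ players playing $1$. In the ``only if'' direction I take a PSNE with at most $k_1$ players playing $1$, let \WW be the set of those players, and argue \WW is dominating: if some vertex $v$ were undominated then $v \notin \WW$, so $x_v = 0$ and $n_v = 0$, whence $0 \notin \beta_v(0) = \{1\}$, contradicting that $v$ does not deviate. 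Hence \WW is a dominating set of size at most $k$, completing the equivalence.

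The one point needing genuine care --- and the only substantive difference from the $k_0$ theorem --- is engineering the best response to be degree-independent, so that it is induced by a single fully homogeneous $(g,c)$ rather than a family of per-vertex utilities; once the displayed $g$ and $c$ are checked to produce exactly the stated $\beta$, the rest of the verification is routine.
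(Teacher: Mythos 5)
Your proposal is correct and matches the paper's proof essentially verbatim: the same reduction from \ds on the same graph, the same degree-independent best-response function, and the same two-direction correctness argument identifying the $1$-players with a dominating set. The only addition is your explicit realization of $\beta$ via $c=1$, $g(0)=1$, $g(t)=t+2$ for $t\ge 1$, which the paper instead obtains implicitly by invoking \Cref{lem:best-util}; both routes yield full homogeneity since the best-response function is identical for all vertices.
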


\longversion{
\begin{proof}
	Let $(\GG=(\VV,\EE),k)$ an arbitrary instance of \ds. We consider a fully homogeneous BNPG game on the same graph \GG whose best-response functions $\beta_v(\cdot)$ for $v\in\VV$ is given below:
	\[ \beta_v(k^\pr) = \begin{cases}
	1 & \text{if } k^\pr=0\\
	\{0,1\} & \text{otherwise}
	\end{cases} \]
	
	We claim that the above BNPG game has a PSNE having at most $k$ players playing $1$ if and only if the \ds instance is a \yes instance.
	
	For the ``if'' part, suppose the \ds instance is a \yes instance and $\WW\subseteq\VV$ be a dominating set for \GG of size at most $k$. We claim that the strategy profile $\bar{x}=((x_v=1)_{v\in\WW}, (x_v=0)_{v\in\VV\setminus\WW})$ is a PSNE for the BNPG game. To see this, we observe that every player in $\VV\setminus\WW$ has at least $1$ neighbor playing $1$ , and thus she has no incentive to deviate. On the other hand, since $1$ is always a best-response strategy for every player irrespective of what others play, the players in \WW also do not have any incentive to deviate. Hence, $\bar{x}$ is a PSNE.
	
	For the ``only if'' part, let $\bar{x}=((x_v=1)_{v\in\WW}, (x_v=0)_{v\in\VV\setminus\WW})$ be a PSNE for the BNPG game where $|\WW|\le k$ (that is, at most $k$ players are playing $1$). We claim that \WW forms a dominating set for \GG. Indeed, this claim has to be correct, otherwise there exists a vertex $w\in\VV\setminus\WW$ which does not have any neighbor in \WW and consequently, the player $w$ has incentive to deviate to $1$ from $0$ as $n_w=0$ which is a contradiction.
\end{proof}
}

Till now we have mostly focused on heterogeneous BNPG games. We next consider fully homogeneous BNPG games and show the following by reducing from the \psne problem on heterogeneous BNPG games.

\begin{theorem}\label{thm:full-homo}
	The following results hold even for fully homogeneous games.
	\begin{enumerate}
		\item \psne is \NPC even if the diameter of the graph is at most $4$.
		\item \psne is \WOH with respect to the parameter treedepth of the graph.
		\item  \text{$k_0$-}\psne is \WTH parameterized by $k_0$.
	\end{enumerate} 
\end{theorem}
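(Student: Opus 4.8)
The plan is to establish all three items at once through a single generic reduction from the heterogeneous \psne problem to the fully homogeneous \psne problem, and then feed into it the three heterogeneous hardness results already proved, namely \Cref{thm:dia} (diameter), \Cref{thm:tw} (treedepth) and \Cref{thm:k0} ($k_0$). In a fully homogeneous game every player shares one function $g$ and one constant $c$, hence one common best-response function $\beta(\cdot)$; the only freedom left is how many neighbours of a given player play $1$. So the idea is to simulate the distinct per-player best-response functions $(\beta_v)_{v\in\VV}$ of a heterogeneous instance by granting each player a private additive \emph{offset}: to every vertex $v$ we attach a bundle of $M_v$ degree-one ``supporter'' pendants that are forced to play $1$ in every PSNE, so that $v$ always sees $M_v$ extra neighbours playing $1$ on top of its genuine neighbours.

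Concretely, I would build $\HH$ from the heterogeneous graph by adding, for each player $v_i$, $M_i$ pendant vertices adjacent only to $v_i$, and then design one best-response function $\beta$ as follows. Set $\beta(0)=\beta(1)=\{1\}$; since a supporter has exactly one neighbour (its $v_i$) it sees $0$ or $1$ neighbours playing $1$, so $1$ is its unique best response and every supporter plays $1$ in every PSNE. Choosing $M_1=2$ and $M_{i+1}=M_i+d(v_i)+2$ yields pairwise-disjoint ``windows'' $[M_i,M_i+d(v_i)]$, all contained in $\{2,3,\dots\}$ and of total size polynomial in the input; on the $i$-th window I set $\beta(M_i+k)=\beta_{v_i}(k)$ for $k\in\{0,\dots,d(v_i)\}$ and leave $\beta$ arbitrary elsewhere. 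By \Cref{lem:best-util} this single $\beta$ is realised by one pair $(g,c)$, so the game is fully homogeneous. Correctness is then immediate in both directions: because every supporter plays $1$, a player $v_i$ with $k$ genuine neighbours playing $1$ always sees exactly $M_i+k$ neighbours playing $1$, so its best response in $\HH$ is $\beta(M_i+k)=\beta_{v_i}(k)$; hence restricting any PSNE of $\HH$ to $\VV$ yields a PSNE of the original game, and conversely any PSNE of the original game extends (by setting all supporters to $1$) to a PSNE of $\HH$.

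It then remains to track the relevant parameter through this reduction for each item. For \emph{diameter}, I start from the diameter-$2$ instance of \Cref{thm:dia}: adding pendants increases pairwise distances by at most $2$ (one extra hop at each endpoint), so $\HH$ has diameter at most $4$, giving the claimed \NPC\ (and hence \PNPH) hardness for fully homogeneous games. For \emph{treedepth}, I start from \Cref{thm:tw}: taking any elimination forest of the heterogeneous graph and hanging each vertex's pendants as leaf-children below it raises the depth by at most $1$, so the treedepth stays bounded by a function of the original parameter and \psne for fully homogeneous games is \WOH parameterized by treedepth. For \emph{$k_0$}, I start from \Cref{thm:k0}: since all supporters play $1$, the number of players playing $0$ in a PSNE of $\HH$ equals the number playing $0$ in the corresponding PSNE of the original game, so the threshold $k_0$ is preserved exactly and \text{$k_0$-}\psne for fully homogeneous games is \WTH parameterized by $k_0$.

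The main obstacle, and the point requiring the most care, is engineering the single common $\beta$ together with the offsets so that the windows are simultaneously disjoint, polynomially bounded, and compatible with forcing all supporters to $1$, while the attached gadgets perturb each target graph parameter by only a controlled amount. The last constraint is the most delicate: forcing the supporters to play $1$ (rather than $0$) is essential, since otherwise the pendants would themselves contribute to the $0$-count and destroy the bound needed for the $k_0$ statement. This single design choice is exactly what lets one gadget serve all three parameters at once.
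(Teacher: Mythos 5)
Your proposal is correct and follows essentially the same route as the paper: the paper likewise attaches forced-to-play-$1$ pendant bundles of pairwise distinct sizes (offsets $2+(i-1)n$) to shift each player into a disjoint window of a single common best-response function, and then tracks diameter ($+2$), treedepth ($+1$), and the $0$-player count (unchanged) exactly as you do. The only cosmetic differences are your degree-based choice of offsets versus the paper's uniform stride of $n$, and that you define $\beta$ windowwise from the $\beta_{v_i}$ and invoke \Cref{lem:best-util}, whereas the paper writes $\beta$ directly via the comparisons $\Delta g_{v_{f(k)}}(h(k))$ versus $c_{v_{f(k)}}$.
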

\begin{proof}
	We first present a reduction from the \psne problem on heterogeneous BNPG games to the \psne problem on fully homogeneous BNPG games. Let $(\GG=(\VV=\{v_i: i\in[n]\},\EE),(g_v)_{v\in\VV},(c_v)_{v\in\VV})$ be any heterogeneous BNPG game. We now construct the graph $\HH=(\VV^\pr,\EE^\pr)$ for the instance of the fully homogeneous BNPG game. 
	\begin{align*}
		\VV^\pr &= \{u_i: i\in[n]\} \cup\bigcup_{i\in[n]} \VV_i, \\
		&\text{where }
		\VV_i = \{a_j^i: j\in[2+(i-1)n]\}, \forall i\in[n]\\
		\EE^\pr &= \{\{u_i,u_j\}: \{v_i,v_j\}\in\EE\}\cup\bigcup_{i\in[n]} \EE_i,\\
		& \text{where }
		\EE_i = \{\{a_j^i,u_i\}:j\in[2+(i-1)n]\}, \forall i\in[n]
	\end{align*}
	
	Let us define $f(x)=\lfloor\frac{x-2}{n}\rfloor+1, h(x)=x-2-(f(x)-1)n$. We now define best-response strategies $\beta$ for the fully homogeneous BNPG game on \HH.
	\[
	\beta(k)=
	\begin{cases}
		1 & \text{if } k=0 \text{ or } k=1\\
		\{0,1\} & \text{if } \Delta g_{v_{f(k)}}(h(k))=c_{v_{f(k)}},k>1 \\
		1 & \text{if } \Delta g_{v_{f(k)}}(h(k))>c_{v_{f(k)}},k>1 \\
		0 & \text{if } \Delta g_{v_{f(k)}}(h(k))<c_{v_{f(k)}},k>1
	\end{cases}
	\]
	
	This finishes description of our fully homogeneous BNPG game on \HH. We now claim that there exists a PSNE in the heterogeneous BNPG game on \GG if and only if there exists a PSNE in the fully homogeneous BNPG game on \HH.
	
	For the ``only if'' part, let $x^*=(x^*_v)_{v\in \VV}$ be a PSNE in the heterogeneous BNPG game on \GG. We now consider the following strategy profile $\bar{y}=(y_v)_{v\in\VV^\pr}$ for players in \HH.
	\[
	\forall i\in[n] y_{u_i} = x^*_{v_i}; y_w=1 \text{ for other vertices }w
	\]
	
	Clearly the players in $\cup_{i\in[n]} \VV_i$ do not deviate as their degree is $1$ and $\beta(0)=\beta(1)=1$. In $\bar{y}$, we have $n_{u_i} = n_{v_i}+2+(i-1)n\ge 2$ and $n_{v_i}\le n-1$ for every $i\in[n]$. If $x_{v_i}^*=1$, then we have $\Delta g_{v_i}(n_{v_i})\geq c_{v_i}$. We have $f(n_{u_i})=i$ and $h(n_{u_i})=n_{v_i}$. This implies that $\Delta g_{v_{f(n_{u_i})}}(h(n_{u_i}))\geq c_{v_{f(n_{u_i})}}$. So $u_i$ does not deviate as $1$ is the best-response. If $x_{v_i}^*=0$, then we have $\Delta g_{v_i}(n_{v_i})\leq c_{v_i}$. This implies that $\Delta g_{v_{f(n_{u_i})}}(h(n_{u_i}))\leq c_{v_{f(n_{u_i})}}$. So $u_i$ does not deviate as $0$ is the best-response. Hence, $\bar{y}$ is a PSNE.
	
	For the ``if'' part, suppose there exists a PSNE $(x^*_v)_{v\in \VV^\pr}$ in the fully homogeneous BNPG game on \HH.  Clearly $x^*_v=1$ for all $v\in \cup_{i\in[n]} \VV_i$ as $n_v\leq 1$. Now we claim that the strategy profile $\bar{x}=(x_{v_i}=x_{u_i}^*)_{i\in[n]}$ forms a PSNE for the heterogeneous BNPG game on \GG. We observe that if $x_{u_i}^*=1$, then  $\Delta g_{v_{f(n_{u_i})}}(h(n_{u_i}))\geq c_{v_{f(n_{u_i})}}$ for $i\in[n]$. This implies that $\Delta g_{v_i}(n_{v_i})\geq c_{v_i}$.  So $x_{v_i}=1$ is the best-response for $v_i$ in \GG and hence, she does not deviate. Similarly, If $x_{u_i}^*=0$, then $\Delta g_{v_{f(n_{u_i})}}(h(n_{u_i}))\leq c_{v_{f(n_{u_i})}}$. This implies that $\Delta g_{v_i}(n_{v_i})\leq c_{v_i}$.  So $x_{v_i}=0$ is the best-response for $v_i\in \VV$ and hence, it won't deviate. Hence, $\bar{x}$ is a PSNE in the heterogeneous BNPG game on \GG.
	
	We now prove the three statements in the theorem as follows.
	\begin{enumerate}
		\item We observe that, if the diameter of \GG is at most $2$, then the diameter of \HH is at most $4$. Hence, the result follows from \Cref{thm:dia}.
		
		\item We observe that the treedepth of \HH is at most $1$ more than the treedepth of \GG. Hence, the result follows from \Cref{thm:tw}.
		\item We observe that there exists a PSNE where at most $k$ players play $0$ in the heterogeneous BNPG game on \GG if and only if there exists a PSNE where at most $k$ players play $0$ in the fully homogeneous BNPG game on \HH. Hence, the result follows from \Cref{thm:k0}.
	\end{enumerate}
\end{proof}

We next show that \psne for fully homogeneous BNPG games is \PNPH parameterized by the maximum degree of the graph again by reducing from heterogeneous BNPG games.

\begin{theorem}[$\star$]\label{thm:fully-deg}
	\psne for fully homogeneous BNPG games is \NPC even if the maximum degree $\Delta$ of the graph is at most $9$.
\end{theorem}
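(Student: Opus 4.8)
The plan is to reduce from the heterogeneous hardness of \Cref{thm:deg} rather than to reuse the reduction of \Cref{thm:full-homo}. The obstacle in reusing \Cref{thm:full-homo} is that its heterogeneous-to-fully-homogeneous reduction attaches $2+(i-1)n$ pendant vertices to $u_i$, so the degree of $u_i$ grows linearly with $n$; this is fatal when we want the maximum degree bounded by a constant. The key observation that rescues us is that the instance produced by \Cref{thm:deg} is $3$-regular and uses only two distinct best-response functions, $\beta_A$ (the literal vertices $a_i,\bar a_i$, with $\beta_A(k)=1$ iff $k\le 2$) and $\beta_B$ (the clause vertices $y_j$, with $\beta_B(k)=1$ iff $k>0$). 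Since there are only two types, each living on a bounded-degree graph, we can realize both of them inside a single global best-response function $\beta$ by shifting each type's ``window'' by a small constant, where the shift is implemented by attaching a constant number of forced-$1$ pendant vertices.

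Concretely, first I would fix a single function $\beta:\NB\cup\{0\}\longrightarrow 2^{\{0,1\}}\setminus\{\emptyset\}$ with $\beta(0)=\beta(1)=1$, $\beta(2)=0$, $\beta(3)=\beta(4)=\beta(5)=1$, and $\beta(6)=0$ (values beyond $6$ are irrelevant and may be set arbitrarily). By \Cref{lem:best-util} this $\beta$ arises from a single $g$ and constant $c$, so using it for every vertex yields a fully homogeneous game. The choice $\beta(0)=\beta(1)=1$ makes any degree-$1$ vertex a \emph{forced-$1$ pendant}: such a vertex has at most one neighbor, so its neighbor count is $0$ or $1$ and its unique best response is $1$ in every PSNE, independent of the rest of the profile.

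Starting from the $3$-regular instance $\GG$ of \Cref{thm:deg}, I would build $\HH$ by keeping all original vertices and edges, attaching $3$ fresh pendants to every literal vertex (type $A$) and $2$ fresh pendants to every clause vertex (type $B$). Every pendant has degree $1$, every type-$A$ vertex has degree $3+3=6$, and every type-$B$ vertex has degree $3+2=5$, so the maximum degree of $\HH$ is $6\le 9$. In any PSNE of $\HH$ all pendants play $1$, so a type-$A$ vertex whose $k$ original neighbors play $1$ sees $k+3$ ones and best-responds by $\beta(k+3)$, which equals $1,1,1,0$ for $k=0,1,2,3$, exactly $\beta_A$; a type-$B$ vertex sees $k+2$ ones and best-responds by $\beta(k+2)$, which equals $0,1,1,1$ for $k=0,1,2,3$, exactly $\beta_B$. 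Hence the restriction of any PSNE of $\HH$ to the original vertices is a PSNE of $\GG$, and conversely any PSNE of $\GG$ extends to a PSNE of $\HH$ by setting all pendants to $1$; so $\HH$ has a PSNE iff $\GG$ does iff the \tsat instance is satisfiable.

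Membership in \NP is immediate (guess and verify a strategy profile). The step that requires the most care is the compatibility check for the windows: the two shifted intervals $\{2,3,4,5\}$ (type $B$) and $\{3,4,5,6\}$ (type $A$) overlap on $\{3,4,5\}$, and I must verify that $\beta_A$ and $\beta_B$ demand the same value there (both demand $1$), while reserving $k\in\{0,1\}$ for the pendant mechanism. Exhibiting one $\beta$ that simultaneously realizes both types with only constant offsets is the crux; once that is done, the equivalence argument and the degree bound are routine.
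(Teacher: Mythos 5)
Your proposal is correct and follows essentially the same route as the paper's own proof: both reduce from the $3$-regular, two-utility-function instance of \Cref{thm:deg} and attach constant-size sets of forced-$1$ pendant vertices so that each of the two best-response types is realized inside a shifted window of a single global best-response function, yielding a fully homogeneous game of bounded degree. The only difference is bookkeeping: the paper places the two types in \emph{disjoint} windows (using $2$ and $6$ pendants, hence maximum degree $9$), which avoids any compatibility check, whereas you use overlapping windows (with $3$ and $2$ pendants, maximum degree $6$) and verify agreement of the two shifted functions on the overlap $\{3,4,5\}$ --- a minor sharpening rather than a different method.
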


\longversion{
\begin{proof}
	The high-level idea in this proof is the same as the proof of \Cref{thm:full-homo}. The only difference is that we add the special nodes in a way such that the maximum degree in the instance of fully homogeneous BNPG game is upper bounded by $9$.
	
	Formally, we consider an instance of a heterogeneous BNPG game on a graph $\GG=(\VV=\{v_i: i\in[n]\},\EE)$ such that there are only $2$ types of utility functions $U_1(x_v,n_v)=g_1(x_v+n_v)-c_1x_v, U_2(x_v,n_v)=g_2(x_v+n_v)-c_2x_v,$ and the degree of any vertex is at most $3$; we know from \Cref{thm:deg} that it is an \NPC instance. Let us partition $\VV$ into $\VV_1$ and $\VV_2$ such that the utility function of the players in $\VV_1$ is $U_1(\cdot)$ and the utility function of the players in $\VV_2$ is $U_2(\cdot)$. We now construct the graph $\HH=(\VV^\pr,\EE^\pr)$ for the instance of the fully homogeneous BNPG game. 
	\begin{align*}
		\VV^\pr &= \{w_i: i\in[n]\}\cup\WW_1\cup\WW_2, \text{where}\\
		\WW_i &= \{a_j^k: j\in[2+4(i-1)], v_k\in\VV_i\} \;\forall i\in[2]\\
		\EE^\pr &= \{\{w_i,w_j\}: \{v_i,v_j\}\in\EE\}\cup\EE_1\cup\EE_2, \text{where}\\
		\EE_i &= \{\{a_j^k,w_k\}: j\in[2+4(i-1)], v_k\in\VV_i\} \;\forall i\in[2]
	\end{align*}
	
	We define two functions --- $f(x)=\lfloor\frac{x-2}{4}\rfloor+1$ and $h(x)=x-2-4(f(x)-1)$. We now define best-response functions for the players in \HH.
	\[
	\beta(k)=\begin{cases}
	1 & \text{if } k=0 \text{ or } k=1\\
	\{0,1\} & \Delta g_{f(k)}(h(k))=c_{f(k)},k>1 \\
	1 & \Delta g_{f(k)}(h(k))>c_{f(k)},k>1 \\
	0 & \Delta g_{f(k)}(h(k))<c_{f(k)},k>1
	\end{cases}
	\]
	
	This finishes description of our fully homogeneous BNPG game on \HH. We now claim that there exists a PSNE in the heterogeneous BNPG game on \GG if and only if there exists a PSNE in the fully homogeneous BNPG game on \HH. We note that degree of any node in \HH is at most $9$.
	
	For the ``only if'' part, let $x^*=(x^*_v)_{v\in \VV}$ be a PSNE in the heterogeneous BNPG game on \GG. We now consider the following strategy profile $\bar{y}=(y_v)_{v\in\VV^\pr}$ for players in \HH.
	\[
	\forall i\in[n] y_{w_i} = x^*_{v_i}; y_b=1 \text{ for other vertices }b
	\]
	
	We now claim that the players in $\VV'$ also does not deviate. Clearly the players in $\cup_{i\in[2]} \WW_i$ do not deviate as their degree is $1$ and $\beta(0)=\beta(1)=1$. If $v_k\in\VV_i$, then in $\bar{y}$ we have $n_{w_k} = n_{v_k}+2+4(i-1)\ge 2$ for every $i\in[2]$ and $n_{v_k}\le 3$ as the maximum degree in \GG is $3$. If $x_{v_k}^*=1$, then we have $\Delta g_{i}(n_{v_k})\geq c_{i}$. We have $f(n_{w_k})=i$ and $h(n_{w_k})=n_{v_k}$. This implies that $\Delta g_{f(n_{w_k})}(h(n_{w_k}))\geq c_{f(n_{w_k})}$. So $w_k$ does not deviate as $1$ is the best-response. If $x_{v_k}^*=0$, then we have $\Delta g_{i}(n_{v_k})\leq c_{i}$. This implies that $\Delta g_{f(n_{w_k})}(h(n_{w_k}))\leq c_{f(n_{w_k})}$. So $w_k$ does not deviate as $0$ is the best-response. Hence, $\bar{y}$ is a PSNE.
	
	For the ``if'' part, suppose there exists a PSNE $(x^*_v)_{v\in \VV^\pr}$ in the fully homogeneous BNPG game on \HH. Clearly $x^*_w=1$ for all $w\in \cup_{i\in[2]} \WW_i$ as $n_w\leq 1$. Now we claim that the strategy profile $\bar{x}=(x_{v_k}=x_{w_k}^*)_{k\in[n]}$ forms a PSNE for the heterogeneous BNPG game on \GG. We observe that if $x_{w_k}^*=1$ and $v_k\in\VV_i$, then $\Delta g_{f(n_{w_k})}(h(n_{w_k}))\geq c_{f(n_{w_k})}$ for $k\in[n]$. This implies that $\Delta g_{i}(n_{v_k})\geq c_{i}$.  So $x_{v_k}=1$ is the best-response for $v_k\in\VV_i$ and hence, she does not deviate. Similarly, If $x_{w_k}^*=0$ and $v_k\in\VV_i$, then $\Delta g_{f(n_{w_k})}(h(n_{w_k}))\leq c_{f(n_{w_k})}$. This implies that $\Delta g_{i}(n_{v_k})\leq c_{i}$.  So $x_{v_k}=0$ is the best-response for $v_k\in \VV_i$ and hence, it won't deviate. Hence, $\bar{x}$ is a PSNE in the heterogeneous BNPG game on \GG.
\end{proof}
} 

\subsection{XP Algorithm for the parameter treewidth}
Our next result is an \xp algorithm for the \psne problem when parameterized by treewidth.  Towards that, we introduce the notion of ``feasible function'' in \Cref{def:fb} and prove a related algorithmic result in \Cref{feasible:D}.

\begin{definition}\label{def:fb}
Let \GG=(\VV,\EE) be a graph with maximum degree $\Delta$. Let $f:V\rightarrow [\Delta]\cup\{0\}$ be a function where $V\subseteq \VV$. We call a function $f$ feasible if there exists a strategy profile $S$ of all the players in \GG  such that for each $u\in V$, number of neighbours of $u$ playing $1$ in the strategy profile $S$ is $f(u)$. 
\end{definition}

\begin{lemma}[$\star$]\label{feasible:D}
Let \GG=(\VV,\EE) be a graph with maximum degree $\Delta$. Let $V\subseteq \VV$. Then the set of all feasible functions $f:V\rightarrow [\Delta]\cup\{0\}$ can be computed in time $\OO^*(\Delta^{|V|})$.
\end{lemma}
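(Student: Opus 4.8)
The plan is to recast feasibility as a purely combinatorial condition on the set of players playing $1$, and then enumerate all feasible functions by a subset-by-subset dynamic program. First I would observe that a strategy profile $S$ influences a value $f(u)$ only through the set $T=\{w:S(w)=1\}$ of players playing $1$, and in fact only through $T$ restricted to the neighborhood $N(u)$ of $u$. Hence $f$ is feasible if and only if there is a set $T\subseteq\VV$ with $|T\cap N(u)|=f(u)$ for every $u\in V$. Writing $N(V)=\bigcup_{u\in V}N(u)$, any vertex outside $N(V)$ lies in no neighborhood of a vertex of $V$, so it may be set to $0$ without changing any count. Therefore the set of feasible functions equals exactly $\{c_T:T\subseteq N(V)\}$, where $c_T=(|T\cap N(u)|)_{u\in V}$. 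Since each coordinate satisfies $0\le |T\cap N(u)|\le d(u)\le\Delta$, every such $c_T$ is indeed a function into $[\Delta]\cup\{0\}$.

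To enumerate these vectors I would list $N(V)=\{w_1,\dots,w_m\}$ with $m\le n$ and build up the set $F$ of realizable count-vectors incrementally, storing it in a dictionary keyed by the vector itself (for instance a trie or hash table over $([\Delta]\cup\{0\})^{|V|}$) so that duplicates are collapsed on the fly. Start with $F_0=\{\mathbf 0\}$, and given $F_{t-1}$ set
\[
F_t=F_{t-1}\cup\{\,c+\delta_{w_t}:c\in F_{t-1}\,\},
\]
where $\delta_{w_t}$ is the vector that adds $1$ to coordinate $u$ for each $u\in V\cap N(w_t)$ and is $0$ elsewhere. The output is $F_m$.

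For correctness, a straightforward induction on $t$ shows $F_t=\{c_T:T\subseteq\{w_1,\dots,w_t\}\}$: the two branches of the recurrence correspond precisely to excluding or including $w_t$ in $T$, and the base case $F_0=\{c_\emptyset\}$ is immediate; hence $F_m$ is the complete set of feasible functions. For the running time, the deduplicating dictionary guarantees that $|F_t|$ never exceeds the number of candidate functions $V\to[\Delta]\cup\{0\}$, namely $(\Delta+1)^{|V|}=\OO^*(\Delta^{|V|})$, and each of the $m\le n$ rounds performs at most $|F_{t-1}|$ vector additions and dictionary lookups, each costing $O(|V|)$; this gives the claimed $\OO^*(\Delta^{|V|})$ bound. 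The main obstacle, and the reason for the dictionary, is that a naive enumeration over all $2^{m}$ subsets $T$ of $N(V)$ would be far too expensive; the key point is to make the cost scale with the number of \emph{distinct} feasible vectors (at most $\Delta^{|V|}$) rather than with the number of subsets producing them, and the reduction to $N(V)$ is the conceptual step that keeps the enumeration finite and well-defined.
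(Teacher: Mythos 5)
Your proposal is correct and takes essentially the same approach as the paper: the paper runs a dynamic program over prefixes of the neighborhood $N[V]=\{u_1,\dots,u_l\}$ with a Boolean table $c[(d_u)_{u\in V},i]$ and the include/exclude recurrence for vertex $u_i$, and your sets $F_t$ are exactly the vectors whose table entry is $1$ after processing $w_1,\dots,w_t$, built forward with deduplication instead of backward by table-filling. The remaining differences (iterating over $N(V)$ rather than $N[V]$, and the shared harmless imprecision of writing $\OO^*(\Delta^{|V|})$ for what is really $(\Delta+1)^{|V|}$ states) are cosmetic.
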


\longversion{
\begin{proof}
We use dynamic programming to solve this problem. Let $N(V)$ denote set of vertices in \VV which is adjacent to at least one vertex in $V$. Let $N[V]:= V\cup N(V)$. Let $N[V]=\{u_1,\ldots,u_l\}$ where $l$ is at most $O(|V|\cdot \Delta)$. Let $c[(d_u)_{u\in V},i]$ denote whether there exists a strategy profile $S$ such that for each $u\in V$, number of neighbours of $u$ in the set $\{u_1,\ldots,u_i\}$ ($\phi$ if $i=0$) playing $1$ in the strategy profile $S$ is $d_u$. Let $f$ be a function such that $f(u)=d_u$ for all $u\in V$. Clearly, $c[(d_u)_{u\in V},l]$ indicates whether the function $f$ is feasible or not. Let $g:\VV\times \VV\rightarrow \{0,1\}$ be a function such that $g(\{u,v\})=1$ if and only if $\{u,v\}\in \EE$. We now present the recursive equation to compute $c[(d_u)_{u\in V},i]$:
        \[      
         c[(d_u)_{u\in V},i]=\begin{cases}
	0 \text{ if }\exists u,d_u<0\\
	c[(d_{u}- g(\{u,u_i\}))_{u\in V},i-1] \vee c[(d_u)_{u\in V},i-1] \text{ if } i\geq1 \\
	1  \text{ if $i=0$ and }\forall u\in V,d_u=0 \\
	0 \text{ otherwise }
	\end{cases}
	\]

Now we argue for the correctness of the above recursive equation. Base cases are trivial as it follows from the definitions. We now look at the case where $i\geq 1$ and $\forall u\in V, d_u\geq 0$. If $c[(d_{u}- g(\{u,u_i\}))_{u\in V},i-1]=1$, then consider the strategy profile $S$ which makes it $1$. Now consider a strategy profile where the response of $u_i$ is $1$ and rest of the players play as per $S$. So for any $u\in V$, the number of neighbours in $\{u_1,\ldots,u_i\}$ playing $1$ increases by $g(\{u,u_i\})$ when compared to the number of neighbours in $\{u_1,\ldots,u_{i-1}\}$ playing $1$. This would imply that $c[(d_u)_{u\in V},i]=1$. Similarly, if $ c[(d_u)_{u\in V},i-1]=1$ ,  then considering the response of $u_i$ as $0$ will not change the number of neighbours playing $1$ and therefore $c[(d_u)_{u\in V},i]=1$. In the other direction, if $c[(d_u)_{u\in V},i]=1$, then consider the strategy profile $S$ which makes it $1$. If the response of $u_i$ is $1$ (resp. $0$) in $S$, then the number of neighbours of $u$ in $\{u_1,\ldots,u_{i-1}\}$ playing $1$ decreases by $g(\{u,u_i\}$ (resp. $0$) when compared with the number of neighbours in $\{u_1,\ldots,u_{i}\}$ playing $1$. Hence, $c[(d_{u}- g(\{u,u_i\}))_{u\in V},i-1] \vee c[(d_u)_{u\in V},i-1]$ is equal to $1$.

Now we look at the time complexity. Total number of cells in the dynamic programming table which we created is $\OO^*(\Delta^{|V|})$ as the value of each entry in $(d_u)_{u\in V}$ is at most $\Delta$. Time spent in each cell is $n^{O(1)}$. Hence, the set of all feasible functions $f:V\rightarrow [\Delta]\cup\{0\}$ can be computed in time $\OO^*(\Delta^{|V|})$.
\end{proof}
}
We now present a $\OO^*(\Delta^{O(k)})$ time \xp algorithm for \psne where $k$ is the treewidth of the input graph. Note that the running time of $\OO^*(\Delta^{O(k)})$ implies that \psne is fixed-parameter tractable for the combined parameter ``treewidth+maximum degree''.
\begin{theorem}\label{thm:XP}
Let \GG be an n-vertex graph given together with its tree decomposition of treewidth at most $k$. Then there is an algorithm running in time $\OO^*(\Delta^{O(k)})$ for \psne in BNPG game on $\GG$ where $\Delta$ is the maximum degree of graph $\GG$.
\end{theorem}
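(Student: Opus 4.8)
The plan is to run a bottom-up dynamic program over a nice tree decomposition of \GG. First I would convert the given width-$k$ tree decomposition into a nice tree decomposition with introduce-vertex, introduce-edge, forget, and join nodes so that every edge of \GG is introduced at exactly one node; this preserves the width and blows up the number of nodes only polynomially. For a node $t$ let $V_t$ be the set of vertices appearing in bags of the subtree rooted at $t$. The state at $t$ records, for an action assignment $a\colon X_t\to\{0,1\}$ and a count vector $c\colon X_t\to\{0,1,\ldots,\Delta\}$, a bit $D_t[a,c]$ that is $1$ iff there is a strategy profile on $V_t$ extending $a$ such that (i) for every $v\in X_t$ the number of neighbours of $v$ in $V_t$ that are joined to $v$ by an already-introduced edge and play $1$ equals $c(v)$, and (ii) every vertex already forgotten in the subtree satisfies its best-response condition. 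The crucial structural fact, which I would invoke repeatedly, is that by the tree-decomposition property all edges incident to a vertex $v$ are introduced before $v$ is forgotten, so $c(v)$ is \emph{final} at the moment $v$ leaves the bag.

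Next I would spell out the recurrences. A leaf carries the trivial table $D_t[\emptyset,\emptyset]=1$. Introducing a vertex $v$ copies the child table and sets $c(v)=0$, since the newly introduced $v$ is isolated until its edges appear. Introducing an edge $\{u,v\}$ with both endpoints in the bag increments the count of $u$ by $a(v)$ and of $v$ by $a(u)$: $D_t[a,c]=D_{t'}[a,c']$ where $c'$ equals $c$ except that $c'(u)=c(u)-a(v)$ and $c'(v)=c(v)-a(u)$, the entry being $0$ if either value is negative. Forgetting a vertex $v$ is where the PSNE constraint is enforced: $D_t[a,c]=\bigvee_{b\in\{0,1\}}\bigvee_{\gamma:\,b\in\beta_v(\gamma)} D_{t'}[a\cup\{v\mapsto b\},\,c\cup\{v\mapsto\gamma\}]$, discarding any extension in which $v$'s action is not a best response to its finalized count $\gamma$. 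A join node with identical bags convolves the count vectors, $D_t[a,c]=\bigvee_{c_1+c_2=c} D_{t_1}[a,c_1]\wedge D_{t_2}[a,c_2]$, which is correct precisely because each edge is introduced once and hence contributes to exactly one of the two subtrees. A PSNE exists iff $D_r[\emptyset,\emptyset]=1$ at the empty-bag root $r$.

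For the running time, each table is indexed by at most $2^{|X_t|}(\Delta+1)^{|X_t|}\le\Delta^{O(k)}$ pairs $(a,c)$ because $|X_t|\le k+1$; the join step, performing the count-convolution, is the most expensive and costs $\OO^*(\Delta^{O(k)})$ per node, so summing over the polynomially many nodes gives the claimed $\OO^*(\Delta^{O(k)})$ bound. Since this is a computable function of $k$ times a polynomial whenever $\Delta$ is also a parameter, it yields fixed-parameter tractability for the combined parameter ``treewidth $+$ maximum degree''. The bookkeeping of achievable count profiles on a bag stays within this budget by \Cref{feasible:D} applied with $V=X_t$, which enumerates the relevant neighbour-count profiles in time $\OO^*(\Delta^{|X_t|})$.

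I expect the main obstacle to be the correctness of the count accounting: I must argue that the partial counts tracked in the state are never double-counted and coincide with the true neighbourhood-one-counts exactly at the moment of forgetting, so that checking $b\in\beta_v(\gamma)$ at the forget node faithfully encodes the equilibrium condition for $v$. This rests on two invariants, namely that every edge is introduced exactly once (justifying the additive join) and that a forgotten vertex has no further incident edges (justifying the finality of $\gamma$). Establishing these invariants and then verifying that $D_t$ matches its intended semantics by a routine induction on the tree would complete the proof.
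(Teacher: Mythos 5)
Your proof is correct, but it takes a genuinely different route from the paper's. The paper works with a nice tree decomposition \emph{without} edge-introduce nodes, and its DP state at a node $t$ stores, besides the actions of the bag vertices, two guessed count vectors: $d_v^1$, the number of neighbours of $v$ playing $1$ that lie \emph{outside} $V_t$, and $d_v^2$, the number lying in $V_t\setminus X_t$. Its table entry asserts the existence of a \emph{global} strategy profile consistent with these guesses in which no vertex of $V_t$ deviates, and the best-response condition for a bag vertex is enforced already at its introduce node via $x_v\in\beta_v(n_v'+d_v^1+d_v^2)$. Because the external counts are guesses, the paper must repeatedly invoke \Cref{feasible:D} both to certify that a guessed $(d_v^1,d_v^2)$ is realizable at introduce nodes and to enumerate the feasible splits at join nodes. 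You instead add introduce-edge nodes, track only the counts accumulated along already-introduced edges, and defer the equilibrium check for $v$ to its forget node, where the count is final because every edge incident to $v$ is introduced strictly below the node forgetting $v$; the join then becomes a plain convolution of count vectors. This buys a cleaner, more self-contained argument: your table entries are realizable by construction, so no feasibility oracle is needed at all --- indeed your closing appeal to \Cref{feasible:D} is vestigial and can simply be deleted --- whereas the paper's formulation pays for its ``global profile plus guessed external counts'' semantics with the extra feasibility machinery. Both arguments yield the same $\OO^*(\Delta^{O(k)})$ bound and the same \FPT consequence for the combined parameter treewidth plus maximum degree; the only cosmetic caveat in yours (shared by the paper) is that writing $2^{O(k)}(\Delta+1)^{O(k)}$ as $\Delta^{O(k)}$ tacitly assumes $\Delta\ge 2$, the case $\Delta\le 1$ being trivial.
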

\longversion{\begin{proof}}
\shortversion{
\begin{proof}[Proof Sketch]}
Let $(\GG=(\VV,\EE),(g_v)_{v\in\VV},(c_v)_{v\in\VV})$ be any instance of \psne for BNPG games. Let $(\beta_v(.))_{v\in\VV}$ be the set of the best response functions. Let $\TT = (T, \{X_t \}_{t\in V (T )} )$ be a nice tree decomposition of the input
$n$-vertex graph $\GG$ that has width at most $k$. Let \TT be rooted at some node $r$. For a node $t$ of \TT , let $V_t$ be the union of all the bags present in the subtree of \TT rooted at $t$, including $X_t$. We solve the \psne problem using dynamic programing. Let $N_1(X_t)$ denote set of vertices in $\VV\setminus V_t$ which is adjacent to at least one vertex in $X_t$. Let $N_2(X_t)$ denote set of vertices in $V_t\setminus X_t$ which is adjacent to at least one vertex in $X_t$.Let $c[t,(x_v)_{v\in{X_t}},(d_v^1)_{v\in{X_t}},(d_v^2)_{v\in{X_t}}] =1$ (resp. 0) denote that there exists (resp. doesn't exist) a strategy profile $S$ of all the players in \GG such that for each $u\in X_t$, $u$ plays $x_u$, number of neighbours of $u$ in $N_1(X_t)$ (resp. $N_2(X_t)$) playing $1$ is $d_u^1$ (resp. $d_u^2$) and none of the vertices in $V_t$ deviate in the strategy profile $S$. Before we proceed, we would like to introduce some notations. Let $V$ be a set of vertices and $S_1=(x_v)_{v\in V}$, $S_2=(x_v)_{v\in V\setminus\{w\}}$ be two tuples. Then $S_1\setminus\{x_w\}:=S_2$ and $S_2\cup\{x_w\}:=S_1$. Also, we denote an empty tuple by $\phi$. Clearly $c[r,\phi,\phi,\phi]$ indicates whether there is a PSNE in \GG or not. We now present the recursive equation to compute $c[t,(x_v)_{v\in{X_t}},(d_v^1)_{v\in{X_t}},(d_v^2)_{v\in{X_t}}]$ for various types of node in $\TT$.

\textbf{Leaf Node:} For a leaf node $t$ we have that $X_t = \phi$. Hence, $c[t,\phi,\phi,\phi]=1$.

\textbf{Join Node:} For a join node $t$, let $t_1,t_2$ be its two children. Note that $X_t=X_{t_1}=X_{t_2}$. 

Now we proceed to compute $c[t,(x_v)_{v\in{X_t}},(d_v^1)_{v\in{X_t}},(d_v^2)_{v\in{X_t}}]$. 
Let $\mathcal{F}$ be a set of tuples $(d_v')_{v\in X_t}$ such that there is a strategy profile $S$ such that for each $v\in X_t$, its response is $x_v$, the number of neighbours in $N_1(x)$, $V_{t_1}\setminus X_{t_1}$ and $V_{t_2}\setminus X_{t_2}$ playing $1$ is $d_v^1,d_v',d_v^2-d_v'$ respectively. Using \Cref{feasible:D} we can find the set $\mathcal{F}$ in time $\OO^*(\Delta^k)$. Then $c[t,(x_v)_{v\in{X_t}},(d_v^1)_{v\in{X_t}},(d_v^2)_{v\in{X_t}}]$ is equal to the following formula:
\begin{align*}
&0\vee\bigvee_{(d_v')_{v\in X_t}\in \mathcal{F}}\big(c[t_1,(x_v)_{v\in{X_t}},(d_v^1+d_v^2-d_v')_{v\in{X_t}},(d_v')_{v\in{X_t}}]\\
&\wedge c[t_2,(x_v)_{v\in{X_t}},(d_v^1+d_v')_{v\in{X_t}},(d_v^2-d_v')_{v\in{X_t}}]\big)
\end{align*}
\longversion{
Now we argue for the correctness of the above recursive equation. If $\mathcal{F}$ is empty then the above equation trivially holds true. Hence, we assume that $\mathcal{F}$ is non-empty. In one direction let us assume that there exists a tuple $(d_v')_{v\in X_t}\in \mathcal{F}$ such that $(c[t_1,(x_v)_{v\in{X_t}},(d_v^1+d_v^2-d_v')_{v\in{X_t}},(d_v')_{v\in{X_t}}]\wedge c[t_2,(x_v)_{v\in{X_t}},(d_v^1+d_v')_{v\in{X_t}},(d_v^2$$-d_v')_{v\in{X_t}}])$=1. Let $S_1$ and $S_2$ be the strategy profiles which leads to $c[t_1,(x_v)_{v\in{X_t}},(d_v^1+d_v^2-d_v')_{v\in{X_t}},(d_v')_{v\in{X_t}}]$ and $c[t_2,(x_v)_{v\in{X_t}},(d_v^1+d_v')_{v\in{X_t}},(d_v^2$$-d_v')_{v\in{X_t}}]$ respectively being $1$. Let $S_3$ be a strategy profile that leads to $(d_v')_{v\in X_t}$ being included in $\mathcal{F}$. Now consider a strategy profile $S$ such that responses of players in $V_{t_1}\setminus X_{t}$ is their responses in $S_1$, responses of the players in $V_{t_2}\setminus X_{t}$ is their responses in $S_2$ and responses of rest of the players is their responses in $S_3$. Now observe that there are no edges between the set of vertices $V_{t_1}\setminus X_t$ and $V_{t_2}\setminus X_t$ and there are no edges between the set of vertices $\VV\setminus V_t$ and $V_{t}\setminus X_t$. Therefore, the number of neighbours of vertices in $V_{t_1}\setminus X_{t_1}$ and $V_{t_2}\setminus X_{t_2}$ playing $1$ doesn't change when compared to $S_1$ and $S_2$ respectively. Also, the number of neighbours of vertices in $X_t$ playing $1$ doesn't change when compared to $S_1$, $S_2$ and $S_3$. Hence, $c[t,(x_v)_{v\in{X_t}},(d_v^1)_{v\in{X_t}},(d_v^2)_{v\in{X_t}}]=1$. In other direction, let $c[t,(x_v)_{v\in{X_t}},(d_v^1)_{v\in{X_t}},(d_v^2)_{v\in{X_t}}]=1$. Let $S'$ be the strategy profile which leads to $c[t,(x_v)_{v\in{X_t}},(d_v^1)_{v\in{X_t}},(d_v^2)_{v\in{X_t}}]$ being $1$. For each $v\in X_t$, let the number of neighbours in $V_{t_1}\setminus X_t$ playing $1$ in $S'$ be $d_v'$. Then clearly $S'$ leads to both $c[t_1,(x_v)_{v\in{X_t}},(d_v^1+d_v^2-d_v')_{v\in{X_t}},(d_v')_{v\in{X_t}}]$ and $c[t_2,(x_v)_{v\in{X_t}},(d_v^1+d_v')_{v\in{X_t}},(d_v^2$$-d_v')_{v\in{X_t}}]$ being $1$
}

\textbf{Introduce Node:} Let $t$ be an introduce node with a child $t'$ such that $X_t = X_{t'} \cup \{u\}$ for some $u \notin X_{t'}$. Let $S'=(x_v)_{v\in{X_t}}$ be a strategy profile of vertices in $X_t$. Let $n_v'$ denote the number of neighbours of $v$ playing $1$ in $S'$.  Let $g:\VV\times \VV\rightarrow \{0,1\}$ be a function such that $g(\{u,v\})=1$ if and only if $\{u,v\}\in \EE$.  We now proceed to compute $c[t,S',(d_v^1)_{v\in{X_t}},(d_v^2)_{v\in{X_t}}]$. If there is no strategy profile $S$ where $\forall v\in X_t$,the number of neighbours of $v$ in $N_1(X_t)$ (resp. $N_2(X_t)$) playing $1$ is $d_v^1$ (resp. $d_v^2$) , then clearly $c[t,S',(d_v^1)_{v\in{X_t}},(d_v^2)_{v\in{X_t}}]=0$. Due to \Cref{feasible:D}, we can check the previous statement in $\OO^*(\Delta^{k})$ by considering a bipartite subgraph of \GG between $X_t$ and $N_1(X_t)$ (or $N_2(X_t)$).  Otherwise, we have the following:

\[	
	c[t,S',(d_v^1)_{v\in{X_t}},(d_v^2)_{v\in{X_t}}]=
	\begin{cases}
		0 \text{ if }\exists v\in X_t, x_v\notin\beta_v(n_v'+d_v^1+d_v^2)\\
		c[t',S'\setminus \{x_u\},(d_v^1+g(\{v,u\}))_{v\in{X_{t'}}},(d_v^2)_{v\in{X_{t'}}}] \text{ if }x_u=1\\
		c[t',S'\setminus \{x_u\},(d_v^1)_{v\in{X_{t'}}},(d_v^2)_{v\in{X_{t'}}}] \text{ otherwise }
	\end{cases}
	\]
\longversion{	
Now we argue for the correctness of the above recursive equation. The base case is trivial. Now if $x_u=1$ and $c[t',S'\setminus \{x_u\},(d_v^1+g(\{v,u\}))_{v\in{X_{t'}}},(d_v^2)_{v\in{X_{t'}}}]=1$, then let $S_1$ be the strategy profile that makes the cell value to be $1$.
Let $S_2$ be a strategy such that $\forall v\in X_t$, the number of neighbours of $v$ in $N_1(X_t)$ (resp. $N_2(X_t)$) playing $1$ is $d_v^1$ (resp. $d_v^2$). Let $S_3$ be a strategy profile where the responses of players in $X_t$ is $(x_v)_{v\in X_t}$, the responses of players in $V_t\setminus X_t$ is their responses in $S_1$ and the responses of the rest of the players is their responses in $S_2$. Now observe that there is no edge between the set of vertices $\VV\setminus V_{t'}$ and $V_{t'}\setminus X_{t'}$ (due to this the only valid value of $d_u^2$ is 0 which is ensured by us). Therefore none of the vertices deviate in $V_{t'}\setminus X_{t'}$ deviate. Also, the number of neighbours of the vertices in $X_{t'}$ playing $1$ in $S_3$ is the same as that of $S_1$. Hence, the vertices in $X_{t'}$ don't deviate. Vertex $u$ doesn't deviate as we are not in the base case and therefore $x_u\in\beta_v(n_u'+d_u^1+d_u^2)$. Hence, $c[t,S',(d_v^1)_{v\in{X_t}},(d_v^2)_{v\in{X_t}}]=1$. By similar arguments we can show that if $x_u=0$ and $c[t',S'\setminus \{x_u\},(d_v^1))_{v\in{X_{t'}}},(d_v^2)_{v\in{X_{t'}}}]=1$ then $c[t,S',(d_v^1)_{v\in{X_t}},(d_v^2)_{v\in{X_t}}]=1$. In the other direction, if $c[t,S',(d_v^1)_{v\in{X_t}},(d_v^2)_{v\in{X_t}}]=1$, then let $S_4$ be the strategy profile which makes the value of this cell $1$. If $x_u=1$ in $S_4$, then for each $v\in X_{t'}$, the number of neighbours of $X_{t'}$ in $N_1(X_{t'})$ (resp. $N_2(X_{t'}))$ playing $1$ in $S_4$ is $d_v^1+g(\{v,u\})$ (resp. $d_v^2$). Since none of the vertices in $V_t$ deviate, so clearly $S_4$ leads $c[t',S'\setminus \{x_u\},(d_v^1+g(\{v,u\}))_{v\in{X_{t'}}},(d_v^2)_{v\in{X_{t'}}}]$ to being $1$ . Similar argument holds when $x_u=0$ in $S_4$ , and it would lead $c[t',S'\setminus \{x_u\},(d_v^1)_{v\in{X_{t'}}},(d_v^2)_{v\in{X_{t'}}}]$ to being $1$.
}

\textbf{Forget Node:} Let $t$ be a forget node with a child $t'$ such that $X_t$ = $X_t' \setminus \{w\}$ for some $w \in X_{t'}$. Let $S_0=(x_v)_{v\in{X_t}}\cup \{x_w=0\},$ $S_1=(x_v)_{v\in{X_t}}\cup \{x_w=1\}$ be two strategy profiles of vertices in $X_t'$. Let $g:\VV\times \VV\rightarrow \{0,1\}$ be a function such that $g(\{u,v\})=1$ if and only if $\{u,v\}\in \EE$. We now have the following:
\begin{align*}
c[t,(x_v)_{v\in X_t},(d_v^1)_{v\in{X_t}},(d_v^2)_{v\in{X_t}}]=&\bigvee_{d_w^1,d_w^2:0\leq d_w^1,d_w^2\leq \Delta}\big(c[t',S_0,(d_v^1)_{v\in{X_{t'}}},(d_v^2)_{v\in{X_{t'}}}]\\
&\vee c[t',S_1,(d_v^1)_{v\in{X_{t'}}},(d_v^2-g(\{v,w\}))_{v\in{X_{t'}}}\big)   	
\end{align*}
\longversion{
Now we argue for the correctness of the above recursive equation. In one direction, let us assume that $\exists d_w^1,d_w^2\in [\Delta]\cup \{0\}$ such that $(c[t',S_0,(d_v^1)_{v\in{X_{t'}}},(d_v^2)_{v\in{X_{t'}}}]\vee c[t',S_1,(d_v^1)_{v\in{X_{t'}}},(d_v^2-g(\{v,w\}))_{v\in{X_{t'}}}])=1$. Let $S$ be the strategy profile that made the formula in the previous statement to be true. Now observe that for each $v\in X_t$, the number of neighbours in $N_1(X_t)$ and $N_2(X_t)$ is $d^1_v$ and $d^2_v$ respectively. Since none of the vertices in the set $V_t=V_{t'}$ deviate, therefore $c[t,(x_v)_{v\in{N[X_t]}},(d_v^1)_{v\in{X_t}},(d_v^2)_{v\in{X_t}}]=1$. In other direction, let us assume that $c[t,(x_v)_{v\in{N[X_t]}},(d_v^1)_{v\in{X_t}},(d_v^2)_{v\in{X_t}}]=1$. Let $S'$ be the strategy profile that made the formula in the previous statement to be true. Let $d_w^1$ and $d_w^2$ be the number of neighbours of $w$ in $N_1(X_{t'})$ and $N_2(X_{t'})$ respectively playing $1$ in the profile $S'$. For each $v\in X_{t}$, if $x_w=1$ (resp. $x_w=0$) then the number of neighbours in $N_2(X_{t'})$ playing $1$ in $S'$ is $d_v^2-g(\{v,w\}$ (resp. $d_v^2$).  Similarly, for each $v\in X_{t}$, the number of neighbours in $N_1(X_{t'})$ playing $1$ in $S'$ is $d_v^1$. Since none of the vertices in $V_t=V_{t'}$ deviate in $S'$, therefore   $(c[t',S_0,(d_v^1)_{v\in{X_{t'}}},(d_v^2)_{v\in{X_{t'}}}]\vee c[t',S_1,(d_v^1)_{v\in{X_{t'}}},(d_v^2-g(\{v,w\}))_{v\in{X_{t'}}}])=1$. \\
}
\shortversion{We refer the reader to the appendix for the proof of correctness of the above recursive equations.} Now we consider the time complexity of our algorithm. Total number of cells in the dynamic programming table which we created is $\OO^*(\Delta^{O(k)})$. For each cell, we spend at most $\OO^*(\Delta^{O(k)})$ time if we are computing the table in a bottom up fashion. Hence, the running time is $\OO^*(\Delta^{O(k)})$. 
\end{proof}

\subsection{Tractable Results}

 To conclude our fine-grained analysis of the \psne problem, we bridge the gap between the tractablility and intractibility by showing some tractable results. Our first result is an \fpt algorithm for \psne for strict games when parameterized by the vertex cover number. 

\begin{theorem}[$\star$]\label{vertex-cover1}
	There is a $\OO^*(2^\vc)$ time algorithm for \psne for strict BNPG games where \vc is the vertex cover number.
\end{theorem}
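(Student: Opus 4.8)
The plan is to exploit the structure that a small vertex cover imposes together with the defining feature of strict games, namely that every best-response set is a singleton. Let $C\subseteq\VV$ be a minimum vertex cover of \GG, which can be computed in $\OO^*(2^{\vc})$ time by standard branching, and let $I=\VV\setminus C$ be the associated independent set. Since $I$ is independent, every vertex $u\in I$ has all of its neighbours inside $C$; this is the property we will leverage.

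First I would branch over all $2^{|C|}$ strategy profiles restricted to the vertex cover. Fix one such partial profile $(x_v)_{v\in C}$. For each $u\in I$ the number $n_u$ of its neighbours playing $1$ is already completely determined by $(x_v)_{v\in C}$, because $N(u)\subseteq C$. Here strictness is crucial: since the game is strict we have $\Delta g_u(k)\ne c_u$ for all $k$, so $\beta_u(n_u)$ is a singleton. Hence in any PSNE extending the fixed partial profile the action of $u$ is \emph{forced} to be the unique element of $\beta_u(n_u)$, and I set $x_u$ to this value. Thus each of the $2^{|C|}$ guesses extends to exactly one candidate full profile.

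It then remains to verify that this candidate is a PSNE. The players in $I$ are satisfied by construction. For each $v\in C$ I compute $n_v$, the number of its neighbours (in $C$ and in $I$) playing $1$ under the candidate profile, and check whether $x_v\in\beta_v(n_v)$; the profile is a PSNE exactly when every $v\in C$ passes this test. If some guess yields a PSNE I answer \yes, otherwise \no. Correctness follows easily in both directions: any genuine PSNE restricts to some guess on $C$, and by strictness its restriction to $I$ must coincide with the forced values, so it is found by verification; conversely any candidate passing verification satisfies every player and hence is a PSNE.

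For the running time, computing $C$ costs $\OO^*(2^{\vc})$, there are $2^{\vc}$ guesses, and for each guess computing the forced actions on $I$ and verifying all players takes only polynomial time, giving a total of $\OO^*(2^{\vc})$. I expect the main conceptual step — and the one place the strictness hypothesis is indispensable — to be the observation that strict best responses make the extension from $C$ to $I$ unique, collapsing what would otherwise be a search over exponentially many independent-set assignments into a single polynomial-time verification per guess.
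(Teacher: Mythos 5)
Your proposal is correct and follows essentially the same approach as the paper's proof: guess the strategy profile on a minimum vertex cover, use strictness to force the unique best response of every independent-set vertex (whose neighbourhood lies entirely in the cover), and verify the resulting candidate profile, giving $\OO^*(2^{\vc})$ total time. The only cosmetic difference is that you note only the cover vertices need re-checking during verification, whereas the paper verifies the whole profile; this does not change the argument or the running time.
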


\longversion{
\begin{proof}
	Let $(\GG=(\VV,\EE),(g_v)_{v\in\VV},(c_v)_{v\in\VV})$ be any instance of \psne for BNPG games. We compute a minimum vertex cover $\SS\subset\VV$ in time $\OO^*(2^{\vc})$~\cite{CyganFKLMPPS15}. The idea is to brute force on the strategy profile of players in \SS and assign actions of other players based on their best-response functions. For every strategy profile $x_\SS=(x_v)_{v\in\SS}$, we do the following.
	
	\begin{enumerate}
		\item For $w\in\VV\setminus\SS$, let $n_w$ be the number of neighbors of $w$ (they can only be in \SS) who play $1$. We define $x_w = 1$ if  $\Delta g_{w}(n_w)> c_{w}$ and $0$ if $\Delta g_{w}(n_w)< c_{w}$. This is well-defined since $\Delta g_{w}(n_w)\ne c_{w}$ as the game is strict.
		\item If $(x_v)_{v\in \VV}$ forms a PSNE, then output \yes. Otherwise, we discard the current $x_\SS$.
	\end{enumerate}
	If the above the procedure does not output \yes for any $x_\SS$, then we output \no. The correctness of the algorithm is immediate. Since the computation for every guess of $x_\SS$ can be done in polynomial time and the number of such guesses is $2^\SS=2^\vc$, it follows that the running time of our algorithm is $\OO^*(2^\vc)$.
\end{proof}
}




Our next result shows that we can always find a PSNE for additive BNPG games in $\OO(n)$ time. This complements the intractable result for subadditive BNPG games.

\begin{observation}[$\star$]\label{obs:add}
	There exists an $\OO(n)$ time algorithm to find a PSNE in an additive BNPG game.
\end{observation}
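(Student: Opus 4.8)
The plan is to exploit the fact that additivity forces each player's marginal utility to be constant, which makes her best response independent of her neighbours' actions; hence every player has a (weakly) dominant strategy and the dominant-strategy profile is automatically a PSNE.

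First I would note that if $g_w$ is additive then, for every $k\in\NB\cup\{0\}$, additivity gives $g_w(k+1)=g_w(k)+g_w(1)$, so $\Delta g_w(k)=g_w(k+1)-g_w(k)=g_w(1)$. Thus the marginal gain $\Delta g_w(k)$ is a constant independent of $k$. Next, recall that with the neighbourhood count $n_w$ held fixed we have $U_w(1,n_w)-U_w(0,n_w)=\Delta g_w(n_w)-c_w$, so player $w$ weakly prefers to play $1$ exactly when $\Delta g_w(n_w)\ge c_w$. Since $\Delta g_w(n_w)=g_w(1)$ does not depend on $n_w$, a single comparison of $g_w(1)$ against $c_w$ determines $w$'s best response no matter what her neighbours do: $1\in\beta_w(k)$ for all $k$ when $g_w(1)\ge c_w$, and $0\in\beta_w(k)$ for all $k$ when $g_w(1)\le c_w$.

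I would then output the profile with $x_w=1$ if $g_w(1)\ge c_w$ and $x_w=0$ otherwise. By the previous step this is a best response for every $w$ regardless of the others' actions, so no player can profitably deviate and the profile is a PSNE. The algorithm merely evaluates $g_w(1)$ and compares it with $c_w$ for each of the $n$ players, using $O(1)$ work per player, for a total running time of $O(n)$.

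There is essentially no obstacle here: the only point requiring care is the computation $\Delta g_w(k)=g_w(1)$ from additivity, after which the existence of a (weakly) dominant strategy for every player makes the PSNE immediate. This is precisely where the additive case departs from the sub-additive one---there $\Delta g_w$ may genuinely vary with $n_w$, reinstating the strategic coupling between neighbours that underlies the hardness results proved above.
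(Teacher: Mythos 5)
Your proposal is correct and follows exactly the paper's own argument: additivity makes $\Delta g_w(k)=g_w(1)$ constant, so each player has a dominant strategy determined by the single comparison $g_w(1)\ge c_w$, and the resulting profile is a PSNE computable in $\OO(n)$ time. The only difference is that you spell out the intermediate step $U_w(1,n_w)-U_w(0,n_w)=\Delta g_w(n_w)-c_w$ explicitly, which the paper leaves implicit.
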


\longversion{
\begin{proof}
$\forall x\geq 0,\forall i\in[n]$, $g_{v_i}(x+1)-g_{v_i}(x)=g_{v_i}(1)$. This implies for a player $v_i$, the best response doesn't depend on the responses of its neighbours and solely depends on $g_{v_i}(1)$. Hence, if $g_{v_i}(1)\geq c_{v_i}$ then we assign the response of player $v_i$ as 1 and 0 otherwise. This will make sure that no player $v_i$ deviates. So calculating the PSNE takes $\OO(n)$
\end{proof}
}

We next consider circuit rank and distance from complete graph as parameter. These parameters can be thought of distance from tractable instances (namely tree and complete graph). They are defined as follows.

\begin{definition}\label{def:d1}
Let the number of edges and number of vertices in a graph $\mathcal{G}$ be $m$ and $n$ respectively. Then $d_{1}$ (circuit rank) is defined to be $m-n+c$ ($c$ is the number of connected components in the graph) and $d_2$ (distance from complete graph) is defined to be $\frac{n(n-1)}{2}-m$. Note that circuit rank is not the same as feedback arc set.
\end{definition}

Yu et al. presented an algorithm for \psne on trees in \cite{yu2020computing}. It turns out that their algorithm can be appropriately modified to get the following observation.

\begin{observation}\cite{yu2020computing}\label{obs:tree}
	Given a BNPG game on a tree $\TT=(\VV,\EE)$, a subset of vertices $\UU\subseteq\VV$ and a strategy profile $(x_u)_{u\in\UU}\in\{0,1\}^\UU$, there is a polynomial time algorithm for deciding if there exists a PSNE $(y_v)_{v\in\VV}\in\{0,1\}^\VV$ for the BNPG game such that $x_u=y_u$ for every $u\in\UU$.
\end{observation}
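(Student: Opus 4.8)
The plan is to root the tree \TT at an arbitrary vertex $r$ and decide the question by a bottom-up dynamic program over the rooted tree, exactly as in the tree algorithm of Yu et al.~\cite{yu2020computing} but with the extra bookkeeping that the actions of the vertices in \UU are hard-wired. The crucial structural fact is that in a tree the only neighbours of a vertex $v$ are its parent and its children, so once we fix the action of the parent and the number of children playing $1$, we know the total number of neighbours of $v$ playing $1$ and can test membership in the best-response set $\beta_v(\cdot)$. Accordingly, for every vertex $v$, every action $b\in\{0,1\}$ of $v$, and every action $p\in\{0,1\}$ of its parent, I would define $A[v,b,p]$ to be \true iff there is a strategy profile of the subtree rooted at $v$ in which $v$ plays $b$, every vertex of that subtree plays a best response (with the parent of $v$ assumed to play $p$), and every vertex of \UU in the subtree plays its prescribed action. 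The constraint from \UU is folded in trivially: if $v\in\UU$, we set $A[v,b,p]=\false$ whenever $b\ne x_v$.

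Next I would write the recurrence. For a leaf $v$, its only neighbour is the parent, so $A[v,b,p]=\true$ iff $b\in\beta_v(p)$ (and, if $v\in\UU$, also $b=x_v$). For an internal vertex $v$ with children $w_1,\dots,w_k$, I must choose an action $b_i\in\{0,1\}$ for each child so that $A[w_i,b_i,b]=\true$ for all $i$ and so that $v$ itself plays a best response, i.e. $b\in\beta_v(s+p)$, where $s=\sum_i b_i$ is the number of children playing $1$. To aggregate over the children efficiently, I would run an inner subset-sum style dynamic program: scanning the children one at a time while maintaining, for each attainable count $s\in\{0,\dots,k\}$, whether all children seen so far can be made valid (given that $v$ plays $b$) with exactly $s$ of them playing $1$. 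Then $A[v,b,p]$ is \true iff some attainable $s$ additionally satisfies $b\in\beta_v(s+p)$.

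Finally, since the root $r$ has no parent, its neighbours are exactly its children, so I would read off the answer as $A[r,0,0]\vee A[r,1,0]$, interpreting $p=0$ at the root as the (correct) absence of any parent contribution. Correctness follows by a routine induction on subtrees: a profile of the whole tree is a PSNE respecting the fixed actions iff every vertex plays a best response and every \UU-vertex plays its prescribed action, which is precisely what the states encode. For the running time, the inner subset-sum DP at a vertex with $k$ children costs $\OO(k^2)$, and $\sum_v k_v \le n$, so the whole procedure runs in $\OO(n^2)$ time. The only genuinely delicate point---and the part that departs from a one-line tree DP---is the correct accounting of the parent's contribution to each vertex's neighbour count: the best-response test at $v$ must add the parent's action to the count obtained from the children, which is exactly why the state carries the parent action $p$ and why the inner aggregation tracks the \emph{number} of children playing $1$ rather than mere feasibility.
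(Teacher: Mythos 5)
Your proof is correct and takes essentially the same approach as the paper: the paper gives no explicit argument for this observation, merely noting that the tree algorithm of Yu et al.~\cite{yu2020computing} can be ``appropriately modified'' to respect prescribed actions, and your rooted-tree dynamic program over states $(v,b,p)$ with the counting aggregation over children (and the \UU-constraint hard-wired into the states) is precisely such a modification, spelled out with a valid $\OO(n^2)$ running-time bound.
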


Now by using the observation \ref{obs:tree} as a subroutine, we exhibit an \fpt algorithm for the parameter circuit rank.

\begin{theorem}\label{thm:d1}
There is an algorithm running in time $\OO^*(4^{d_1})$ for \psne in BNPG games where $d_1$ is the circuit rank of the input graph. 
\end{theorem}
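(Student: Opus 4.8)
The plan is to use the fact that the circuit rank $d_1=m-n+c$ is exactly the minimum number of edges one must delete from $\GG$ to obtain a spanning forest, and then to combine a branching step over these ``extra'' edges with the tree subroutine from \Cref{obs:tree}. First I would compute (in polynomial time) a spanning forest $T$ of $\GG$, and let $\FF=\EE\setminus E(T)$ be the set of non-tree edges, so that $|\FF|=d_1$. Write $V(\FF)\subseteq\VV$ for the set of endpoints of the edges in $\FF$; note that $|V(\FF)|\le 2d_1$.

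The key idea is that once the strategies of the vertices in $V(\FF)$ are fixed, the contribution of every non-tree edge to the neighbour-count of each of its endpoints becomes a known constant, so the residual game lives on the forest $T$. Accordingly, I would branch over all strategy profiles $(x_v)_{v\in V(\FF)}\in\{0,1\}^{V(\FF)}$; there are $2^{|V(\FF)|}\le 2^{2d_1}=4^{d_1}$ of them. For a fixed such profile and each vertex $w\in V(\FF)$, let $p_w$ denote the number of non-tree neighbours of $w$ that play $1$ under the guess. I would then define, for every $w$, a modified best-response function $\beta_w'(k)=\beta_w(k+p_w)$ (with $p_w=0$ for $w\notin V(\FF)$, so that $\beta_w'=\beta_w$). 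Since $\beta_w'$ still maps to non-empty subsets of $\{0,1\}$, \Cref{lem:best-util} guarantees that it is realised by some utility function, giving a genuine BNPG game on the forest $T$. Finally I would invoke \Cref{obs:tree} on each tree of $T$ with the set $\UU=V(\FF)$ of fixed vertices carrying their guessed strategies, and answer \yes iff some guess admits a consistent PSNE.

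For correctness I would argue both directions. If $(y_v)_{v\in\VV}$ is a PSNE of the original game, taking the guess to be its restriction to $V(\FF)$ makes $p_w$ equal to the true number of non-tree neighbours of $w$ playing $1$; since $y_w\in\beta_w(n_w)$ and $n_w$ splits as (tree-neighbours playing $1$)$+p_w$, the profile $y$ is a PSNE of the modified forest game consistent with the guess. Conversely, any PSNE of the forest game returned for some guess agrees with that guess on $V(\FF)$, so the fixed quantities $p_w$ coincide with the actual non-tree contributions, and $\beta_w'(k)=\beta_w(k+p_w)$ then certifies that every vertex also best-responds in $\GG$. The main point requiring care --- and the only real obstacle --- is precisely this consistency: the guess must simultaneously fix \emph{both} endpoints of every non-tree edge so that $p_w$ is well defined and matches reality, which is why one branches over the endpoints (rather than, say, only one per edge) and why the bound is $4^{d_1}$ rather than $2^{d_1}$. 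The running time is $\OO^*(4^{d_1})$, since each of the $4^{d_1}$ guesses is processed by the polynomial-time subroutine of \Cref{obs:tree}.
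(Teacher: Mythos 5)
Your proposal is correct and follows essentially the same route as the paper's proof: guess the strategies of the (at most $2d_1$) endpoints of non-tree edges, fold the guessed non-tree contributions into shifted incentive functions, and invoke \Cref{obs:tree} with those endpoints fixed, arguing both directions via the identity $n_w = (\text{tree-neighbours playing } 1) + p_w$. The only cosmetic differences are that the paper works component by component with a spanning tree and shifts the utility functions $g_v$ directly (defining $g^t_v(k)=g_v(k+n^t_v)$), whereas you work with a global spanning forest and shift the best-response functions, recovering utilities via \Cref{lem:best-util}; both stay within the $\OO^*(4^{d_1})$ bound.
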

\begin{proof}
	Let $(\GG=(\VV,\EE),(g_v)_{v\in\VV},(c_v)_{v\in\VV})$ be any instance of \psne for BNPG games. Let the graph \GG have $c$ connected components namely, $\GG_1=(\VV_1,\EE_1),\ldots,\GG_c=(\VV_c,\EE_c)$. For every $i\in[c]$, we decide if there exists a PSNE in $\GG_i$; clearly there is a PSNE in \GG if and only if there is a PSNE in $\GG_i$ for every $i\in[n]$. Hence, in the rest of the proof, we focus on the algorithm to decide the existence of a PSNE in $\GG_i$. We compute a minimum spanning tree $\TT_i$ in the connected component $\GG_i$. Let $\EE_i^\pr\subset\EE_i$ be the set of edges which are not part of $\TT_i$; let $|\EE_i^\pr|=d_1^i$ and $\VV_i^\pr=\{v_1^i,v_2^i,\ldots,v_l^i\}\subseteq\VV_i$ be the set of vertices which are endpoints of at least one edge in $\EE_i^\pr$. Of course, we have $|\VV_i^\pr|=\el\leq 2d_1^i$. For every tuple $t=(x_v^\pr)_{v\in\VV_i^\pr}\in\{0,1\}^l$, we do the following.
	
	\begin{enumerate}
		\item For each $v\in\VV_i^\pr$, let $n_v^t$ be the number of neighbours of $v$ in $\GG_i[\EE^\pr_i]$ ( subgraph of $\GG_i$ containing the set of nodes $\VV_i$ and the set of edges $\EE_i^\pr$ ) who play $1$ in $t$. We now define $g^t_v$ for every player $v\in\VV$ as follows.
		
		\[
		g^t_v(k) =\begin{cases}
			g_v(k+n_v^t) & \text{if } v\in\VV_i^\pr\\
			g_v(k) & \text{otherwise}
		\end{cases}
		\]
		
		\item We now decide if there exists a PSNE $(y_v)_{v\in\VV_i}\in\{0,1\}^{\VV_i}$ in the BNPG game $(\TT_i, (g_v^t)_{v\in\VV_i}, (c_v)_{v\in\VV_i})$ such that $y_v=x^\pr_v$ for every $v\in\VV_i^\pr$; this can be done in polynomial time due to \Cref{obs:tree}. If such a PSNE exists, then we output \yes.
	\end{enumerate}

	If we fail to find a PSNE for every choice of tuple $t$, then we output \no. The running time of the above algorithm (for $\GG_i$) is $\OO^*\left(2^{|\VV_i^\pr|}\right)$. Hence the overall running time of our algorithm is $\OO^*\left(\sum_{i=1}^c2^{|\VV_i^\pr|}\right) \leq \OO^*\left(2^{2d_1}\right)=\OO^*\left(4^{d_1}\right)$. We now argue correctness of our algorithm. We observe that it is enough to argue correctness for one component.
	
	In one direction, let $x^*=(x_v^*)_{v\in\VV_i}$ be a PSNE in the BNPG game $(\GG_i, (g_v)_{v\in\VV_i},(c_v)_{v\in\VV_i})$. We now claim that $(x_v^*)_{v\in\VV_i}$ is also a PSNE in the BNPG game on $(\TT_i, (g^t_v)_{v\in\VV_i}, (c_v)_{v\in\VV_i})$ where $t=(x_v^*)_{v\in\VV_i^\pr}$. Let $n_v^{\GG_i}$ and $n_v^{\TT_i}$ be the number of neighbors of $v\in\VV_i$ in $\GG_i$ and $\TT_i$ respectively who play $1$ in $x^*$. With $n^t_v$ defined as above, we have $n^{\GG_i}_v=n^{\TT_i}_v+n_v^t$ for $v\in \VV_i^\pr$ and $n^{\GG_i}_v=n^{\TT_i}_v$ for $v\in\VV_i\setminus\VV_i^\pr$. Hence, we have $\Delta g^t_v(n^{\TT_i}_v)=\Delta g_v(n^{\TT_i}_v+n_v^t)=\Delta g_v(n^{\GG_i}_v)$ for $v\in \VV_i^\pr$ and $\Delta g^t_v(n^{\TT_i}_v)=\Delta g_v(n^{\GG_i}_v)$ for $v\in\VV_i\setminus\VV_i^\pr$. If $x_v^*=1$ where $v\in \VV_i$, then $\Delta g_v(n^{\GG_i}_v)\geq c_v$ and thus we have $\Delta g^t_v(n^{\TT_i}_v)\geq c_v$. Hence, $v$ does not deviate in $\TT_i$. Similarly, if $x_v^*=0$ where $v\in \VV_i$, then $\Delta g_v(n^{\GG_i}_v)\leq c_v$ and thus we have $\Delta g^t_v(n^{\TT_i}_v)\leq c_v$. Hence, $v$ does not deviate in $\TT_i$. Hence $(x_v^*)_{v\in\VV}$ is also a PSNE in BNPG game $(\TT_i, (g^t_v)_{v\in\VV_i}, (c_v)_{v\in\VV_i})$ where $t=(x_v^*)_{v\in\VV_i^\pr}$ (which means our Algorithm returns YES). 
	
	In the other direction, let $(x_v^*)_{v\in\VV_i}$ be the PSNE in BNPG game on $(\TT_i, (g^t_v)_{v\in\VV_i}, (c_v)_{v\in\VV_i})$ where $t=(x_v^*)_{v\in\VV_i^\pr}$ (which means our Algorithm returns YES). We claim that $(x_v^*)_{v\in\VV_i}$ is also a PSNE in BNPG game $(\GG_i, (g_v)_{v\in\VV_i},(c_v)_{v\in\VV_i})$. If $x_v^*=1$ for $v\in \VV_i$, then $\Delta g^t_v(n^{\TT_i}_v)\geq c_v$. This implies that $\Delta g_v(n^{\GG_i}_v)\geq c_v$ and thus $v$ does not deviate in $\GG_i$. Similarly, if $x_v^*=0$ for $v\in \VV_i$, then $\Delta g^t_v(n^{\TT_i}_v)\leq c_v$. This implies that $\Delta g_v(n^{\GG_i}_v)\leq c_v$ and thus $v$ does not deviate in $\GG_i$. Hence $(x_v^*)_{v\in\VV}$ is also a PSNE in BNPG game on $(\GG_i, (g_v)_{v\in\VV_i},(c_v)_{v\in\VV_i})$.
\end{proof}

\longversion{
\begin{proof}
	Let $(\GG=(\VV,\EE),(g_v)_{v\in\VV},(c_v)_{v\in\VV})$ be any instance of \psne for BNPG games. Let the graph \GG have $c$ connected components namely, $\GG_1=(\VV_1,\EE_1),\ldots,\GG_c=(\VV_c,\EE_c)$. For every $i\in[c]$, we compute if there exists a PSNE in $\GG_i$; clearly there is a PSNE in \GG if and only if there is a PSNE in $\GG_i$ for every $i\in[n]$. Hence, in the rest of the proof, we focus on the algorithm to compute a PSNE in $\GG_i$. We compute a minimum spanning tree $\TT_i$ in the connected component $\GG_i$. Let $\EE_i^\pr\subset\EE_i$ be the set of edges which are not part of $\TT_i$; let $|\EE_i^\pr|=d_1^i$ and $\VV_i^\pr=\{v_1^i,v_2^i,\ldots,v_l^i\}\subseteq\VV_i$ be the set of vertices which are endpoints of at least one edge in $\EE_i^\pr$. Of course, we have $|\VV_i^\pr|=\el\leq 2d_1^i$. For every tuple $t=(x_v^\pr)_{v\in\VV_i^\pr}\in\{0,1\}^l$, we do the following.
	
	\begin{enumerate}
		\item For each $v\in\VV_i^\pr$, let $n_v^t$ be the number of neighbours of $v$ in $\GG_i[\EE^\pr_i]$ ( subgraph of $\GG_i$ containing the set of nodes $\VV_i$ and the set of edges $\EE_i^\pr$ ) who play $1$ in $t$. We now define $g^t_v$ for every player $v\in\VV$ as follows.
		
		\[
		g^t_v(k) =\begin{cases}
			g_v(k+n_v^t) & \text{if } v\in\VV_i^\pr\\
			g_v(k) & \text{otherwise}
		\end{cases}
		\]
		
		\item We now compute if there exists a PSNE $(y_v)_{v\in\VV_i}\in\{0,1\}^{\VV_i}$ in the BNPG game $(\TT_i, (g_v^t)_{v\in\VV_i}, (c_v)_{v\in\VV_i})$ such that $y_v=x^\pr_v$ for every $v\in\VV_i^\pr$; this can be done in polynomial time due to \Cref{obs:tree}. If such a PSNE exists, then we output \yes.
	\end{enumerate}

	If we fail to find a PSNE for every choice of tuple $t$, then we output \no. The running time of the above algorithm (for $\GG_i$) is $\OO^*\left(2^{|\VV_i^\pr|}\right)$. Hence the overall running time of our algorithm is $\OO^*\left(\sum_{i=1}^c2^{|\VV_i^\pr|}\right) \leq \OO^*\left(2^{2d_1}\right)=\OO^*\left(4^{d_1}\right)$. We now argue correctness of our algorithm. We observe that it is enough to argue correctness for one component.
	
	In one direction, let $x^*=(x_v^*)_{v\in\VV_i}$ be a PSNE in the BNPG game $(\GG_i, (g_v)_{v\in\VV_i},(c_v)_{v\in\VV_i})$. We now claim that $(x_v^*)_{v\in\VV_i}$ is also a PSNE in the BNPG game on $(\TT_i, (g^t_v)_{v\in\VV_i}, (c_v)_{v\in\VV_i})$ where $t=(x_v^*)_{v\in\VV_i^\pr}$. Let $n_v^{\GG_i}$ and $n_v^{\TT_i}$ be the number of neighbors of $v\in\VV_i$ in $\GG_i$ and $\TT_i$ respectively who play $1$ in $x^*$. With $n^t_v$ defined as above, we have $n^{\GG_i}_v=n^{\TT_i}_v+n_v^t$ for $v\in \VV_i^\pr$ and $n^{\GG_i}_v=n^{\TT_i}_v$ for $v\in\VV_i\setminus\VV_i^\pr$. Hence, we have $\Delta g^t_v(n^{\TT_i}_v)=\Delta g_v(n^{\TT_i}_v+n_v^t)=\Delta g_v(n^{\GG_i}_v)$ for $v\in \VV_i^\pr$ and $\Delta g^t_v(n^{\TT_i}_v)=\Delta g_v(n^{\GG_i}_v)$ for $v\in\VV_i\setminus\VV_i^\pr$. If $x_v^*=1$ where $v\in \VV_i$, then $\Delta g_v(n^{\GG_i}_v)\geq c_v$ and thus we have $\Delta g^t_v(n^{\TT_i}_v)\geq c_v$. Hence, $v$ does not deviate in $\TT_i$. Similarly, if $x_v^*=0$ where $v\in \VV_i$, then $\Delta g_v(n^{\GG_i}_v)\leq c_v$ and thus we have $\Delta g^t_v(n^{\TT_i}_v)\leq c_v$. Hence, $v$ does not deviate in $\TT_i$. Hence $(x_v^*)_{v\in\VV}$ is also a PSNE in BNPG game $(\TT_i, (g^t_v)_{v\in\VV_i}, (c_v)_{v\in\VV_i})$ where $t=(x_v^*)_{v\in\VV_i^\pr}$ (which means our Algorithm returns YES). 
	
	In the other direction, let $(x_v^*)_{v\in\VV_i}$ be the PSNE in BNPG game on $(\TT_i, (g^t_v)_{v\in\VV_i}, (c_v)_{v\in\VV_i})$ where $t=(x_v^*)_{v\in\VV_i^\pr}$ (which means our Algorithm returns YES). We claim that $(x_v^*)_{v\in\VV_i}$ is also a PSNE in BNPG game $(\GG_i, (g_v)_{v\in\VV_i},(c_v)_{v\in\VV_i})$. If $x_v^*=1$ for $v\in \VV_i$, then $\Delta g^t_v(n^{\TT_i}_v)\geq c_v$. This implies that $\Delta g_v(n^{\GG_i}_v)\geq c_v$ and thus $v$ does not deviate in $\TT_i$. Similarly, if $x_v^*=0$ for $v\in \VV_i$, then $\Delta g^t_v(n^{\TT_i}_v)\leq c_v$. This implies that $\Delta g_v(n^{\GG_i}_v)\leq c_v$ and thus $v$ does not deviate in $\TT_i$. Hence $(x_v^*)_{v\in\VV}$ is also a PSNE in BNPG game on $(\GG_i, (g_v)_{v\in\VV_i},(c_v)_{v\in\VV_i})$.
\end{proof}
}

Yu et al. presented an algorithm for \psne on complete graphs in \cite{yu2020computing}. It turns out that their algorithm can be appropriately modified to get the following observation.
\begin{observation}\cite{yu2020computing}\label{obs:complete}
	Given a BNPG game on a complete graph $\GG=(\VV,\EE)$, and an integer $k$, there is a polynomial time algorithm for deciding if there exists a PSNE where exactly $k$ players play $1$ and returns such a PSNE if it exists.
\end{observation}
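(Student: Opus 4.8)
The plan is to exploit the fact that, on a complete graph, the best response of any player depends only on the \emph{number} of players choosing $1$, and not on their identities. Concretely, fix the target $k$ and suppose a strategy profile has exactly $k$ players playing $1$. If a player $w$ plays $1$, then among her $n-1$ neighbours exactly $k-1$ play $1$, so $n_w = k-1$; if $w$ plays $0$, then all $k$ ones lie in her neighbourhood, so $n_w = k$. Hence $w$ has no incentive to deviate precisely when either (i) $w$ plays $1$ and $1 \in \beta_w(k-1)$ (equivalently $\Delta g_w(k-1) \ge c_w$), or (ii) $w$ plays $0$ and $0 \in \beta_w(k)$ (equivalently $\Delta g_w(k) \le c_w$). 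Crucially, these two membership tests depend only on $k$, so whether a given player is content playing $1$ (resp.\ $0$) in \emph{any} profile with exactly $k$ ones is a property of $w$ and $k$ alone.

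First I would, for the given $k$, classify every vertex $w \in \VV$ using the two tests above: let $A = \{w : 1 \in \beta_w(k-1)\}$ be those that can play $1$ and $B = \{w : 0 \in \beta_w(k)\}$ those that can play $0$. A player in $\VV \setminus B$ is \emph{forced} into the support $S$ of ones, while a player in $\VV \setminus A$ is \emph{forbidden} from $S$; if some player lies in $(\VV \setminus A) \cap (\VV \setminus B)$, i.e.\ can neither play $0$ nor $1$ at this value of $k$, then no PSNE with exactly $k$ ones exists and I output \no. Otherwise the forced set $M := \VV \setminus B$ satisfies $M \subseteq A$, and it is disjoint from the flexible set $R := A \cap B$. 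One checks directly that a set $S$ with $|S|=k$ yields such a PSNE if and only if $M \subseteq S \subseteq M \cup R$: the inclusion $M \subseteq S$ guarantees $\VV \setminus S \subseteq B$ (everyone outside $S$ is content playing $0$), while $S \subseteq M \cup R \subseteq A$ guarantees everyone in $S$ is content playing $1$.

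The existence of such an $S$ then reduces to a one-line interval check: it is feasible exactly when $|M| \le k \le |M| + |R|$, in which case I take $S = M$ together with any $k - |M|$ vertices of $R$, set the players in $S$ to $1$ and the rest to $0$, and return this profile, which the preceding discussion certifies to be a PSNE. The boundary cases $k = 0$ (everyone plays $0$, requiring $0 \in \beta_w(0)$ for all $w$) and $k = n$ (everyone plays $1$, requiring $1 \in \beta_w(n-1)$ for all $w$) fit the same framework, and all arguments $k-1$ and $k$ that arise stay within the domain $\{0,\dots,n-1\}$ of the $\beta_w$. Evaluating the $2n$ best-response predicates and performing the counting check all run in polynomial time. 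The only real insight, and the single modification of the algorithm of Yu et al., is the reduction of the per-player deviation condition to a count-only predicate; once that is in hand the problem collapses to the trivial feasibility question above, so I do not expect a serious obstacle beyond bookkeeping of the boundary values.
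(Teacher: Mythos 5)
Your proposal is correct. Note that the paper itself gives no proof of \Cref{obs:complete}: it simply cites Yu et al.\ and remarks that their complete-graph algorithm ``can be appropriately modified,'' so there is no in-paper argument to compare against line by line. Your argument supplies exactly the missing content, and it rests on the same insight the paper uses elsewhere when it handles complete graphs (Part~2 of the proof of \Cref{thm:rest}): on a complete graph, anonymity makes the deviation test count-only, i.e.\ a player choosing $1$ sees $n_w=k-1$ and a player choosing $0$ sees $n_w=k$, so contentment reduces to $\Delta g_w(k-1)\ge c_w$ or $\Delta g_w(k)\le c_w$ respectively. Where the paper's \Cref{thm:rest} argument only needs the homogeneous case (a single threshold $k$ works for all players), your forced set $M=\VV\setminus B$, flexible set $R=A\cap B$, and the feasibility test $M\subseteq A$ together with $|M|\le k\le |M|+|R|$ is the correct heterogeneous generalization, which is precisely what \Cref{obs:complete} (and its use inside \Cref{thm:d2}, where the reduced games $(g^\pr_v)$ are heterogeneous) requires. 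The boundary cases $k=0$ and $k=n$ are handled properly since the corresponding test sets are vacuous, and the whole procedure is a linear number of $\Delta g_w$ evaluations, so the polynomial (indeed linear-time) bound is immediate.
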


Now by using the observation \ref{obs:complete} as a subroutine, we exhibit an \fpt algorithm for the parameter distance from complete graph.

\begin{theorem}\label{thm:d2}
There is an algorithm running in time $\OO^*(4^{d_2})$ for \psne in BNPG games where $d_2$ is the distance from complete graph.
\end{theorem}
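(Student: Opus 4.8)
The plan is to mirror the strategy of \Cref{thm:d1}, using \Cref{obs:complete} as the subroutine in place of \Cref{obs:tree}. Write $\ov\EE$ for the set of the $d_2$ edges missing from the complete graph, and let $\VV^\pr$ be the set of endpoints of these missing edges, so that $|\VV^\pr|\le 2d_2$. The crucial structural observation is that every missing edge has both endpoints in $\VV^\pr$; hence every vertex $w\in\VV\setminus\VV^\pr$ is adjacent to all of the other $n-1$ vertices, i.e.\ it is universal. Consequently the induced subgraph $\GG[\VV\setminus\VV^\pr]$ is itself a complete graph, and the number of neighbours of such a $w$ playing $1$ depends only on the total number of players playing $1$ (and on $w$'s own action).

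First I would brute force over all $2^{|\VV^\pr|}\le 4^{d_2}$ strategy profiles $x_{\VV^\pr}=(x_v)_{v\in\VV^\pr}$ of the endpoints. Fix one such profile and let $p$ be the number of its vertices playing $1$. For a vertex $w\in\VV\setminus\VV^\pr$ that sees $j$ players in $(\VV\setminus\VV^\pr)\setminus\{w\}$ playing $1$, its true count of neighbours playing $1$ in \GG is exactly $j+p$, since $w$ is adjacent to all of $\VV^\pr$. I therefore define a shifted best-response function $\beta^\pr_w(j)=\beta_w(j+p)$ on the complete graph $\GG[\VV\setminus\VV^\pr]$; by \Cref{lem:best-util} this is realizable by some $(g,c)$ in polynomial time. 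A subgraph profile in which all vertices of $\VV\setminus\VV^\pr$ best-respond with respect to $\beta^\pr$ is precisely one in which they all best-respond in the full game given $x_{\VV^\pr}$.

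The remaining coupling is that whether a vertex $v\in\VV^\pr$ wants to deviate depends on the global number of $1$'s. To decouple this, I would additionally iterate over $k\in\{0,\dots,|\VV\setminus\VV^\pr|\}$, the number of vertices of $\VV\setminus\VV^\pr$ playing $1$, so that the total count is $T=k+p$. For the fixed profile, each $v\in\VV^\pr$ has a determined number $q_v$ of missing-neighbours (all lying in $\VV^\pr$) playing $1$, so its neighbour count in \GG equals $n_v=T-x_v-q_v$; I simply check that $x_v\in\beta_v(T-x_v-q_v)$ for every $v\in\VV^\pr$. If all of these checks pass, I invoke \Cref{obs:complete} on $\GG[\VV\setminus\VV^\pr]$ with the functions $\beta^\pr$ and target $k$ to decide whether there is a PSNE of this subgraph in which exactly $k$ players play $1$. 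I output \yes as soon as some choice of profile and $k$ passes both tests, and \no otherwise; note that the degenerate case $\VV\setminus\VV^\pr=\emptyset$ (i.e.\ $n\le 2d_2$) is handled automatically, since then only $k=0$ is considered and the procedure just checks whether the guessed profile is itself a PSNE.

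Correctness follows because a PSNE of \GG restricts to a profile on $\VV^\pr$ together with the induced value of $k$ that passes both tests, and conversely any profile/$k$ passing both tests glues the fixed $x_{\VV^\pr}$ with the subgraph PSNE returned by \Cref{obs:complete} into a genuine PSNE of \GG. For the running time, there are at most $4^{d_2}$ profiles, at most $n$ values of $k$, and each consistency check together with the call to \Cref{obs:complete} runs in polynomial time, giving the claimed $\OO^*(4^{d_2})$ bound. The main obstacle to get right is exactly the interdependence through the global count $T$: both the universal vertices and the endpoints react to the same total, and the proposed iteration over $k$ (equivalently $T$) combined with the shift of the best-response functions by $p$ is what cleanly separates the ``complete-graph part'' solvable by \Cref{obs:complete} from the bounded-size part handled by brute force.
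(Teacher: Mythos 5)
Your proposal is correct and follows essentially the same route as the paper's proof: guess the strategy profile on the endpoints of missing edges, exploit that $\GG[\VV\setminus\VV^\pr]$ is complete and that universal vertices see all $p$ ones in $\VV^\pr$, shift the utilities/best responses accordingly, iterate over the count $k$, and invoke \Cref{obs:complete}. The only cosmetic differences are that you verify the $\VV^\pr$-vertices' best-response conditions up front via the count formula $n_v = T - x_v - q_v$ (rather than checking the glued profile is a PSNE after the subroutine returns, as the paper does — both are valid for the same structural reason), and that you realize the shifted best responses via \Cref{lem:best-util} instead of shifting the $g_v$ functions directly.
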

\begin{proof}
	Let $(\GG=(\VV,\EE),(g_v)_{v\in\VV},(c_v)_{v\in\VV})$ be any instance of \psne for BNPG games. If $d_2\geq \frac{n}{2}$, then iterating over all possible strategy profiles takes time $\OO^*(2^n)\leq\OO^*(4^{d_2})$. So allow us to assume for the rest of the proof that $d_2<\frac{n}{2}$. Let us define $\VV^\pr=\{u\in\VV: \exists v\in\VV, v\ne u, \{u,v\}\notin\EE\}$; we have $|\VV^\pr|\le 2d_2$. 
	
	For every strategy profile $y=(y_u)_{u\in\VV^\pr}$, we do the following. For each $v\in\VV\setminus\VV^\pr$, let $n_v^\pr$ be the number of neighbors of $v$ in $\VV^\pr$ who play $1$ in $y$. We now define $g^\pr_v(\el)=g_v(\el+n_v^\pr)$ for every $\el\in\NB\cup\{0\}$ and every player $v\in\VV\setminus\VV^\pr$. For every $k\in\{0,\ldots,|\VV\setminus\VV^\pr|\}$, we decide (using the algorithm in \Cref{obs:complete}) if there exists a PSNE $x^k=(x_v^k)_{v\in\VV\setminus\VV^\pr}$ in the BNPG game $(\GG[\VV\setminus\VV^\pr], (g^\pr_v)_{v\in\VV\setminus\VV^\pr}, (c_v)_{v\in\VV\setminus\VV^\pr})$ where exactly $k$ players play $1$. If $x^k$ exists, then we output \yes if $((y_u)_{u\in\VV^\pr},(x^k_v)_{v\in\VV\setminus\VV^\pr})$ forms a PSNE in the BNPG game $(\GG=(\VV,\EE),(g_v)_{v\in\VV},(c_v)_{v\in\VV})$.
	
	If the above procedure fails to find a PSNE, then we output \no. The running time of the above algorithm is $\OO^*\left(2^{|\VV^\pr|}\right)\leq\OO^*\left(4^{d_2}\right)$. We now argue correctness\longversion{ of our algorithm}.
	
	Clearly, if the algorithm outputs \yes, then there exists a PSNE for the input game. On the other hand, if there exists a PSNE $((y_u)_{u\in\VV^\pr},(x_v)_{v\in\VV\setminus\VV^\pr})\in\{0,1\}^\VV$ in the input game, then let us consider the iteration of our algorithm with the guess $(y_u)_{u\in\VV^\pr}$. Let the number of players playing $1$ in $(x_v)_{v\in\VV\setminus\VV^\pr}$ be $k$. If $x_v=1$ where $v\in\VV\setminus\VV^\pr$, then $\Delta g_v(n_v^\pr+k-1)\geq c_v$ and thus we have $\Delta g^\pr_v(k-1)\geq c_v$. Similarly, if $x_v=0$ where $v\in\VV\setminus\VV^\pr$, then $\Delta g_v(n_v^\pr+k)\leq c_v$ and thus we have $\Delta g^\pr_v(k)\leq c_v$. Hence, we observe that $(x_v)_{v\in\VV\setminus\VV^\pr}$ forms a PSNE in the BNPG game $(\GG[\VV\setminus\VV^\pr], (g^\pr_v)_{v\in\VV\setminus\VV^\pr}, (c_v)_{v\in\VV\setminus\VV^\pr})$. Let $(x_v^\pr)_{v\in\VV\setminus\VV^\pr}$ be the PSNE of the BNPG game $(\GG[\VV\setminus\VV^\pr], (g^\pr_v)_{v\in\VV\setminus\VV^\pr}, (c_v)_{v\in\VV\setminus\VV^\pr})$ where exactly $k$ players play $1$ returned by the  algorithm in \Cref{obs:complete}. We observe that every player in $\VV^\pr$ has the same number of neighbors playing $1$ in both the strategy profiles $((y_u)_{u\in\VV^\pr},(x_v)_{v\in\VV\setminus\VV^\pr})$ and $((y_u)_{u\in\VV^\pr},(x_v^\pr)_{v\in\VV\setminus\VV^\pr})$. So no player in $\VV^\pr$ will deviate in the strategy profile $((y_u)_{u\in\VV^\pr},(x_v^\pr)_{v\in\VV\setminus\VV^\pr})$. If $x_v^\pr=1$ where $v\in\VV\setminus\VV^\pr$, then $\Delta g^\pr_v(k-1)\geq c_v$ and thus we have $\Delta g_v(n_v^\pr+k-1)\geq c_v$. Hence, $v$ does not deviate in the strategy profile $((y_u)_{u\in\VV^\pr},(x_v^\pr)_{v\in\VV\setminus\VV^\pr})$. Similarly, if $x_v^\pr=0$ where $v\in\VV\setminus\VV^\pr$, then $\Delta g^\pr_v(k)\leq c_v$ and thus we have $\Delta g_v(n_v^\pr+k)\leq c_v$. Hence, $v$ does not deviate in the strategy profile $((y_u)_{u\in\VV^\pr},(x_v^\pr)_{v\in\VV\setminus\VV^\pr})$. Hence, $((y_u)_{u\in\VV^\pr},(x_v^\pr)_{v\in\VV\setminus\VV^\pr})$ also forms a PSNE in the input BNPG game and thus the algorithm outputs \yes. This concludes the correctness of our algorithm.
\end{proof}

\longversion{
\begin{proof}
	Let $(\GG=(\VV,\EE),(g_v)_{v\in\VV},(c_v)_{v\in\VV})$ be any instance of \psne for BNPG games. If $d_2\geq \frac{n}{2}$, then iterating over all possible strategy profiles takes time $\OO^*(2^n)\leq\OO^*(4^{d_2})$. So allow us to assume for the rest of the proof that $d_2<\frac{n}{2}$. Let us define $\VV^\pr=\{u\in\VV: \exists v\in\VV, v\ne u, \{u,v\}\notin\EE\}$; we have $|\VV^\pr|\le 2d_2$. 
	
	For every strategy profile $y=(y_u)_{u\in\VV^\pr}$, we do the following. For each $v\in\VV\setminus\VV^\pr$, let $n_v^\pr$ be the number of neighbors of $v$ in $\VV^\pr$ who play $1$ in $y$. We now define $g^\pr_v(\el)=g_v(\el+n_v^\pr)$ for every $\el\in\NB\cup\{0\}$ and every player $v\in\VV\setminus\VV^\pr$. For every $k\in\{0,\ldots,|\VV\setminus\VV^\pr|\}$, we compute (using the algorithm in \Cref{obs:complete}) if there exists a PSNE $x^k=(x_v^k)_{v\in\VV\setminus\VV^\pr}$ in the BNPG game $(\GG[\VV\setminus\VV^\pr], (g^\pr_v)_{v\in\VV\setminus\VV^\pr}, (c_v)_{v\in\VV\setminus\VV^\pr})$ where exactly $k$ players play $1$. If $x^k$ exists, then we output \yes if $((y_u)_{u\in\VV^\pr},(x^k_v)_{v\in\VV\setminus\VV^\pr})$ forms a PSNE in the BNPG game $(\GG=(\VV,\EE),(g_v)_{v\in\VV},(c_v)_{v\in\VV})$.
	
	If the above procedure fails to find a PSNE, then we output \no. The running time of the above algorithm is $\OO^*\left(2^{|\VV^\pr|}\right)\leq\OO^*\left(4^{d_2}\right)$. We now argue correctness\longversion{ of our algorithm}.
	
	Clearly, if the algorithm outputs \yes, then there exists a PSNE for the input game. On the other hand, if there exists a PSNE $((y_u)_{u\in\VV^\pr},(x_v)_{v\in\VV\setminus\VV^\pr})\in\{0,1\}^\VV$ in the input game, then let us consider the iteration of our algorithm with the guess $(y_u)_{u\in\VV^\pr}$. Let the number of players playing $1$ in $(x_v)_{v\in\VV\setminus\VV^\pr}$ be $k$. If $x_v=1$ where $v\in\VV\setminus\VV^\pr$, then $\Delta g_v(n_v^\pr+k-1)\geq c_v$ and thus we have $\Delta g^\pr_v(k-1)\geq c_v$. Similarly, if $x_v=0$ where $v\in\VV\setminus\VV^\pr$, then $\Delta g_v(n_v^\pr+k)\leq c_v$ and thus we have $\Delta g^\pr_v(k)\leq c_v$. Hence we observe that $(x_v)_{v\in\VV\setminus\VV^\pr}$ forms a PSNE in the BNPG game $(\GG[\VV\setminus\VV^\pr], (g^\pr_v)_{v\in\VV\setminus\VV^\pr}, (c_v)_{v\in\VV\setminus\VV^\pr})$. Let $(x_v^\pr)_{v\in\VV\setminus\VV^\pr}$ be the PSNE of the BNPG game $(\GG[\VV\setminus\VV^\pr], (g^\pr_v)_{v\in\VV\setminus\VV^\pr}, (c_v)_{v\in\VV\setminus\VV^\pr})$ where exactly $k$ players play $1$ returned by the  algorithm in \Cref{obs:complete}. We observe that every player in $\VV^\pr$ has the same number of neighbors playing $1$ in both the strategy profiles $((y_u)_{u\in\VV^\pr},(x_v)_{v\in\VV\setminus\VV^\pr})$ and $((y_u)_{u\in\VV^\pr},(x_v^\pr)_{v\in\VV\setminus\VV^\pr})$. So no player in $\VV^\pr$ will deviate in the strategy profile $((y_u)_{u\in\VV^\pr},(x_v^\pr)_{v\in\VV\setminus\VV^\pr})$. If $x_v^\pr=1$ where $v\in\VV\setminus\VV^\pr$, then $\Delta g^\pr_v(k-1)\geq c_v$ and thus we have $\Delta g_v(n_v^\pr+k-1)\geq c_v$. Hence, $v$ does not deviate in the strategy profile $((y_u)_{u\in\VV^\pr},(x_v^\pr)_{v\in\VV\setminus\VV^\pr})$. Similarly, if $x_v^\pr=0$ where $v\in\VV\setminus\VV^\pr$, then $\Delta g^\pr_v(k)\leq c_v$ and thus we have $\Delta g_v(n_v^\pr+k)\leq c_v$. Hence, $v$ does not deviate in the strategy profile $((y_u)_{u\in\VV^\pr},(x_v^\pr)_{v\in\VV\setminus\VV^\pr})$. Hence, $((y_u)_{u\in\VV^\pr},(x_v^\pr)_{v\in\VV\setminus\VV^\pr})$ also forms a PSNE in the input BNPG game and thus the algorithm outputs \yes. This concludes the correctness of our algorithm.
\end{proof}

}


We finally show that a PSNE always exists for fully homogeneous BNPG games for some important graph classes and such a PSNE can be found in $\OO(n)$ time.

\shortversion{
\begin{theorem}[$\star$]\label{thm:rest}
	There is always a PSNE in a fully homogeneous BNPG game for paths, complete graphs, cycles, and bi-cliques. Moreover, we can find a PSNE in $\OO(n)$ time.
\end{theorem}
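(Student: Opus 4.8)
The plan is to exploit the fact that in a fully homogeneous game all players share a single best-response function $\beta$, where by definition $1\in\beta(k)\iff\Delta g(k)\ge c$ and $0\in\beta(k)\iff\Delta g(k)\le c$. Hence a profile is a PSNE exactly when every player playing $1$ sees a number $k$ of participating neighbours with $\Delta g(k)\ge c$, and every player playing $0$ sees $k$ with $\Delta g(k)\le c$. For paths and cycles every $n_v\in\{0,1,2\}$, so the whole analysis collapses to the three comparisons of $\Delta g(0),\Delta g(1),\Delta g(2)$ against $c$; for the complete graph and the bi-clique it collapses to counting arguments. In each construction I inspect $\OO(1)$ values of $\Delta g$ and then write down an explicit profile in $\OO(n)$ time, which also gives the running time claimed.

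For the complete graph $K_n$, if exactly $j$ players play $1$ then a participating player sees $j-1$ ones and a non-participating player sees $j$, so I look for $j\in\{0,\dots,n\}$ with $\Delta g(j-1)\ge c$ (when $j>0$) and $\Delta g(j)\le c$ (when $j<n$). If $\Delta g(0)\le c$ I take $j=0$; otherwise I take $j$ to be one more than the largest index $i$ with $\Delta g(i)\ge c$, which yields either $j=n$ (all play $1$) or an interior $j$ satisfying $\Delta g(j-1)\ge c>\Delta g(j)$. For the bi-clique $K_{a,b}$, every $A$-vertex sees the number $q$ of participating $B$-vertices and symmetrically, so a profile is determined up to the pair $(p,q)$ of participation counts; I will show that one of the four corner profiles $(p,q)\in\{(0,0),(a,0),(0,b),(a,b)\}$ is always a PSNE, selected by a short branching on the signs of $\Delta g(0),\Delta g(a),\Delta g(b)$ relative to $c$ (e.g. $\Delta g(0)\le c$ gives $(0,0)$, else $\Delta g(a)\le c$ gives $(a,0)$, and so on).

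For the cycle (all degrees $2$): if $\Delta g(0)\le c$ the all-$0$ profile works, and if $\Delta g(2)\ge c$ the all-$1$ profile works; otherwise $\Delta g(0)>c>\Delta g(2)$, and I use the alternating profile on an even cycle (each $1$ sees $0$ ones, each $0$ sees $2$), while on an odd cycle, where alternation must leave one defect edge, I insert either a $11$- or a $00$-defect according to whether $\Delta g(1)\ge c$ or $\Delta g(1)\le c$ (one always holds). The path is handled in the same spirit, and it is the main obstacle: after disposing of the all-$0$ case, the residual cases include non-monotone best responses such as $\Delta g(0)>c$, $\Delta g(1)<c$, $\Delta g(2)>c$, for which neither all-$1$ nor any alternating pattern is stable. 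There every $0$ is forced to have exactly one participating neighbour and every $1$ to have $0$ or $2$, which pins the profile down to the period-$3$ pattern $1,0,0,1,0,0,\dots$ (its complement $1,1,0,1,1,0,\dots$ covers the dual case), and I must patch the two endpoints according to $n\bmod 3$ to close it off correctly. The bulk of the write-up is therefore a finite case analysis on $(\Delta g(0),\Delta g(1),\Delta g(2))$ versus $c$, exhibiting for each case one of the candidate profiles (all-$0$, all-$1$, and the two period-$3$ patterns with endpoint corrections) and verifying stability, including the small-$n$ boundary instances.
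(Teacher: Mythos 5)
Your proposal is correct and takes essentially the same route as the paper's proof: a constant-size case analysis on $\Delta g(0),\Delta g(1),\Delta g(2)$ versus $c$ (equivalently, the shared best-response sets), with explicit equilibrium profiles --- all-$0$, all-$1$, the threshold count $k$ with $\Delta g(k-1)\ge c\ge\Delta g(k)$ for $K_n$, corner profiles for the bi-clique, alternating-with-defect profiles for cycles, and period-$3$ profiles with endpoint/phase corrections for paths --- matching the paper's constructions case by case. The only differences are cosmetic: your bi-clique branching on $\Delta g(0),\Delta g(a),\Delta g(b)$ avoids the paper's split on whether the two sides have equal size, the paper uses period-$2$ (alternating) profiles for two of the path cases where you use period-$3$ ones, and the endpoint-patching details you defer are routine and are written out in full in the paper.
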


}

\longversion{
\begin{theorem}\label{thm:rest}
	There is always a PSNE in a fully homogeneous BNPG game for paths, complete graphs, cycles, and bi-cliques. Moreover, we can find a PSNE in $\OO(n)$ time.
\end{theorem}
\begin{proof}
We divide the proof into 4 parts:

\textbf{Part 1- Path:} Let the set of vertices in the input path \PP be $\VV=\{v_1,\ldots,v_n\}$ and the set of edges $\EE=\{\{v_i,v_{i+1}\}:i\in[n-1]\}$. Note that the possible values of $n_v$ for any vertex $v$ in \PP is $0, 1$ and $2$. We show that there is a PSNE in \PP for all possible best response strategies. Let $S_i:=\beta(i)$ (since the game is fully homogeneous, the best-response function is the same for all players) be the set of best responses of a player $v$ if $n_v=i$. Let $x_v$ be the response of a player $v\in \VV$. 

\begin{itemize}
	
	\item If $0\in S_0$, then $(x_v=0)_{v\in\VV}$ forms a PSNE as clearly no player would deviate as $n_v=0$ for every player. So, allow us to assume for the rest of the proof that $S_0=\{1\}$.
	
	\item If $1 \in S_i, \forall i\in\{1,2\}$, then $(x_v=1)_{v\in\VV}$ forms a PSNE as the best response is 1 irrespective of the value of $n_v$ and hence no player would deviate. So, allow us to assume that we have either $1\notin S_1$ or $1\notin S_2$.
	
	\item If $0 \in S_1, 0 \in S_2$, then $((x_{v_i}=0)_{i \equiv1\pmod 2}, (x_{v_i}=1)_{i \equiv0\pmod 2})$ forms a PSNE.  If $i$ is an odd integer then $n_{v_i}>0$ and in this case one of the best responses is 0 and hence $v_i$ does not deviate.  If $i$ is an even integer then $n_{v_i}=0$ and in this case one of the best responses is $1$ (recall $S_0=\{1\}$) and hence $v_i$ does not deviate.
	
	\item If $1 \in S_1, 0 \in S_2$, then $((x_{v_i}=1)_{i \equiv1\pmod 2, i\ne n}, (x_{v_i}=0)_{i \equiv0\pmod 2,i\ne n}, x_{v_n}=1)$ forms a PSNE. If $i$ is an odd integer and not equal to $n$, then $n_{v_i}\leq 1$ and in this case one of the best responses  is 1 and hence $v_i$ does not deviate. If $i$ is an even number and not equal to $n$ then $n_{v_i}=2$ and in this case one of the best responses  is 0 and hence $v_i$ does not deviate. Note that $n_{v_n}\leq 1$ and hence in this case one of the best responses  is 1 and hence $v_n$ does not deviate.
	\item If $0 \in S_1, 1 \in S_2$. In this we have two sub-cases:
	\begin{itemize}
		\item \textbf{n is a multiple of $3$}: In this sub-case, $((x_{v_i}=0)_{i \not\equiv 2\pmod 3}, (x_{v_i}=1)_{i \equiv2\pmod 3})$ forms a PSNE. If we have $i \not\equiv 2 \pmod 3$, then $n_{v_i}=1$ and in this case, $0$ is a best response and hence $v_i$ does not deviate. If $i\equiv2\pmod 3$, then $n_{v_i}=0$ and in this case one of the best responses is $1$ and hence $v_i$ does not deviate. 
		\item \textbf{n is not a multiple of 3}: In this sub-case, $((x_{v_i}=0)_{i \not\equiv 1\pmod 3}, (x_{v_i}=1)_{i \equiv1\pmod 3})$ forms a PSNE.  If $i \not\equiv 1\pmod 3$, then $n_{v_i}=1$ and in this case one of the best responses is $0$ and hence $v_i$ does not deviate. If $i \equiv 1\pmod 3$, then $n_{v_i}=0$ and in this case one of the best responses is 1 and hence $v_i$ does not deviate. 
	\end{itemize}
\end{itemize} 
Since we have a PSNE for every possible best-response function, we conclude that there is always a PSNE in a fully homogeneous BNPG game on paths. Also, we can find a PSNE in paths in $\OO(n)$ time.

\textbf{Part 2- Complete graph:} We assume that the input graph \GG(\VV,\EE) is a complete graph.Let the utility function for all the players $v\in \VV$ be $U(x_v,n_v)=g(x_v+n_v)-c\cdot x_v$. If $\Delta g(n-1)\geq c$, then $(x_v=1)_{v\in \VV}$ is a PSNE. If $\Delta g(0)\leq c$, then $(x_v=0)_{v\in \VV}$ is a PSNE. If $\Delta g(n-1)< c$ and $\Delta g(0)> c$, then there should exist a $0<k\leq n-1$ such that $\Delta g(k)\leq c$ and $\Delta g(k-1)\geq c$ otherwise both $\Delta g(n-1)< c$ and $\Delta g(0)> c$ can't simultaneously hold true. Now we claim that if there exists a $0<k\leq n-1$ such that $\Delta g(k)\leq c$ and $\Delta g(k-1)\geq c$, then choosing any $k$ players and making their response 1 and rest of players response as $0$ would be PSNE. Any player $w$ whose response is $1$ has $n_{w}=k-1$ and since  $\Delta g(k-1)\geq c$, $w$ does not have any incentive to deviate.  Similarly any player $w^\pr$ whose response is $0$ has $n_{w^\pr}=k$ and since  $\Delta g(k)\leq c$, $w^\pr$ does not have any incentive to deviate. This concludes the proof of the theorem as we showed that there is a PSNE in all possible cases. Also clearly we can find a PSNE in \GG in $\OO(n)$ time.

\textbf{Part 3- Cycles:} We assume that the input graph is a Cycle. Let the set of vertices in the input cycle \CC be $\VV=\{v_1,\ldots,v_n\}$ and the set of edges $\EE=\{\{v_i,v_{i+1}\}:i\in[n-1]\}\cup \{v_n,v_1\}$. Note that the possible values of $n_v$(number of neighbours of $v$ choosing 1) for any vertex $v \in \CC$ is $0, 1,$ and $2$. We show that there is a PSNE in \CC for all possible best response strategies. Let $S_i:=\beta(i)$ (since the game is fully homogeneous, the best-response function is the same for all players) denote the set of best responses of a player $v$ if $n_v=i$. Let $x_v$ denote the response of a player $v\in \VV$. 
\begin{itemize}
\item If $0\in S_0$ then $x_v=0$ for every player $v \in \VV$ forms a PSNE as clearly no player would deviate as $n_v=0$ for every player. So, allow us to assume that $S_0=\{1\}$ in the rest of the proof.
\item If $1 \in S_2$, then $x_v=1$ for every player $v$ in \VV forms a PSNE as clearly no player would deviate as $n_v=2$ for every player. So, allow us to assume that $S_2=\{0\}$ in the rest of the proof.
\item If $0 \in S_1$, then $((x_{v_i}=0)_{i \equiv1\pmod 2},(x_{v_i}=1)_{i \equiv0\pmod 2})$ forms a PSNE.  If $i$ is odd then $n_{v_i}>0$ and in this case one of the best responses is $0$ and hence $v_i$ does not deviate.  If $i$ is even number then $n_{v_i}=0$ and in this case one of the best responses is $1$ (recall, we have $S_0=\{1\}$) and hence $v_i$ does not deviate.
\item If $1\in S_1$, then $((x_{v_i}=1)_{i \equiv1\pmod 2}, (x_{v_i}=0)_{i \equiv0\pmod 2})$ forms a PSNE. If $i$ is an odd number then $n_{v_i}\leq 1$ and in this case one of the best responses  is $1$ and hence $v_i$ does not deviate. If $i$ is an even number and then $n_{v_i}=2$ and in this case one of the best responses  is 0 and hence $v_i$ does not deviate.
\end{itemize} 
Since we have a PSNE for every possible best-response functions, we conclude that there is always a PSNE in a fully homogeneous BNPG game on cycles. Also, we can find a PSNE in cycles in $\OO(n)$ time.

\textbf{Part 4- Bicliques:} Let the input graph $\GG=(\VV,\EE)$ be a biclique; \VV is partitioned into $2$ sets namely $\VV_1=\{u_1,\ldots,u_{n_1}\}$ and $\VV_2=\{v_1,\ldots,v_{n_2}\}$ where $n_1+n_2=n$ and $\EE=\{(u_i,v_j):i\in[n_1],j\in[n_2]\}$. We show that there is a PSNE in \CC for all possible best response strategies. Let $S_i:=\beta(i)$ (since the game is fully homogeneous, the best-response function is the same for all players) denote the set of best responses of a player $v$ if $n_v=i$. Let $x_v$ denote the response of a player $v\in \VV$. 
\begin{enumerate}
	
\item \textbf{If $n_1=n_2$}: For this case we have the following sub-cases: 

\begin{itemize}
\item If $0\in S_0$, then $(x_v=0)_{v\in\VV}$ forms a PSNE as clearly no player would deviate as $n_v=0$ for every player. So, allow us to assume that $S_0=\{1\}$.
\item If $1\in S_{n_1}$, then $(x_v=1)_{v\in\VV}$ forms a PSNE as clearly no player would deviate as $n_v=n_1$ for every player. So, allow us to assume that $S_{n_1}=\{0\}$. However, then $((x_{v}=0)_{v\in\VV_1}, (x_v=1)_{v\in\VV_2})$ forms a PSNE.  If $v\in \VV_2$ then $n_{v}=0$ and in this case one of the best responses is 1 and hence $v$ won't deviate. If $v\in \VV_1$ then $n_{v}=n_1$ and in this case one of the best responses is 0 and hence $v$ won't deviate.
\end{itemize}

\item \textbf{If $n_1\ne n_2$}: For this case we have the following sub-cases:

\begin{itemize}
\item If $0\in S_0$ then $x_v=0$ for every player $v$ in \VV forms a PSNE as clearly no player would deviate as $n_v=0$ for every player. So, allow us to assume that $S_0=\{1\}$ for the rest of the proof.

\item If $1\in S_{n_1}$, $1\in S_{n_2}$, then $(x_v=1)_{v\in\VV}$ forms a PSNE. If $v\in \VV_1$, then $n_{v}=n_2$ and in this case one of the best responses is $1$ and hence $v$ does not deviate. If $v\in \VV_2$, then $n_{v}=n_1$ and in this case one of the best responses is 1 and hence $v$ does not deviate.

\item  If $0\in S_{n_1},1\in S_{n_2} $, then $((x_v=1)_{:v\in \VV_1}, (x_v=0)_{v\in \VV_2})$ forms a PSNE. If $v\in \VV_1$ then $n_{v}=0$ and in this case one of the best responses is $1$ (recall $S_0=\{1\}$) and hence $v$ does not deviate. If $v\in \VV_2$ then $n_{v}=n_1$ and in this case one of the best responses is $0$ and hence $v$ does not deviate.

\item  If $0\in S_{n_1},0\in S_{n_2} $, then $((x_v=1)_{v\in \VV_1}, (x_v=0)_{v\in \VV_2})$ forms a PSNE. If $v\in \VV_1$ then $n_{v}=0$ and in this case one of the best responses is $1$ (recall $S_0=\{1\}$) and hence $v$ does not deviate. If $v\in \VV_2$ then $n_{v}=n_1$ and in this case one of the best responses is 0 and hence $v$ does not deviate.

\item  If $1\in S_{n_1},0\in S_{n_2} $, then $((x_v=0)_{:v\in \VV_1}, (x_v=1)_{v\in \VV_2})$ forms a PSNE. If $v\in \VV_1$ then $n_{v}=n_2$ and in this case one of the best responses is $0$ and hence $v$ does not deviate. If $v\in \VV_2$ then $n_{v}=0$ and in this case one of the best responses is $1$ (recall $S_0=\{1\}$) and hence $v$ does not deviate.
\end{itemize}
\end{enumerate}
Since we have a PSNE for every possible best-response functions, we conclude that there is always a PSNE in a fully homogeneous BNPG game on biclique. Also, we can find a PSNE in biclique in $\OO(n)$ time.
\end{proof}

}

\section{Conclusion and Future Work}

We have studied parameterized complexity of the \psne problem for the BNPG games with respect to various important graph parameters. We exhibited intractibility w.r.t. the parameters like maximum degree, diameter, treedepth, number of players playing $1$ and $0$. We complemented this by showing \FPT algorithms parameterized by circuit rank, treewidth+maximum degree, and the distance from complete graph. We also showed that PSNE always exists in a fully homogeneous BNPG game for paths, complete graphs, cycles and bi-cliques.

Our work leaves some important questions open. For example, can we show PPAD-Hardness for finding Nash Equilibrium in BNPG games. Another immediate research direction is to study if our algorithmic results could be extended to other types of more general public goods games. Another research direction could be to look at social welfare functions in the context of BNPG game. We can also consider BNPG games with altruism introduced in \cite{yu2021altruism} and try to resolve its parameterized complexity.
\bibliography{references}
\newpage
\appendix

\section{Missing Proofs}
\begin{proof}[Proof of \Cref{lem:best-util}]
Consider an arbitrary player $v$. Let the best response function of $v$ be $\beta_v$. We now define the utility function of $v$ as $U_{v}(x_v,n_v)=g_v(x_v+n_v)-c_v\cdot x_v$ where $c_v$ is a constant greater than 1 and $g_v(0)$ is a constant greater than 0. Now we define $g_v(.)$ recursively in the following way:\\
 $g_v(x)= \left\{ \begin{array}{rcl}
g_v(x-1)+c_v-1 & \mbox{if} & \beta_v(x-1)=\{0\}  \\ 
g_v(x-1)+c_v+1 & \mbox{if} & \beta_v(x-1)=\{1\}  \\ 
g_v(x-1)+c_v & \mbox{if} & \beta_v(x-1)=\{0,1\}  \\ 
\end{array}\right.$\\
In the above recursive definition, $x$ belongs to $[n-1]$. Let $x^\pr$ be an arbitrary number in $[n-1]$. From the recursive definition, we can conclude that $1\in\beta_v(x^\pr-1)$ iff $g_v(x^\pr)-c_v\geq g_v(x^\pr-1)$. Similarly, we can conclude that $0\in\beta_v(x^\pr-1)$ iff $g_v(x^\pr)-c_v\leq g_v(x^\pr-1)$. Since $x^\pr$ and $v$ were chosen arbitrarily, we can conclude that 
 for every player $w$, for every $k\in\{0,1,\ldots,n-1\}$ and for every $a\in\{0,1\}$, we have $a\in\beta_w(k)$ if and only if, for every strategy profile $x_{-w}$ of players other than $w$ where exactly $k$ players in the neighborhood of $w$ play $1$, we have $U_w(x_w=a,x_{-w}) \ge U_w(x_w=a^\pr,x_{-w})$ for all $a^\pr\in\{0,1\}$.
\end{proof}
\begin{proof}[Proof of \Cref{thm:k0}]
	Let $d(v)$ denote the degree of $v$ in \GG. To prove the result for the parameter $k_0$, we use the following best-response function.
	\[ \beta_v(k^\pr) = \begin{cases}
		0 & \text{if } k^\pr=d(v)\\
		\{0,1\} & \text{otherwise}
	\end{cases} \]
	
	We claim that the above BNPG game has a PSNE having at most $k$ players playing $0$ if and only if the \ds instance is a \yes instance.
	
	For the ``if'' part, suppose the \ds instance is a \yes instance and $\WW\subseteq\VV$ be a dominating set for \GG of size at most $k$. We claim that the strategy profile $\bar{x}=((x_v=0)_{v\in\WW}, (x_v=1)_{v\in\VV\setminus\WW})$ is a PSNE for the BNPG game. To see this, we observe that every player $w \in \VV\setminus\WW$ has at least $1$ neighbor playing $0$ , and thus she has no incentive to deviate as $n_w<d(v)$. On the other hand, since $0$ is always a best-response strategy for every player irrespective of what others play, the players in \WW also do not have any incentive to deviate. Hence, $\bar{x}$ is a PSNE.
	
	For the ``only if'' part, let $\bar{x}=((x_v=0)_{v\in\WW}, (x_v=1)_{v\in\VV\setminus\WW})$ be a PSNE for the BNPG game where $|\WW|\le k$ (that is, at most $k$ players are playing $0$). We claim that \WW forms a dominating set for \GG. Indeed this claim has to be correct, otherwise there exists a vertex $w\in\VV\setminus\WW$ which does not have any neighbor in \WW and consequently, the player $w$ has incentive to deviate to $0$ from $1$ as $n_w=d(v)$ which is a contradiction.
\end{proof}

\begin{proof}[Proof of \Cref{thm:k1}]
	Let $(\GG=(\VV,\EE),k)$ an arbitrary instance of \ds. We consider a fully homogeneous BNPG game on the same graph \GG whose best-response functions $\beta_v(\cdot)$ for $v\in\VV$ is given below:
	\[ \beta_v(k^\pr) = \begin{cases}
		1 & \text{if } k^\pr=0\\
		\{0,1\} & \text{otherwise}
	\end{cases} \]
	
	We claim that the above BNPG game has a PSNE having at most $k$ players playing $1$ if and only if the \ds instance is a \yes instance.
	
	For the ``if'' part, suppose the \ds instance is a \yes instance and $\WW\subseteq\VV$ be a dominating set for \GG of size at most $k$. We claim that the strategy profile $\bar{x}=((x_v=1)_{v\in\WW}, (x_v=0)_{v\in\VV\setminus\WW})$ is a PSNE for the BNPG game. To see this, we observe that every player in $\VV\setminus\WW$ has at least $1$ neighbor playing $1$ , and thus she has no incentive to deviate. On the other hand, since $1$ is always a best-response strategy for every player irrespective of what others play, the players in \WW also do not have any incentive to deviate. Hence, $\bar{x}$ is a PSNE.
	
	For the ``only if'' part, let $\bar{x}=((x_v=1)_{v\in\WW}, (x_v=0)_{v\in\VV\setminus\WW})$ be a PSNE for the BNPG game where $|\WW|\le k$ (that is, at most $k$ players are playing $1$). We claim that \WW forms a dominating set for \GG. Indeed, this claim has to be correct, otherwise there exists a vertex $w\in\VV\setminus\WW$ which does not have any neighbor in \WW and consequently, the player $w$ has incentive to deviate to $1$ from $0$ as $n_w=0$ which is a contradiction.
\end{proof}

\begin{proof}[Proof of \Cref{thm:fully-deg}]
	The high-level idea in this proof is the same as the proof of \Cref{thm:full-homo}. The only difference is that we add the special nodes in a way such that the maximum degree in the instance of fully homogeneous BNPG game is upper bounded by $9$.
	
	Formally, we consider an instance of a heterogeneous BNPG game on a graph $\GG=(\VV=\{v_i: i\in[n]\},\EE)$ such that there are only $2$ types of utility functions $U_1(x_v,n_v)=g_1(x_v+n_v)-c_1x_v, U_2(x_v,n_v)=g_2(x_v+n_v)-c_2x_v,$ and the degree of any vertex is at most $3$; we know from \Cref{thm:deg} that it is an \NPC instance. Let us partition $\VV$ into $\VV_1$ and $\VV_2$ such that the utility function of the players in $\VV_1$ is $U_1(\cdot)$ and the utility function of the players in $\VV_2$ is $U_2(\cdot)$. We now construct the graph $\HH=(\VV^\pr,\EE^\pr)$ for the instance of the fully homogeneous BNPG game. 
	\begin{align*}
		\VV^\pr &= \{w_i: i\in[n]\}\cup\WW_1\cup\WW_2, \text{where}\\
		\WW_i &= \{a_j^k: j\in[2+4(i-1)], v_k\in\VV_i\} \;\forall i\in[2]\\
		\EE^\pr &= \{\{w_i,w_j\}: \{v_i,v_j\}\in\EE\}\cup\EE_1\cup\EE_2, \text{where}\\
		\EE_i &= \{\{a_j^k,w_k\}: j\in[2+4(i-1)], v_k\in\VV_i\} \;\forall i\in[2]
	\end{align*}
	
	We define two functions --- $f(x)=\lfloor\frac{x-2}{4}\rfloor+1$ and $h(x)=x-2-4(f(x)-1)$. We now define best-response functions for the players in \HH.
	\[
	\beta(k)=\begin{cases}
		1 & \text{if } k=0 \text{ or } k=1\\
		\{0,1\} & \Delta g_{f(k)}(h(k))=c_{f(k)},k>1 \\
		1 & \Delta g_{f(k)}(h(k))>c_{f(k)},k>1 \\
		0 & \Delta g_{f(k)}(h(k))<c_{f(k)},k>1
	\end{cases}
	\]
	
	This finishes description of our fully homogeneous BNPG game on \HH. We now claim that there exists a PSNE in the heterogeneous BNPG game on \GG if and only if there exists a PSNE in the fully homogeneous BNPG game on \HH. We note that degree of any node in \HH is at most $9$.
	
	For the ``only if'' part, let $x^*=(x^*_v)_{v\in \VV}$ be a PSNE in the heterogeneous BNPG game on \GG. We now consider the following strategy profile $\bar{y}=(y_v)_{v\in\VV^\pr}$ for players in \HH.
	\[
	\forall i\in[n] y_{w_i} = x^*_{v_i}; y_b=1 \text{ for other vertices }b
	\]
	
	We now claim that the players in $\VV'$ also does not deviate. Clearly the players in $\cup_{i\in[2]} \WW_i$ do not deviate as their degree is $1$ and $\beta(0)=\beta(1)=1$. If $v_k\in\VV_i$, then in $\bar{y}$ we have $n_{w_k} = n_{v_k}+2+4(i-1)\ge 2$ for every $i\in[2]$ and $n_{v_k}\le 3$ as the maximum degree in \GG is $3$. If $x_{v_k}^*=1$, then we have $\Delta g_{i}(n_{v_k})\geq c_{i}$. We have $f(n_{w_k})=i$ and $h(n_{w_k})=n_{v_k}$. This implies that $\Delta g_{f(n_{w_k})}(h(n_{w_k}))\geq c_{f(n_{w_k})}$. So $w_k$ does not deviate as $1$ is the best-response. If $x_{v_k}^*=0$, then we have $\Delta g_{i}(n_{v_k})\leq c_{i}$. This implies that $\Delta g_{f(n_{w_k})}(h(n_{w_k}))\leq c_{f(n_{w_k})}$. So $w_k$ does not deviate as $0$ is the best-response. Hence, $\bar{y}$ is a PSNE.
	
	For the ``if'' part, suppose there exists a PSNE $(x^*_v)_{v\in \VV^\pr}$ in the fully homogeneous BNPG game on \HH. Clearly $x^*_w=1$ for all $w\in \cup_{i\in[2]} \WW_i$ as $n_w\leq 1$. Now we claim that the strategy profile $\bar{x}=(x_{v_k}=x_{w_k}^*)_{k\in[n]}$ forms a PSNE for the heterogeneous BNPG game on \GG. We observe that if $x_{w_k}^*=1$ and $v_k\in\VV_i$, then $\Delta g_{f(n_{w_k})}(h(n_{w_k}))\geq c_{f(n_{w_k})}$ for $k\in[n]$. This implies that $\Delta g_{i}(n_{v_k})\geq c_{i}$.  So $x_{v_k}=1$ is the best-response for $v_k\in\VV_i$ and hence, she does not deviate. Similarly, If $x_{w_k}^*=0$ and $v_k\in\VV_i$, then $\Delta g_{f(n_{w_k})}(h(n_{w_k}))\leq c_{f(n_{w_k})}$. This implies that $\Delta g_{i}(n_{v_k})\leq c_{i}$.  So $x_{v_k}=0$ is the best-response for $v_k\in \VV_i$ and hence, it won't deviate. Hence, $\bar{x}$ is a PSNE in the heterogeneous BNPG game on \GG.
\end{proof}

\begin{proof}[Proof of \Cref{feasible:D}]
	We use dynamic programming to solve this problem. Let $N(V)$ denote set of vertices in \VV which is adjacent to at least one vertex in $V$. Let $N[V]:= V\cup N(V)$. Let $N[V]=\{u_1,\ldots,u_l\}$ where $l$ is atmost $O(|V|\cdot \Delta)$. Let $c[(d_u)_{u\in V},i]$ denote whether there exists a strategy profile $S$ such that for each $u\in V$, number of neighbours of $u$ in the set $\{u_1,\ldots,u_i\}$ ($\phi$ if $i=0$) playing $1$ in the strategy profile $S$ is $d_u$. Let $f$ be a function such that $f(u)=d_u$ for all $u\in V$. Clearly, $c[(d_u)_{u\in V},l]$ indicates whether the function $f$ is feasible or not. Let $g:\VV\times \VV\rightarrow \{0,1\}$ be a function such that $g(\{u,v\})=1$ if and only if $\{u,v\}\in \EE$. We now present the recursive equation to compute $c[(d_u)_{u\in V},i]$:
	\[      
	c[(d_u)_{u\in V},i]=\begin{cases}
		0 \text{ if }\exists u,d_u<0\\
		c[(d_{u}- g(\{u,u_i\}))_{u\in V},i-1] \vee c[(d_u)_{u\in V},i-1] \text{ if } i\geq1 \\
		1  \text{ if $i=0$ and }\forall u\in V,d_u=0 \\
		0 \text{ otherwise }
	\end{cases}
	\]
	
	Now we argue for the correctness of the above recursive equation. Base cases are trivial as it follows from the definitions. We now look at the case where $i\geq 1$ and $\forall u\in V, d_u\geq 0$. If $c[(d_{u}- g(\{u,u_i\}))_{u\in V},i-1]=1$, then consider the strategy profile $S$ which makes it $1$. Now consider a strategy profile where the response of $u_i$ is $1$ and rest of the players play as per $S$. So for any $u\in V$, the number of neighbours in $\{u_1,\ldots,u_i\}$ playing $1$ increases by $g(\{u,u_i\})$ when compared to the number of neighbours in $\{u_1,\ldots,u_{i-1}\}$ playing $1$. This would imply that $c[(d_u)_{u\in V},i]=1$. Similarly, if $ c[(d_u)_{u\in V},i-1]=1$ ,  then considering the response of $u_i$ as $0$ will not change the number of neighbours playing $1$ and therefore $c[(d_u)_{u\in V},i]=1$. In the other direction, if $c[(d_u)_{u\in V},i]=1$, then consider the strategy profile $S$ which makes it $1$. If the response of $u_i$ is $1$ (resp. $0$) in $S$, then the number of neighbours of $u$ in $\{u_1,\ldots,u_{i-1}\}$ playing $1$ decreases by $g(\{u,u_i\}$ (resp. $0$) when compared with the number of neighbours in $\{u_1,\ldots,u_{i}\}$ playing $1$. Hence, $c[(d_{u}- g(\{u,u_i\}))_{u\in V},i-1] \vee c[(d_u)_{u\in V},i-1]$ is equal to $1$.
	
	Now we look at the time complexity. Total number of cells in the dynamic programming table which we created is $\OO^*(\Delta^{|V|})$ as the value of each entry in $(d_u)_{u\in V}$ is at most $\Delta$. Time spent in each cell is $n^{O(1)}$. Hence, the set of all feasible functions $f:V\rightarrow [\Delta]\cup\{0\}$ can be computed in time $\OO^*(\Delta^{|V|})$.
\end{proof}

\begin{proof}[Proof of \Cref{thm:XP}]
	Let $(\GG=(\VV,\EE),(g_v)_{v\in\VV},(c_v)_{v\in\VV})$ be any instance of \psne for BNPG games. Let $(\beta_v(.))_{v\in\VV}$ be the set of the best response functions. Let $\TT = (T, \{X_t \}_{t\in V (T )} )$ be a nice tree decomposition of the input
	$n$-vertex graph $\GG$ that has width at most $k$. Let \TT be rooted at some node $r$. For a node $t$ of \TT , let $V_t$ be the union of all the bags present in the subtree of \TT rooted at $t$, including $X_t$. We solve the \psne problem using dynamic programming. Let $N_1(X_t)$ denote set of vertices in $\VV\setminus V_t$ which is adjacent to at least one vertex in $X_t$. Let $N_2(X_t)$ denote set of vertices in $V_t\setminus X_t$ which is adjacent to at least one vertex in $X_t$.Let $c[t,(x_v)_{v\in{X_t}},(d_v^1)_{v\in{X_t}},(d_v^2)_{v\in{X_t}}] =1$ (resp. 0) denote that there exists (resp. doesn't exist) a strategy profile $S$ of all the players in \GG such that for each $u\in X_t$, $u$ plays $x_u$, number of neighbours of $u$ in $N_1(X_t)$ (resp. $N_2(X_t)$) playing $1$ is $d_u^1$ (resp. $d_u^2$) and none of the vertices in $V_t$ deviate in the strategy profile $S$. Before we proceed, we would like to introduce some notations. Let $V$ be a set of vertices and $S_1=(x_v)_{v\in V}$, $S_2=(x_v)_{v\in V\setminus\{w\}}$ be two tuples. Then $S_1\setminus\{x_w\}:=S_2$ and $S_2\cup\{x_w\}:=S_1$. Also, we denote an empty tuple by $\phi$. Clearly $c[r,\phi,\phi,\phi]$ indicates whether there is a PSNE in \GG or not. We now present the recursive equation to compute $c[t,(x_v)_{v\in{X_t}},(d_v^1)_{v\in{X_t}},(d_v^2)_{v\in{X_t}}]$ for various types of node in $\TT$.
	
	\textbf{Leaf Node:} For a leaf node $t$ we have that $X_t = \phi$. Hence, $c[t,\phi,\phi,\phi]=1$.
	
	\textbf{Join Node:} For a join node $t$, let $t_1,t_2$ be its two children. Note that $X_t=X_{t_1}=X_{t_2}$. Now we proceed to compute $c[t,(x_v)_{v\in{X_t}},(d_v^1)_{v\in{X_t}},(d_v^2)_{v\in{X_t}}]$. 
	Let $\mathcal{F}$ be a set of tuples $(d_v')_{v\in X_t}$ such that there is a strategy profile $S$ such that for each $v\in X_t$, its response is $x_v$, the number of neighbours in $N_1(x)$, $V_{t_1}\setminus X_{t_1}$ and $V_{t_2}\setminus X_{t_2}$ playing $1$ is $d_v^1,d_v',d_v^2-d_v'$ respectively. Using \Cref{feasible:D} we can find the set $\mathcal{F}$ in time $\OO^*(\Delta^k)$.Then $c[t,(x_v)_{v\in{X_t}},(d_v^1)_{v\in{X_t}},(d_v^2)_{v\in{X_t}}]$ is equal to the following formula:
\begin{align*}
&0\vee\bigvee_{(d_v')_{v\in X_t}\in \mathcal{F}}\big(c[t_1,(x_v)_{v\in{X_t}},(d_v^1+d_v^2-d_v')_{v\in{X_t}},(d_v')_{v\in{X_t}}]\\
&\wedge c[t_2,(x_v)_{v\in{X_t}},(d_v^1+d_v')_{v\in{X_t}},(d_v^2-d_v')_{v\in{X_t}}]\big)
\end{align*}

	Now we argue for the correctness of the above recursive equation. If $\mathcal{F}$ is empty then the above equation trivially holds true. Hence, we assume that $\mathcal{F}$ is non-empty. In one direction let us assume that there exists a tuple $(d_v')_{v\in X_t}\in \mathcal{F}$ such that $(c[t_1,(x_v)_{v\in{X_t}},(d_v^1+d_v^2-d_v')_{v\in{X_t}},(d_v')_{v\in{X_t}}]\wedge c[t_2,(x_v)_{v\in{X_t}},(d_v^1+d_v')_{v\in{X_t}},(d_v^2$$-d_v')_{v\in{X_t}}])$=1. Let $S_1$ and $S_2$ be the strategy profiles which leads to $c[t_1,(x_v)_{v\in{X_t}},(d_v^1+d_v^2-d_v')_{v\in{X_t}},(d_v')_{v\in{X_t}}]$ and $c[t_2,(x_v)_{v\in{X_t}},(d_v^1+d_v')_{v\in{X_t}},(d_v^2$$-d_v')_{v\in{X_t}}]$ respectively being $1$. Let $S_3$ be a strategy profile that leads to $(d_v')_{v\in X_t}$ being included in $\mathcal{F}$. Now consider a strategy profile $S$ such that responses of players in $V_{t_1}\setminus X_{t}$ is their responses in $S_1$, responses of the players in $V_{t_2}\setminus X_{t}$ is their responses in $S_2$ and responses of rest of the players is their responses in $S_3$. Now observe that there are no edges between the set of vertices $V_{t_1}\setminus X_t$ and $V_{t_2}\setminus X_t$ and there are no edges between the set of vertices $\VV\setminus V_t$ and $V_{t}\setminus X_t$. Therefore, the number of neighbours of vertices in $V_{t_1}\setminus X_{t_1}$ and $V_{t_2}\setminus X_{t_2}$ playing $1$ doesn't change when compared to $S_1$ and $S_2$ respectively. Also, the number of neighbours of vertices in $X_t$ playing $1$ doesn't change when compared to $S_1$, $S_2$ and $S_3$. Hence, $c[t,(x_v)_{v\in{X_t}},(d_v^1)_{v\in{X_t}},(d_v^2)_{v\in{X_t}}]=1$. In other direction, let $c[t,(x_v)_{v\in{X_t}},(d_v^1)_{v\in{X_t}},(d_v^2)_{v\in{X_t}}]=1$. Let $S'$ be the strategy profile which leads to $c[t,(x_v)_{v\in{X_t}},(d_v^1)_{v\in{X_t}},(d_v^2)_{v\in{X_t}}]$ being $1$. For each $v\in X_t$, let the number of neighbours in $V_{t_1}\setminus X_t$ playing $1$ in $S'$ be $d_v'$. Then clearly $S'$ leads to both $c[t_1,(x_v)_{v\in{X_t}},(d_v^1+d_v^2-d_v')_{v\in{X_t}},(d_v')_{v\in{X_t}}]$ and $c[t_2,(x_v)_{v\in{X_t}},(d_v^1+d_v')_{v\in{X_t}},(d_v^2$$-d_v')_{v\in{X_t}}]$ being $1$.
	
	\textbf{Introduce Node:} Let $t$ be an introduce node with a child $t'$ such that $X_t = X_{t'} \cup \{u\}$ for some $u \notin X_{t'}$. Let $S'=(x_v)_{v\in{X_t}}$ be a strategy profile of vertices in $X_t$. Let $n_v'$ denote the number of neighbours of $v$ playing $1$ in $S'$.  Let $g:\VV\times \VV\rightarrow \{0,1\}$ be a function such that $g(\{u,v\})=1$ if and only if $\{u,v\}\in \EE$.  We now proceed to compute $c[t,S',(d_v^1)_{v\in{X_t}},(d_v^2)_{v\in{X_t}}]$. If there is no strategy profile $S$ where $\forall v\in X_t$,the number of neighbours of $v$ in $N_1(X_t)$ (resp. $N_2(X_t)$) playing $1$ is $d_v^1$ (resp. $d_v^2$) , then clearly $c[t,S',(d_v^1)_{v\in{X_t}},(d_v^2)_{v\in{X_t}}]=0$. Due to \Cref{feasible:D}, we can check the previous statement in $\OO^*(\Delta^{k})$ by considering a bipartite subgraph of \GG between $X_t$ and $N_1(X_t)$ (or $N_2(X_t)$).  Otherwise, we have the following:

	\[	
	c[t,S',(d_v^1)_{v\in{X_t}},(d_v^2)_{v\in{X_t}}]=
	\begin{cases}
		0 \text{ if }\exists v\in X_t, x_v\notin\beta_v(n_v'+d_v^1+d_v^2)\\
		c[t',S'\setminus \{x_u\},(d_v^1+g(\{v,u\}))_{v\in{X_{t'}}},(d_v^2)_{v\in{X_{t'}}}] \text{ if }x_u=1\\
		c[t',S'\setminus \{x_u\},(d_v^1)_{v\in{X_{t'}}},(d_v^2)_{v\in{X_{t'}}}] \text{ otherwise }
	\end{cases}
	\]
	Now we argue for the correctness of the above recursive equation. The base case is trivial. Now if $x_u=1$ and $c[t',S'\setminus \{x_u\},(d_v^1+g(\{v,u\}))_{v\in{X_{t'}}},(d_v^2)_{v\in{X_{t'}}}]=1$, then let $S_1$ be the strategy profile that makes the cell value to be $1$.
	Let $S_2$ be a strategy such that $\forall v\in X_t$, the number of neighbours of $v$ in $N_1(X_t)$ (resp. $N_2(X_t)$) playing $1$ is $d_v^1$ (resp. $d_v^2$). Let $S_3$ be a strategy profile where the responses of players in $X_t$ is $(x_v)_{v\in X_t}$, the responses of players in $V_t\setminus X_t$ is their responses in $S_1$ and the responses of the rest of the players is their responses in $S_2$. Now observe that there is no edge between the set of vertices $\VV\setminus V_{t'}$ and $V_{t'}\setminus X_{t'}$ (due to this the only valid value of $d_u^2$ is 0 which is ensured by us). Therefore, none of the vertices deviate in $V_{t'}\setminus X_{t'}$ deviate. Also, the number of neighbours of the vertices in $X_{t'}$ playing $1$ in $S_3$ is the same as that of $S_1$. Hence, the vertices in $X_{t'}$ don't deviate. Vertex $u$ doesn't deviate as we are not in the base case and therefore $x_u\in\beta_v(n_u'+d_u^1+d_u^2)$. Hence, $c[t,S',(d_v^1)_{v\in{X_t}},(d_v^2)_{v\in{X_t}}]=1$. By similar arguments we can show that if $x_u=0$ and $c[t',S'\setminus \{x_u\},(d_v^1))_{v\in{X_{t'}}},(d_v^2)_{v\in{X_{t'}}}]=1$ then $c[t,S',(d_v^1)_{v\in{X_t}},(d_v^2)_{v\in{X_t}}]=1$. In the other direction, if $c[t,S',(d_v^1)_{v\in{X_t}},(d_v^2)_{v\in{X_t}}]=1$, then let $S_4$ be the strategy profile which makes the value of this cell $1$. If $x_u=1$ in $S_4$, then for each $v\in X_{t'}$, the number of neighbours of $X_{t'}$ in $N_1(X_{t'})$ (resp. $N_2(X_{t'}))$ playing $1$ in $S_4$ is $d_v^1+g(\{v,u\})$ (resp. $d_v^2$). Since none of the vertices in $V_t$ deviate, so clearly $S_4$ leads $c[t',S'\setminus \{x_u\},(d_v^1+g(\{v,u\}))_{v\in{X_{t'}}},(d_v^2)_{v\in{X_{t'}}}]$ to being $1$ . Similar argument holds when $x_u=0$ in $S_4$ , and it would lead $c[t',S'\setminus \{x_u\},(d_v^1)_{v\in{X_{t'}}},(d_v^2)_{v\in{X_{t'}}}]$ to being $1$.
	
	\textbf{Forget Node:} Let $t$ be a forget node with a child $t'$ such that $X_t$ = $X_t' \setminus \{w\}$ for some $w \in X_{t'}$. Let $S_0=(x_v)_{v\in{X_t}}\cup \{x_w=0\},$ $S_1=(x_v)_{v\in{X_t}}\cup \{x_w=1\}$ be two strategy profiles of vertices in $X_t'$. Let $g:\VV\times \VV\rightarrow \{0,1\}$ be a function such that $g(\{u,v\})=1$ if and only if $\{u,v\}\in \EE$. We now have the following:
	\begin{align*}
c[t,(x_v)_{v\in X_t},(d_v^1)_{v\in{X_t}},(d_v^2)_{v\in{X_t}}]=&\bigvee_{d_w^1,d_w^2:0\leq d_w^1,d_w^2\leq \Delta}\big(c[t',S_0,(d_v^1)_{v\in{X_{t'}}},(d_v^2)_{v\in{X_{t'}}}]\\
&\vee c[t',S_1,(d_v^1)_{v\in{X_{t'}}},(d_v^2-g(\{v,w\}))_{v\in{X_{t'}}}\big)   	
\end{align*}

	Now we argue for the correctness of the above recursive equation. In one direction, let us assume that $\exists d_w^1,d_w^2\in [\Delta]\cup \{0\}$ such that $(c[t',S_0,(d_v^1)_{v\in{X_{t'}}},(d_v^2)_{v\in{X_{t'}}}]\vee c[t',S_1,(d_v^1)_{v\in{X_{t'}}},(d_v^2-g(\{v,w\}))_{v\in{X_{t'}}}])=1$. Let $S$ be the strategy profile that made the formula in the previous statement to be true. Now observe that for each $v\in X_t$, the number of neighbours in $N_1(X_t)$ and $N_2(X_t)$ is $d^1_v$ and $d^2_v$ respectively. Since none of the vertices in the set $V_t=V_{t'}$ deviate, therefore $c[t,(x_v)_{v\in{N[X_t]}},(d_v^1)_{v\in{X_t}},(d_v^2)_{v\in{X_t}}]=1$. In other direction, let us assume that $c[t,(x_v)_{v\in{N[X_t]}},(d_v^1)_{v\in{X_t}},(d_v^2)_{v\in{X_t}}]=1$. Let $S'$ be the strategy profile that made the formula in the previous statement to be true. Let $d_w^1$ and $d_w^2$ be the number of neighbours of $w$ in $N_1(X_{t'})$ and $N_2(X_{t'})$ respectively playing $1$ in the profile $S'$. For each $v\in X_{t}$, if $x_w=1$ (resp. $x_w=0$) then the number of neighbours in $N_2(X_{t'})$ playing $1$ in $S'$ is $d_v^2-g(\{v,w\}$ (resp. $d_v^2$).  Similarly, for each $v\in X_{t}$, the number of neighbours in $N_1(X_{t'})$ playing $1$ in $S'$ is $d_v^1$. Since none of the vertices in $V_t=V_{t'}$ deviate in $S'$, therefore   $(c[t',S_0,(d_v^1)_{v\in{X_{t'}}},(d_v^2)_{v\in{X_{t'}}}]\vee c[t',S_1,(d_v^1)_{v\in{X_{t'}}},(d_v^2-g(\{v,w\}))_{v\in{X_{t'}}}])=1$.
	
	Now we look at the time complexity. Total number of cells in the dynamic programming table which we created is $\OO^*(\Delta^{O(k)})$. For each cell, we spend at most $\OO^*(\Delta^{O(k)})$ time if we are computing the table in a bottom up fashion. Hence, the running time is $\OO^*(\Delta^{O(k)})$. 
\end{proof}

\begin{proof}[Proof of \Cref{vertex-cover1}]
	Let $(\GG=(\VV,\EE),(g_v)_{v\in\VV},(c_v)_{v\in\VV})$ be any instance of \psne for BNPG games. We compute a minimum vertex cover $\SS\subset\VV$ in time $\OO^*(2^{\vc})$~\cite{CyganFKLMPPS15}. The idea is to brute force on the strategy profile of players in \SS and assign actions of other players based on their best-response functions. For every strategy profile $x_\SS=(x_v)_{v\in\SS}$, we do the following.
	
	\begin{enumerate}
		\item For $w\in\VV\setminus\SS$, let $n_w$ be the number of neighbors of $w$ (they can only be in \SS) who play $1$. We define $x_w = 1$ if  $\Delta g_{w}(n_w)> c_{w}$ and $0$ if $\Delta g_{w}(n_w)< c_{w}$. This is well-defined since $\Delta g_{w}(n_w)\ne c_{w}$ as the game is strict.
		\item If $(x_v)_{v\in \VV}$ forms a PSNE, then output \yes. Otherwise, we discard the current $x_\SS$.
	\end{enumerate}
	If the above the procedure does not output \yes for any $x_\SS$, then we output \no. The correctness of the algorithm is immediate. Since the computation for every guess of $x_\SS$ can be done in polynomial time and the number of such guesses is $2^\SS=2^\vc$, it follows that the running time of our algorithm is $\OO^*(2^\vc)$.
\end{proof}

\begin{proof}[Proof of \Cref{obs:add}]
	$\forall x\geq 0,\forall i\in[n]$, $g_{v_i}(x+1)-g_{v_i}(x)=g_{v_i}(1)$. This implies for a player $v_i$, the best response doesn't depend on the responses of its neighbours and solely depends on $g_{v_i}(1)$. Hence, if $g_{v_i}(1)\geq c_{v_i}$ then we assign the response of player $v_i$ as 1 and 0 otherwise. This will make sure that no player $v_i$ deviates. So calculating the PSNE takes $\OO(n)$
\end{proof}

\begin{proof}[Proof of \Cref{thm:rest}]
	\textbf{Proof for Path:} Let the set of vertices in the input path \PP be $\VV=\{v_1,\ldots,v_n\}$ and the set of edges $\EE=\{\{v_i,v_{i+1}\}:i\in[n-1]\}$. Note that the possible values of $n_v$ for any vertex $v$ in \PP is $0, 1$ and $2$. We show that there is a PSNE in \PP for all possible best response strategies. Let $S_i:=\beta(i)$ (since the game is fully homogeneous, the best-response function is the same for all players) be the set of the best responses of a player $v$ if $n_v=i$. Let $x_v$ be the response of a player $v\in \VV$. 
	
	\begin{itemize}
		
		\item If $0\in S_0$, then $(x_v=0)_{v\in\VV}$ forms a PSNE as clearly no player would deviate as $n_v=0$ for every player. So, allow us to assume for the rest of the proof that $S_0=\{1\}$.
		
		\item If $1 \in S_i, \forall i\in\{1,2\}$, then $(x_v=1)_{v\in\VV}$ forms a PSNE as the best response is 1 irrespective of the value of $n_v$ and hence no player would deviate. So, allow us to assume that we have either $1\notin S_1$ or $1\notin S_2$.
		
		\item If $0 \in S_1, 0 \in S_2$, then $((x_{v_i}=0)_{i \equiv1\pmod 2}, (x_{v_i}=1)_{i \equiv0\pmod 2})$ forms a PSNE.  If $i$ is an odd integer then $n_{v_i}>0$ and in this case one of the best responses is 0 and hence $v_i$ does not deviate.  If $i$ is an even integer then $n_{v_i}=0$ and in this case one of the best responses is $1$ (recall $S_0=\{1\}$) and hence $v_i$ does not deviate.
		
		\item If $1 \in S_1, 0 \in S_2$, then $((x_{v_i}=1)_{i \equiv1\pmod 2, i\ne n}, (x_{v_i}=0)_{i \equiv0\pmod 2,i\ne n}, x_{v_n}=1)$ forms a PSNE. If $i$ is an odd integer and not equal to $n$, then $n_{v_i}\leq 1$ and in this case one of the best responses  is 1 and hence $v_i$ does not deviate. If $i$ is an even number and not equal to $n$ then $n_{v_i}=2$ and in this case one of the best responses  is 0 and hence $v_i$ does not deviate. Note that $n_{v_n}\leq 1$ and hence in this case one of the best responses  is 1 and hence $v_n$ does not deviate.
		\item If $0 \in S_1, 1 \in S_2$. In this we have two sub-cases:
		\begin{itemize}
			\item \textbf{n is a multiple of $3$}: In this sub-case, $((x_{v_i}=0)_{i \not\equiv 2\pmod 3}, (x_{v_i}=1)_{i \equiv2\pmod 3})$ forms a PSNE. If we have $i \not\equiv 2 \pmod 3$, then $n_{v_i}=1$ and in this case, $0$ is a best response and hence $v_i$ does not deviate. If $i\equiv2\pmod 3$, then $n_{v_i}=0$ and in this case one of the best responses is $1$ and hence $v_i$ does not deviate. 
			\item \textbf{n is not a multiple of 3}: In this sub-case, $((x_{v_i}=0)_{i \not\equiv 1\pmod 3}, (x_{v_i}=1)_{i \equiv1\pmod 3})$ forms a PSNE.  If $i \not\equiv 1\pmod 3$, then $n_{v_i}=1$ and in this case one of the best responses is $0$ and hence $v_i$ does not deviate. If $i \equiv 1\pmod 3$, then $n_{v_i}=0$ and in this case one of the best responses is 1 and hence $v_i$ does not deviate. 
		\end{itemize}
	\end{itemize} 
	Since we have a PSNE for every possible best-response function, we conclude that there is always a PSNE in a fully homogeneous BNPG game on paths. Also, we can find a PSNE in paths in $\OO(n)$ time.
	
	\textbf{Proof for Complete graph:} We assume that the input graph \GG(\VV,\EE) is a complete graph.Let the utility function for all the players $v\in \VV$ be $U(x_v,n_v)=g(x_v+n_v)-c\cdot x_v$. If $\Delta g(n-1)\geq c$, then $(x_v=1)_{v\in \VV}$ is a PSNE. If $\Delta g(0)\leq c$, then $(x_v=0)_{v\in \VV}$ is a PSNE. If $\Delta g(n-1)< c$ and $\Delta g(0)> c$, then there should exist a $0<k\leq n-1$ such that $\Delta g(k)\leq c$ and $\Delta g(k-1)\geq c$ otherwise both $\Delta g(n-1)< c$ and $\Delta g(0)> c$ can't simultaneously hold true. Now we claim that if there exists a $0<k\leq n-1$ such that $\Delta g(k)\leq c$ and $\Delta g(k-1)\geq c$, then choosing any $k$ players and making their response 1 and rest of players response as $0$ would be PSNE. Any player $w$ whose response is $1$ has $n_{w}=k-1$ and since  $\Delta g(k-1)\geq c$, $w$ does not have any incentive to deviate.  Similarly, any player $w^\pr$ whose response is $0$ has $n_{w^\pr}=k$ and since  $\Delta g(k)\leq c$, $w^\pr$ does not have any incentive to deviate. This concludes the proof of the theorem as we showed that there is a PSNE in all possible cases. Also, clearly we can find a PSNE in \GG in $\OO(n)$ time.
	
	\textbf{Proof for Cycles:} We assume that the input graph is a Cycle. Let the set of vertices in the input cycle \CC be $\VV=\{v_1,\ldots,v_n\}$ and the set of edges $\EE=\{\{v_i,v_{i+1}\}:i\in[n-1]\}\cup \{v_n,v_1\}$. Note that the possible values of $n_v$(number of neighbours of $v$ choosing 1) for any vertex $v \in \CC$ is $0, 1,$ and $2$. We show that there is a PSNE in \CC for all possible best response strategies. Let $S_i:=\beta(i)$ (since the game is fully homogeneous, the best-response function is the same for all players) denote the set of the best responses of a player $v$ if $n_v=i$. Let $x_v$ denote the response of a player $v\in \VV$. 
	\begin{itemize}
		\item If $0\in S_0$ then $x_v=0$ for every player $v \in \VV$ forms a PSNE as clearly no player would deviate as $n_v=0$ for every player. So, allow us to assume that $S_0=\{1\}$ in the rest of the proof.
		\item If $1 \in S_2$, then $x_v=1$ for every player $v$ in \VV forms a PSNE as clearly no player would deviate as $n_v=2$ for every player. So, allow us to assume that $S_2=\{0\}$ in the rest of the proof.
		\item If $0 \in S_1$, then $((x_{v_i}=0)_{i \equiv1\pmod 2},(x_{v_i}=1)_{i \equiv0\pmod 2})$ forms a PSNE.  If $i$ is odd then $n_{v_i}>0$ and in this case one of the best responses is $0$ and hence $v_i$ does not deviate.  If $i$ is even number then $n_{v_i}=0$ and in this case one of the best responses is $1$ (recall, we have $S_0=\{1\}$) and hence $v_i$ does not deviate.
		\item If $1\in S_1$, then $((x_{v_i}=1)_{i \equiv1\pmod 2}, (x_{v_i}=0)_{i \equiv0\pmod 2})$ forms a PSNE. If $i$ is an odd number then $n_{v_i}\leq 1$ and in this case one of the best responses  is $1$ and hence $v_i$ does not deviate. If $i$ is an even number and then $n_{v_i}=2$ and in this case one of the best responses  is 0 and hence $v_i$ does not deviate.
	\end{itemize} 
	Since we have a PSNE for every possible best-response functions, we conclude that there is always a PSNE in a fully homogeneous BNPG game on cycles. Also, we can find a PSNE in cycles in $\OO(n)$ time.
	
	\textbf{Proof for Bicliques:} Let the input graph $\GG=(\VV,\EE)$ be a biclique; \VV is partitioned into $2$ sets namely $\VV_1=\{u_1,\ldots,u_{n_1}\}$ and $\VV_2=\{v_1,\ldots,v_{n_2}\}$ where $n_1+n_2=n$ and $\EE=\{(u_i,v_j):i\in[n_1],j\in[n_2]\}$. We show that there is a PSNE in \CC for all possible best response strategies. Let $S_i:=\beta(i)$ (since the game is fully homogeneous, the best-response function is the same for all players) denote the set of the best responses of a player $v$ if $n_v=i$. Let $x_v$ denote the response of a player $v\in \VV$. 
	\begin{enumerate}
		
		\item \textbf{If $n_1=n_2$}: For this case we have the following sub-cases: 
		
		\begin{itemize}
			\item If $0\in S_0$, then $(x_v=0)_{v\in\VV}$ forms a PSNE as clearly no player would deviate as $n_v=0$ for every player. So, allow us to assume that $S_0=\{1\}$.
			\item If $1\in S_{n_1}$, then $(x_v=1)_{v\in\VV}$ forms a PSNE as clearly no player would deviate as $n_v=n_1$ for every player. So, allow us to assume that $S_{n_1}=\{0\}$. However, then $((x_{v}=0)_{v\in\VV_1}, (x_v=1)_{v\in\VV_2})$ forms a PSNE.  If $v\in \VV_2$ then $n_{v}=0$ and in this case one of the best responses is 1 and hence $v$ won't deviate. If $v\in \VV_1$ then $n_{v}=n_1$ and in this case one of the best responses is 0 and hence $v$ won't deviate.
		\end{itemize}
		
		\item \textbf{If $n_1\ne n_2$}: For this case we have the following sub-cases:
		
		\begin{itemize}
			\item If $0\in S_0$ then $x_v=0$ for every player $v$ in \VV forms a PSNE as clearly no player would deviate as $n_v=0$ for every player. So, allow us to assume that $S_0=\{1\}$ for the rest of the proof.
			
			\item If $1\in S_{n_1}$, $1\in S_{n_2}$, then $(x_v=1)_{v\in\VV}$ forms a PSNE. If $v\in \VV_1$, then $n_{v}=n_2$ and in this case one of the best responses is $1$ and hence $v$ does not deviate. If $v\in \VV_2$, then $n_{v}=n_1$ and in this case one of the best responses is 1 and hence $v$ does not deviate.
			
			\item  If $0\in S_{n_1},1\in S_{n_2} $, then $((x_v=1)_{:v\in \VV_1}, (x_v=0)_{v\in \VV_2})$ forms a PSNE. If $v\in \VV_1$ then $n_{v}=0$ and in this case one of the best responses is $1$ (recall $S_0=\{1\}$) and hence $v$ does not deviate. If $v\in \VV_2$ then $n_{v}=n_1$ and in this case one of the best responses is $0$ and hence $v$ does not deviate.
			
			\item  If $0\in S_{n_1},0\in S_{n_2} $, then $((x_v=1)_{v\in \VV_1}, (x_v=0)_{v\in \VV_2})$ forms a PSNE. If $v\in \VV_1$ then $n_{v}=0$ and in this case one of the best responses is $1$ (recall $S_0=\{1\}$) and hence $v$ does not deviate. If $v\in \VV_2$ then $n_{v}=n_1$ and in this case one of the best responses is 0 and hence $v$ does not deviate.
			
			\item  If $1\in S_{n_1},0\in S_{n_2} $, then $((x_v=0)_{:v\in \VV_1}, (x_v=1)_{v\in \VV_2})$ forms a PSNE. If $v\in \VV_1$ then $n_{v}=n_2$ and in this case one of the best responses is $0$ and hence $v$ does not deviate. If $v\in \VV_2$ then $n_{v}=0$ and in this case one of the best responses is $1$ (recall $S_0=\{1\}$) and hence $v$ does not deviate.
		\end{itemize}
	\end{enumerate}
	Since we have a PSNE for every possible best-response functions, we conclude that there is always a PSNE in a fully homogeneous BNPG game on biclique. Also, we can find a PSNE in biclique in $\OO(n)$ time.
\end{proof}
\end{document}